%
%
%
%

\documentclass[11pt]{article}

\usepackage{amsfonts}

\oddsidemargin=-0.1in \evensidemargin=-0.1in \topmargin=-.5in
\textheight=9in \textwidth=6.5in
\parindent=18pt

\usepackage{multirow}
\usepackage{amsmath, dsfont, amscd, latexsym,  color, amsthm}
\usepackage{amssymb}
\usepackage{ifthen}
\usepackage{enumerate}

\usepackage{url}


\newtheorem{theorem}{Theorem}

\newtheorem{lemma}[theorem]{Lemma}

\newtheorem{proposition}[theorem]{Proposition}
\newtheorem{corollary}[theorem]{Corollary}

\newtheorem{myclaim}[theorem]{Claim}
\newtheorem{fact}[theorem]{Fact}

\newtheorem{definition}[theorem]{Definition}

\newcommand{\braket}[1]{\langle #1 \rangle}

\newcommand{\weight}{{\mathrm{wt}}}
\newcommand{\dshap}{{d_{\mathrm{Shapley}}}}
\newcommand{\dfour}{{d_{\mathrm{Fourier}}}}

\newcommand{\eps}{\epsilon}

\newcommand{\sgn}{\mathrm{sign}}
\newcommand{\sign}{\mathrm{sign}}
\newcommand{\mytextsf}[1]{{{\textsf {#1}}}}

\newcommand{\ignore}[1]{}

\newcommand{\cref}[1]{Corollary~\ref{cor:#1}}

\newcommand{\bits}{\{-1,1\}}
\newcommand{\bn}{\bits^n}

\newcommand{\R}{{\mathbb{R}}}

\newcommand{\Z}{{\mathbb Z}}
\newcommand{\E}{\operatorname{{\bf E}}}

\newcommand{\littlesum}{\mathop{\textstyle \sum}}

\newcommand{\poly}{\mathrm{poly}}

\newcommand{\eqdef}{\stackrel{\textrm{def}}{=}}

\newcommand{\xop}{x(\pi,i)}

\newcommand{\xip}{x^+(\pi,i)}

\newcommand{\mfed}{ \mathbf{E}_{x \sim \mu} }

\renewcommand{\Pr}{\operatorname{{\bf Pr}}}

\newcommand{\ms}{\mathbb{S}_n}

%

\newcommand{\rnote}[1]{\footnote{{\bf [[Rocco: {#1}\bf ]] }}}

\newcommand{\new}[1]{{{#1}}}
\newcommand{\newer}[1]{{{#1}}}


\begin{document}


\title{The Inverse Shapley Value Problem\footnote{An extended abstract of this work appeared in the {\em Proceedings of the 39th International Colloquium on Automata, Languages and Programming (ICALP 2012).}}}

\title{The Inverse Shapley Value Problem\footnote{An extended abstract of this work appeared in the {\em Proceedings of the 39th International Colloquium on Automata, Languages and Programming (ICALP 2012).}}}

\author{Anindya De\thanks{{\tt anindya@cs.berkeley.edu}.  Research supported by NSF award CCF-0915929, CCF-1017403 and CCF-1118083.}\\
University of California, Berkeley\\
\and
Ilias Diakonikolas\thanks{{\tt ilias.d@ed.ac.uk}.  This work was done while the author was at UC Berkeley supported by a Simons Postdoctoral Fellowship.}\\
University of Edinburgh\\
\and Rocco A.\ Servedio\thanks{{\tt rocco@cs.columbia.edu}. Supported by NSF grants CNS-0716245, CCF-0915929, and CCF-1115703.}\\
Columbia University\\
}

\maketitle

\begin{abstract}

For $f$ a weighted voting scheme used by $n$ voters to choose between two
candidates, the $n$ \emph{Shapley-Shubik Indices} (or {\em Shapley values})
of $f$ provide a measure of how
much control each voter can exert over the overall outcome of the vote.
Shapley-Shubik indices were introduced by Lloyd Shapley
and Martin Shubik in 1954 \cite{SS54} and are widely studied in
social choice theory as a measure of the ``influence'' of voters.
The \emph{Inverse Shapley Value Problem} is the problem of designing a weighted
voting scheme which (approximately) achieves a desired input vector of
values for the Shapley-Shubik indices.  Despite much interest in this problem
no provably correct and efficient algorithm was known prior to our work.

We give the first efficient algorithm with provable performance guarantees for 
the Inverse Shapley Value Problem.  For any constant $\eps > 0$
our algorithm runs in fixed poly$(n)$ time (the degree of the
polynomial is independent of $\eps$) and has the following
performance guarantee:  given as input a vector of desired Shapley values,
if any ``reasonable'' weighted voting scheme 
(roughly, one in which the threshold is not too skewed)\ignore{\new{
(roughly, one in which candidate has a non-trivial probability of being 
elected)}} approximately matches the desired vector of values to within 
some small error,
then our algorithm explicitly outputs a weighted voting scheme that
achieves this vector of Shapley values to within error $\eps.$
If there is a ``reasonable'' voting scheme in which all
voting weights are integers at most $\poly(n)$ that approximately achieves
the desired Shapley values, then our algorithm runs in time
$\poly(n)$ and outputs a weighted voting scheme that achieves
the target vector of Shapley values to within
error $\eps=n^{-1/8}.$

\end{abstract}

\section{Introduction}

In this paper we consider the common scenario in which each of $n$
voters must cast a binary vote for or against some proposal. 
What is the best way to design such a voting scheme? 
Throughout the paper we consider only \emph{weighted voting schemes,} in which the proposal 
passes if a weighted sum of yes-votes exceeds a predetermined threshold. 
Weighted voting schemes are predominant in voting theory and have been 
extensively studied for many years, see~\cite{Elkind:07aaai,Zuck:08aaai} and references therein.
In computer science language, we are dealing with 
\emph{linear threshold functions} (henceforth 
abbreviated as \emph{LTFs}) over $n$ Boolean variables.

If it is desired that each of the $n$ voters should have the same ``amount of power''
over the outcome, then a simple majority vote 
is the obvious solution.  However, in many scenarios it may be the case
that we would like to assign different levels of voting power
to the $n$ voters -- perhaps they are shareholders 
who own different amounts of stock in a corporation,
or representatives of differently
sized populations.  In such a setting it is much less obvious how
to design the right voting scheme; indeed, it is far from obvious
how to correctly quantify the notion of the ``amount of power''
that a voter has under a given fixed voting scheme.  As a simple
example, consider an election with three voters
who have voting weights 49, 49 and 2, in which a total of 51 votes
are required for the proposition to pass.  While the disparity between
voting weights may at first suggest that the two voters with 49 votes each
have most of the ``power,'' any coalition of two voters is
sufficient to pass the proposition and any single voter is insufficient,
so the voting power of all three voters is in fact equal.

Many different \emph{power indices} (methods of measuring
the voting power of individuals under a given 
voting scheme) have been proposed over the
course of decades.  These include the Banzhaf index
\cite{Banzhaf:65}, the Deegan-Packel index \cite{DeeganPackel:78},
the Holler index \cite{Holler:82}, and others
(see the extensive survey of de Keijzer \cite{deKeijzer:08}).
Perhaps the best known,
and certainly the oldest, of these indices is the 
\emph{Shapley-Shubik index} \cite{SS54}, which is also known as the index of
\emph{Shapley values} (we shall henceforth refer to it as
such).
Informally, the Shapley value of a voter $i$ among the 
$n$ voters 
is the fraction of all $n!$ orderings of the voters
in which she ``casts the pivotal vote'' (see Definition~\ref{def:shapley}
in Section~\ref{sec:prelim}
for a precise definition, and \cite{Roth:88book} for much more
on Shapley values).  We shall work with the Shapley
values throughout this paper.

Given a particular weighted voting scheme (i.e., an $n$-variable
linear threshold function), standard sampling-based approaches
can be used to efficiently obtain highly accurate estimates of the $n$
Shapley values (see also the works of \cite{Leech:03b,BMRPRS10}).  
However, the \emph{inverse} problem 
is much more challenging:  given a vector of $n$ desired
values for the Shapley values, how can one design a weighted
voting scheme that (approximately) achieves these Shapley values?
This problem, which we refer to
as the \emph{Inverse Shapley Value Problem}, is quite natural and
has received considerable attention; various heuristics
and exponential-time algorithms 
have been proposed \cite{APL:07,FWJ08short,KKZ10,Kurz11}, but prior to our 
work no provably correct and efficient algorithms were known.

\medskip

\noindent {\bf Our Results.}
We give the first efficient
algorithm with provable performance guarantees
for the Inverse Shapley Value Problem. 
Our results apply to ``reasonable'' voting schemes;
roughly, we say that a weighted voting scheme is ``reasonable'' if
fixing a tiny fraction of the voting weight does not already determine 
the outcome, 
i.e., if the threshold of the linear threshold function is not too
extreme.  (See Definition~\ref{def:reasonable} in Section~\ref{sec:prelim}
for a precise definition.) 
This seems to be a plausible property for natural
voting schemes.   Roughly speaking, we show that if there
is any reasonable weighted voting scheme
\ignore{\new{We call a weighted voting scheme 
(monotone linear threshold function) {\em reasonable} 
if each candidate has probability $2^{-o(n)}$ to be elected -- under the ``impartial culture
assumption''~\cite{GK:68} -- where $n$ is the number of voters. (See Section...for a formal definition.) 
We feel that this is a common sense property of 
voting schemes. For all such voting schemes,}}that approximately
achieves the desired
input vector of Shapley values, then our algorithm finds such a 
weighted voting scheme.  Our algorithm runs in fixed
polynomial time in $n$, the number of voters, for any
constant error parameter $\eps > 0$.  
In a bit more detail, our first main theorem, stated informally, is as follows
(see Section~\ref{sec:mainresults} for Theorem~\ref{thm:main-arbitrary}
which gives a precise theorem statement):

\medskip

\noindent {\bf Main Theorem (arbitrary weights, informal statement).}
\emph{
There is a poly$(n)$-time algorithm
with the following properties:  The algorithm is given any constant
accuracy parameter $\eps > 0$ and any vector of
$n$ real values
$a(1),\dots, a(n)$.  The algorithm
has the following performance guarantee:
if there is any monotone increasing reasonable LTF $f(x)$
whose Shapley values are very close to the given values
$a(1),\dots, a(n)$,
then with very high probability the algorithm 
outputs $v \in \R^n,$ $\theta \in \R$ such
that the linear threshold function 
$h(x)=\sign(v \cdot x - \theta)$ has Shapley values $\eps$-close
to those of $f$.
}

\smallskip

\new{We emphasize that the exponent of the $\poly(n)$ running time is a fixed constant that is independent of $\eps$.}

\medskip

Our second main theorem gives an even stronger
guarantee if there is a weighted voting scheme with small weights
(at most $\poly(n)$) whose Shapley values are close to the desired values.
For this problem we give an algorithm which achieves $1/\poly(n)$
accuracy in $\poly(n)$ time.
An informal statement of this result
is (see Section~\ref{sec:mainresults} for Theorem~\ref{thm:main-bounded} 
which gives a precise theorem statement):

\medskip

\noindent {\bf Main Theorem (bounded weights, informal statement).}
\emph{
There is a poly$(n,W)$-time algorithm
with the following properties:  The algorithm is given a weight
bound $W$ \ignore{, an accuracy parameter $\eps = \Omega(n^{-1/3})$}and 
any vector of $n$ real values
$a(1),\dots, a(n)$.  The algorithm
has the following performance guarantee:
if there is any monotone increasing reasonable LTF $f(x)=
\sign(w \cdot x - \theta)$
whose Shapley values are very close to the given values
$a(1),\dots, a(n)$
and where each $w_i$ is an integer of magnitude at most $W$,
then with very high probability the algorithm 
outputs $v \in \R^n,$ $\theta \in \R$ such
that the linear threshold function 
$h(x)=\sign(v \cdot x - \theta)$ has Shapley values $n^{-1/8}$-close
to those of $f$.
}
\ignore{
\rnote{should we mention here in the intro that for both
our algorithms, except
with negligible failure probability, they will never output a weighted
voting scheme whose Shapley vector is far from the desired solution?
My sense is that we don't need to; if you agree, someone erase this note.}
}

\medskip

\noindent {\bf Discussion and Our Approach.}
At a high level, the Inverse Shapley Value Problem that we consider is 
similar to the ``Chow Parameters Problem'' that has been the subject 
of several recent papers \cite{Goldberg:06bshort,OS08short,DDFS12}.  
The Chow parameters are another name for the $n$ Banzhaf indices; 
the Chow Parameters Problem is to output a 
linear threshold function which approximately matches a given input 
vector of Chow parameters.   (To align with the terminology of the current
paper, the ``Chow Parameters Problem'' might perhaps better be described as
the ``Inverse Banzhaf Problem.'')

Let us briefly describe the approaches in \cite{OS08short} and
\cite{DDFS12} at a high level for the purpose of establishing
a clear comparison with this paper.  
Each of the papers \cite{OS08short,DDFS12} combines structural
results on linear threshold functions with an algorithmic
component.  The structural results in \cite{OS08short} deal with
anti-concentration of affine forms $w \cdot x - \theta$ where 
$x \in \{-1,1\}^n$ is uniformly distributed over the Boolean
hypercube, while the algorithmic ingredient of \cite{OS08short}
is a rather straightforward brute-force search.
In contrast, the key structural results of \cite{DDFS12} are geometric
statements about how $n$-dimensional hyperplanes interact with the
Boolean hypercube, which are combined with linear-algebraic
(rather than anti-concentration) arguments.
The algorithmic ingredient of \cite{DDFS12} is more
sophisticated, employing a boosting-based approach inspired
by the work of \cite{TTV:09short,Impagliazzo:95short}.

Our approach combines aspects of both the \cite{OS08short} and
\cite{DDFS12} approaches.  Very roughly speaking, we establish
new structural results which show that linear threshold functions have
good anti-concentration (similar to \cite{OS08short}), and use a boosting-based
approach derived from \cite{TTV:09short} as the algorithmic component
(similar to \cite{DDFS12}).  However, 
this high-level description glosses over
many ``Shapley-specific'' issues and complications that do not arise in these earlier
works; below we describe two of the main challenges that arise,
and sketch how we meet them in this paper.

\medskip

\noindent {\bf First challenge:  establishing anti-concentration with respect
to non-standard distributions.}
The Chow parameters (i.e., Banzhaf indices) 
have a natural definition in terms of
the uniform distribution over the Boolean hypercube $\{-1,1\}^n$.
Being able to use the uniform distribution with its many nice properties
(such as complete independence among all coordinates) 
is very useful in proving the required anti-concentration results
that are at the heart of \cite{OS08short}.  In contrast, it is not
\emph{a priori} clear what is (or even whether there exists) 
the ``right'' distribution over
$\{-1,1\}^n$ corresponding to the Shapley values.  In this paper
we derive such a distribution $\mu$ over $\{-1,1\}^n$,
 but it is much less well-behaved than the uniform distribution 
(it is supported on a proper subset of $\{-1,1\}^n$, and it is not even
pairwise independent).  Nevertheless, we are able to establish
anti-concentration results for affine forms $w \cdot x - \theta$
corresponding to linear threshold functions under the
distribution $\mu$ as
required for our results.  This is done by showing that any
\newer{reasonable}
linear threshold function can be expressed with ``nice''
weights (see Theorem~\ref{thm:LParg} of Section~\ref{sec:prelim}),
and establishing anti-concentration for any ``nice'' weight vector
by carefully combining anti-concentration bounds for $p$-biased distributions
across a continuous family of different choices of $p$
(see Section~\ref{sec:str} for details).

\medskip

\noindent {\bf Second challenge:  using anti-concentration to solve
the Inverse Shapley problem.} The 
main algorithmic ingredient that we use is a procedure 
from \cite{TTV:09short}.  Given a vector of values 
$(\E[f(x)x_i])_{i=1,\dots,n}$
(correlations between the unknown linear threshold function 
$f$ and the individual input
variables), it efficiently constructs a bounded function $g: \{-1,1\}^n
\to [-1,1]$ which closely matches these correlations, i.e.,
$\E[f(x)x_i] \approx \E[g(x)x_i]$ for all $i$.  Such a procedure is very
useful for the Chow parameters problem, because the
Chow parameters correspond precisely to the values $\E[f(x)x_i]$ --
i.e., the degree-$1$ Fourier coefficients of $f$ --
with respect to the uniform distribution.
(This correspondence is at the heart
of Chow's original proof \cite{Chow:61short} showing that the 
{exact} values of the Chow parameters suffice to information-theoretically
specify any linear threshold function;
anti-concentration is used in \cite{OS08short} to extend Chow's original 
arguments about degree-1 Fourier coefficients to the setting of approximate 
reconstruction.)

For the inverse Shapley problem, there is no obvious correspondence
between the correlations of individual input variables and the
Shapley values.  Moreover, without a notion of ``degree-$1$ Fourier 
coefficients'' for the Shapley setting, it is not clear why 
anti-concentration statements with respect to $\mu$ should be useful for 
approximate reconstruction.  We deal with both these issues by 
developing a notion of the \emph{degree-$1$ Fourier coefficients of $f$
with respect to distribution $\mu$} and relating these coefficients
to the Shapley values
\footnote{We note that Owen \cite{Owen:72} has given a characterization of
the Shapley values as a weighted average of $p$-biased influences
(see also \cite{KalaiSafra:06}).  However, this is not  
as useful for us as our characterization in terms of 
``$\mu$-distribution'' Fourier coefficients, because we need
to ultimately relate the Shapley values to anti-concentration
with respect to $\mu$.}.
(We actually require two related notions:  one is
the ``coordinate correlation coefficient'' $\E_{x \sim \mu}[f(x)x_i]$, 
which is necessary for the algorithmic \cite{TTV:09short} ingredient,
and one is the ``Fourier coefficient'' $\hat{f}(i)= \E_{x \sim \mu}
[f(x) L_i]$, which is necessary for Lemma~\ref{lem:anticonc-and-ell1}, see below.)
We define both notions and establish the necessary relations between
them in Section~\ref{sec:reformulation}. 

Armed with the notion of the degree-$1$ Fourier coefficients
under distribution $\mu$, we prove a key result
(Lemma~\ref{lem:anticonc-and-ell1}) saying that if the LTF 
$f$ is anti-concentrated under distribution $\mu$, 
then any bounded function $g$ which closely matches
the degree-$1$ Fourier coefficients of $f$ must be close to
$f$ in $\ell_1$ distance with respect to $\mu$.  
(This is why anti-concentration with respect to $\mu$ is useful for us.)
From this point, exploiting properties of the
\cite{TTV:09short} algorithm, we can pass from $g$
to an LTF whose Shapley values closely match those of $f$.

\medskip

\noindent {\bf Organization.}  Useful preliminaries are given in Section~\ref{sec:prelim},
including the crucial fact
(Theorem~\ref{thm:LParg}) that all ``reasonable''
linear threshold functions have weight representations with ``nice'' weights.
In Section~\ref{sec:reformulation} we 
define the distribution $\mu$ and the notions of Fourier
coefficients and ``coordinate correlation coefficients,''
and the relations between them,
that we will need.  At the end of that section we prove
a crucial lemma, Lemma~\ref{lem:anticonc-and-ell1}, which says 
that anti-concentration of affine forms and
closeness in Fourier coefficients together suffice to establish
closeness in $\ell_1$ distance.  Section~\ref{sec:str} proves that
``nice'' affine forms have the required anti-concentration, 
and Section~\ref{sec:algorithmic} describes the algorithmic
tool from \cite{TTV:09short} that lets us establish closeness of 
coordinate correlation coefficients.  Section~\ref{sec:mainresults}
puts the pieces together to prove our main theorems. Finally, 
in Section~\ref{sec:concl}
we conclude the paper and present a few open problems.

\ignore{

Why are anti-concentration results
useful for problems like the Chow parameters problem or the Inverse
Shapley problem?  In the uniform-distribution Chow parameters setting,
the Chow parameters correspond precisely to the degree-1 Fourier
coefficients of a function $f$; 
this correspondence is at the heart
of Chow's original proof \cite{Chow:61short} showing that the 
{exact} values of the Chow parameters suffice to information-theoretically
specify any linear threshold function (i.e., showing that 
the exact inverse problem is in principle solvable).  Anti-concentration
is used in \cite{OS08short} to extend Chow's original arguments
about degree-1 Fourier coefficients to the setting of approximate 
reconstruction. 

To carry out an analogous argument in our setting, we must 
develop a notion of the degree-1 Fourier coefficients of $f$
\emph{with respect to distribution $\mu$} and relate these coefficients
to the Shapley values; we do this in Section~\ref{sec:reformulation}.
With these notions in hand we can prove a key technical result
(Lemma~XXX) saying that if the linear threshold function $f$ is
anti-concentrated under distribution $\mu$, 
then any bounded function which closely matches
the degree-1 Fourier coefficients of $f$ must in fact be close to
$f$ in $\ell_1$-measure with respect to $\mu$.  From this point
it is not difficult for us to 
(This is why anti-concentration with respect to $\mu$ is useful for us.)

Additional technical work is required to relate closeness of correlations
with individual variables (which is what is guaranteed by 
the algorithmic machinery from \cite{TTV:09short} that we employ) to closeness
of Fourier coefficients (which is what our anti-concentration statement
requires).

}


\ignore{

OLD COMMENTED-OUT MATERIAL:

\item Designing a weighted voting scheme in which each of $n$ parties has
a certain pre-desired amount of voting weight / influence over the
outcome is a natural and fundamental goal in social choice.

\item Not entirely obvious how to measure ``influence'' of a voter --
just numerical weights alone don't do it (obvious examples:  49,49,2).
Various measures have been proposed; the most standard one is the
``Shapley value.'' Quick description of what this is;
note nice properties (always sums to 1).

\item Thus we are led to the subject of this paper, which we
refer to as the ``inverse shapley problem'':  given a desired vector of
Shapley values for $n$ voters and an approximation parameter $\eps$,
output a set of weights for a voting scheme which approximately give
rise to the desired vector of Shapley values.

\item Briefly describe/offend all prior work on this problem (do not use
profanity).

\item Our results:  informal statement of main theorem.  Our algorithm
runs in BLAH time, and if there is a reasonable
weighted voting scheme that gives rise to approximately the desired
vector of Shapley values, our algorithm outputs a weighted voting scheme
that does this.  (If there is no reasonable weighted voting scheme
that gives rise to approximately the desired vector
of Shapley values, our algorithm may output ``no solution.''  Except
with negligible failure probability, it will never output a weighted
voting scheme whose Shapley vector is far from the desired solution.)

\begin{itemize}

\item Discuss techniques.  High level:  combines ideas from [OS11]
(anticoncentration) and [TTV] (boosting-type algorithmic component).
Many technical challenges including

\begin{itemize}

\item developing a suitable framework within which to prove anticoncentration
(Section~\ref{sec:reformulation});

\item actually proving the desired anticoncentration bound
(we can claim this is harder than in the uniform distribution
setting of [OS11] because we no longer have independence between
coordinates under the distribution here)

\item Modifying the algorithmic ingredient from [TTV] to work in
our setting (or should we not describe this as a challenge we overcome)?

\end{itemize}

\end{itemize}

}


\section{Preliminaries} \label{sec:prelim}

\noindent {\bf Notation and terminology.}
For $n \in \Z_{+}$, we denote by $[n] \eqdef \{1, 2, \ldots, n\}$.
For $i, j \in \Z_{+}$, $i \le j$, we denote $[i, j] \eqdef \{i, i+1, \ldots, j\}.$

Given a vector $w = (w_1,\dots,w_n) \in \R^n$ we write $\|w\|_1$
to denote $\sum_{i=1}^n |w_i|$.  A \emph{linear threshold function},
or LTF, is a function
$f: \{-1,1\}^n \to \{-1,1\}$ which is such that
$f(x) = \sign(w \cdot x - \theta)$ for some
$w \in \R^n, \theta \in \R.$

Our arguments will also use a variant of linear threshold functions which
we call \emph{linear bounded functions} (LBFs).  The projection
function $P_1: \R \to [-1,1]$ is defined by $P_1(t)=t$ for $|t| \leq 1$
and $P_1(t)=\sign(t)$ otherwise.  An LBF $g: \{-1,1\}^n \to [-1,1]$ is a
function $g(x)=P_1(w \cdot x - \theta).$

\medskip

\noindent {\bf Shapley values.}
Here and throughout the paper we write $\ms$ to denote the symmetric
group of all $n!$ permutations over $[n].$ Given a
permutation $\pi \in \ms$ and an index $i \in [n]$, we write $x(\pi,i)$
to denote the string in $\{-1,1\}^n$ that has a 1 in coordinate $j$ if
and only if $\pi(j) < \pi(i)$, and we write $x^{+}(\pi,i)$ to denote the
string obtained from $x(\pi,i)$ by flipping coordinate $i$ from $-1$ to
$1.$ With this notation in place we can define the generalized Shapley
indices of a Boolean function as follows:
\begin{definition} {\bf (Generalized Shapley values)}
\label{def:shapley}
Given $f : \{-1,1\}^n \to \{-1,1\},$ the \emph{$i$-th
generalized Shapley value of $f$} is the value
\begin{equation}
\tilde{f}(i) \eqdef \E_{\pi \sim_R \ms}[f(x^+(\pi,i)) - f(x(\pi,i))]
\label{eq:shapley}
\end{equation}
\new{(where ``$\pi \sim_R \ms$'' means that $\pi$ is selected uniformly at random from $\ms$).}
\end{definition}
A function $f : \{-1,1\}^n \rightarrow \{-1,1\}$ is said to be
\emph{monotone increasing} if for all $i \in [n]$, whenever 
two input strings $x, y \in \{-1,1\}^n$
differ precisely in coordinate $i$ and have $x_i=-1$, $y_i=1$, it is the
case that $f(x) \leq f(y).$ It is easy to check that for monotone functions
our definition of generalized Shapley values agrees with the usual notion of 
Shapley values (which are typically defined only for monotone functions) 
up to a multiplicative factor of 2; in
the rest of the paper we omit ``generalized'' and refer to these values
simply as the Shapley values of $f.$

We will use the following notion of the ``distance'' between the vectors of Shapley values for two
functions $f,g: \{-1,1\}^n \to [-1,1]$:
\[
\dshap(f,g) \eqdef \sqrt{\littlesum_{i=1}^n (\tilde{f}(i) - \tilde{g}(i))^2},\]
i.e., the Shapley distance $\dshap(f,g)$ is simply the Euclidean
distance between the two $n$-dimensional vectors of Shapley values.  Given
a vector $a = (a(1),\dots, a(n)) \in \R^n$ we will also
use $\dshap(a,f)$ to denote $\sqrt{\sum_{i=1}^n (\tilde{f}(i)-a(i))^2}.$

\medskip

\noindent {\bf The linear threshold functions that we consider.}
Our algorithmic results hold for linear threshold functions which are
not too ``extreme'' (in the sense of 
\newer{
having a very skewed threshold}).
\ignore{almost always outputting +1 or
almost always outputing $-1$).} We will use the following definition:

\begin{definition}  \label{def:reasonable}
{\bf ($\eta$-reasonable LTF)}
Let $f: \{-1,1\}^n \to \{-1,1\}$, $f(x)=\sign(w \cdot x - \theta)$
be an LTF.  For $0 < \eta < 1$ we say that $f$ is \emph{$\eta$-reasonable}
if $\theta \in [-(1-\eta) \|w\|_1, (1-\eta) \|w\|_1].$
\end{definition}

All our results will deal with $\eta$-reasonable LTFs; throughout
the paper $\eta$ should be thought of as a small fixed absolute constant 
(such as $1/1000$).
LTFs that are not $\eta$-reasonable do not seem to correspond to
very interesting voting schemes since typically they will be very
close to constant functions.  (For example, even at $\eta = 0.99$, if
the LTF $f(x)=\sign(x_1 + \cdots + x_n - \theta)$ has a threshold 
\newer{$\theta>0$ which makes it not an $\eta$-reasonable LTF, then $f$ agrees
with the constant function $-1$ on all but a $2^{-\Omega(n)}$ fraction of 
inputs in $\{-1,1\}^n.$})
\ignore{
$\theta$ which makes it not a $\eta$-reasonable LTF, then
then $f$ is $2^{-\Omega(n)}$-close to either the constant function $+1$
or $-1$.)
}

Turning from the threshold to the weights,
some of the proofs in our paper will require us to work with LTFs that have
``nice'' weights in a certain technical sense.
Prior work~\cite{Servedio:07cc,OS11:chow} has shown that for any LTF, there is
a weight vector realizing that LTF that has essentially the properties
we need; however, since the exact technical condition that we require is not
guaranteed by any of the previous works, we give a full proof that any LTF
has a representation of the desired form.
The following theorem is proved in Appendix~\ref{app:niceweights}:

\begin{theorem}\label{thm:LParg} 
Let $f: \bn \to \bits$ be an $\eta$-reasonable LTF and $k \in [2, n]$.
There exists a representation of $f$ as $f(x) = \sgn(v_0+ \littlesum_{i=1}^n 
v_i x_i)$ such that (after reordering \new{coordinates} so that condition (i) below  holds) we have:   (i) $|v_{i}| \geq |v_{i+1}|$, $i \in [n-1]$; (ii) $|v_0| \leq (1-\eta) \littlesum_{i=1}^n |v_i|$; and (iii) for all $i \in [0, k-1]$ we have $|v_{i}| \leq (2/\eta) \cdot \sqrt{n}
 \cdot k^{\frac{k}{2}} \cdot \sigma_k$,
where $\sigma_k \eqdef \sqrt{\littlesum_{j \geq k} v_j^2}$.
\end{theorem}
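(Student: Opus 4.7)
The plan is to prove this existence result by a linear-programming / Cramer's rule argument, in the spirit of the ``low-weight representation'' theorems of~\cite{Servedio:07cc,OS11:chow}. Let $\mathcal{P} \subseteq \R^{n+1}$ denote the polyhedral cone of weight vectors $v = (v_0, v_1, \ldots, v_n)$ satisfying $f(x)(v_0 + v \cdot x) \geq 1$ for every $x \in \bn$; every representation of $f$ is a positive scalar multiple of some $v \in \mathcal{P}$, and by $\eta$-reasonableness (combined with a reordering of coordinates) there is at least one $v \in \mathcal{P}$ satisfying (i) and (ii). The task then reduces to showing that one can additionally enforce (iii).

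I would pass to an extremal representation: among all $v \in \mathcal{P}$ obeying (i) and (ii), take one that minimizes $\max_{0 \leq i \leq k-1} |v_i|$ subject to the normalization $\sigma_k = 1$. (The degenerate regime $\sigma_k = 0$ is handled first, by perturbing the weights on coordinates $k, \ldots, n$ by an amount below the current margin, preserving $f$ and the other conditions.) After branching on the sign pattern of $v$ this becomes a genuine LP, so the minimum is attained at a vertex where approximately $k$ of the margin inequalities $f(x^{(j)})(v_0 + v \cdot x^{(j)}) = 1$ are tight. These $k$ equations let us express the top-$k$ unknowns $(v_0, v_1, \ldots, v_{k-1})$ as the solution of a linear system $A \cdot (v_0, \ldots, v_{k-1})^T = \mathbf{1} - A' \cdot (v_k, \ldots, v_n)^T$, with $A \in \{-1, 1\}^{k \times k}$ and $A' \in \{-1, 1\}^{k \times (n-k+1)}$. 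The right-hand side has $\ell_\infty$-norm at most $1 + \sqrt{n-k+1} \cdot \sigma_k = O(\sqrt{n})$ by Cauchy--Schwarz; Cramer's rule combined with Hadamard's inequality (one column of $\ell_2$-norm $O(\sqrt{nk})$ and $k-1$ columns of $\pm 1$'s of $\ell_2$-norm $\sqrt{k}$) then bounds each numerator determinant by $O(\sqrt{n} \cdot k^{k/2})$, while $|\det(A)| \geq 1$ by integrality. This yields $|v_i| = O(\sqrt{n} \cdot k^{k/2} \cdot \sigma_k)$ for each $i \in [0, k-1]$ after undoing the normalization.

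The $2/\eta$ factor in (iii) would come from propagating the reasonableness condition (ii): the $(1 - \eta)$-slack prevents the LP extremum from collapsing to a degenerate regime in which the threshold $v_0$ is pinned near $\pm \|v\|_1$ and the top-$k$ submatrix $A$ fails to be non-singular, and translating that slack into an effective lower bound on $|\det(A)|$ (or equivalently into a controlled scale for the tight constraints) contributes the factor $2/\eta$. The main obstacle I anticipate is precisely this structural step, namely guaranteeing that the LP extremum really admits a description with $k$ tight margin constraints whose top-$k$ submatrix $A$ is non-singular. Executing this requires a careful case analysis: branching on the sign pattern of $v$ to linearize the absolute values in the objective and in (ii), handling the ordering constraints $|v_i| \geq |v_{i+1}|$ (which are themselves non-linear in $v$ before the sign-branching), and using $\eta$-reasonableness crucially to rule out those degenerate vertices of the LP at which the top-$k$ coordinates are not ``active'' in the sense required for the Cramer step.
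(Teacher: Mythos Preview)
Your overall architecture (an LP over weight vectors, extract a vertex, apply Cramer's rule with Hadamard on the numerator) is exactly the paper's, but two of the technical steps you sketch are not right, and the second one is precisely where the $2/\eta$ factor lives.

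First, you treat $\sigma_k = 1$ as a normalization while simultaneously varying all of $v$; that constraint is quadratic in $v_k,\dots,v_n$, so this is not yet an LP. The paper sidesteps this by \emph{freezing} the tail: starting from an initial $\eta$-reasonable representation $(w_0,\dots,w_n)$, it sets $v_j = w_j$ for $j \geq k$ once and for all, so that $\sigma_k$ is a fixed number, and then runs an LP only in the $k$ unknowns $u_0,\dots,u_{k-1}$. The LP constraints are margin constraints (equalities when $|h(x)| < \sqrt{n}\,\sigma_k$, one-sided inequalities otherwise), sign constraints $\sgn(u_i)=\sgn(w_i)$, ordering constraints $|u_i|\geq|u_{i+1}|$ and $|u_{k-1}|\geq|w_k|$, and the reasonableness constraint (ii). Feasibility is witnessed by $(w_0,\dots,w_{k-1})$, and any feasible point still represents $f$ because the tail contribution is bounded by $\sqrt{n-1}\,\sigma_k < C$.

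Second, and more importantly, your claim that the vertex is pinned down by $k$ tight \emph{margin} constraints with $A \in \{-1,1\}^{k\times k}$ and $|\det A| \geq 1$ is not justified and in general false. At a vertex of this LP, the $k$ tight constraints can include sign constraints, ordering constraints, and in particular the reasonableness constraint (ii). All of these except (ii) contribute rows with entries in $\{-1,0,1\}$, so if (ii) is not tight the determinant is a nonzero integer and indeed $|\det A|\geq 1$. But when (ii) is tight, $A$ has one ``special'' row with entries in $\{\pm 1, \pm(1-\eta)\}$, and integrality no longer bounds $|\det A|$ away from $0$. The paper's fix is to replace $\eta$ by the slightly smaller rational $\eta' = 1/\lceil 1/\eta\rceil$ in the LP constraint; then expanding along the special row gives $\det A = a(1-\eta') + b$ with $a,b \in \Z$, so a nonzero determinant satisfies $|\det A|\geq \eta' \geq \eta/2$. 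That is exactly the origin of the $2/\eta$ factor: it is the reciprocal of the determinant lower bound when the reasonableness constraint is one of the active constraints at the vertex, not a ``slack'' argument ruling out degenerate vertices as you suggest.
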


\noindent {\bf Tools from probability.}
We will use the following standard tail bound:

\begin{theorem}\label{thm:chernoff} {\bf (Chernoff Bounds)} Let $X$ be
a random variable taking values in $[-a,a]$ and let $X_1, \ldots, X_t$
be i.i.d. samples drawn from $X$. Let $\overline{X} = \sum_{i=1}^t X_i / t$. Then for any $\gamma>0$, we have
$$ \Pr \left[\left|\overline{X} -
\mathbf{E}[X] \right| \ge \gamma\right] \le2 \exp(-\gamma^2 t/(\newer{2}a^2)). 
$$
\end{theorem}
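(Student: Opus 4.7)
The plan is to use the standard Chernoff/Hoeffding argument: apply Markov's inequality to the moment generating function (MGF) of the centered sum, bound the MGF using a lemma for bounded random variables, and optimize the free parameter.

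First I would center the variables by setting $Y_i = X_i - \E[X]$, so that $\E[Y_i]=0$. Since $X_i \in [-a,a]$ and $\E[X] \in [-a,a]$, each $Y_i$ lies in an interval of length at most $2a$. Writing $S_t = \sum_{i=1}^t Y_i$, the quantity $\overline{X}-\E[X]$ equals $S_t/t$, so it suffices to bound $\Pr[|S_t| \geq \gamma t]$. Splitting by sign via a union bound reduces the task to bounding $\Pr[S_t \geq \gamma t]$, since the bound for $\Pr[-S_t \geq \gamma t]$ is symmetric and together they produce the factor $2$ in the stated inequality.

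Next I would apply the standard Chernoff trick: for any $\lambda>0$,
\[
\Pr[S_t \geq \gamma t] \;=\; \Pr[e^{\lambda S_t} \geq e^{\lambda \gamma t}] \;\leq\; e^{-\lambda \gamma t}\, \E[e^{\lambda S_t}] \;=\; e^{-\lambda \gamma t}\, \prod_{i=1}^t \E[e^{\lambda Y_i}],
\]
using Markov's inequality and independence. The heart of the argument is Hoeffding's lemma: a mean-zero random variable $Y$ supported in an interval of length $L$ satisfies $\E[e^{\lambda Y}] \leq e^{\lambda^2 L^2 / 8}$. Taking $L=2a$ gives $\E[e^{\lambda Y_i}] \leq e^{\lambda^2 a^2 / 2}$, hence $\Pr[S_t \geq \gamma t] \leq \exp(-\lambda \gamma t + t \lambda^2 a^2 / 2)$.

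Finally I would optimize over $\lambda$: the exponent is a quadratic in $\lambda$ minimized at $\lambda^\ast = \gamma/a^2$, which yields the exponent $-\gamma^2 t / (2 a^2)$ and the desired bound $\exp(-\gamma^2 t/(2a^2))$ on one tail. Adding the symmetric tail gives the factor of $2$ and completes the argument. The only nontrivial ingredient is Hoeffding's lemma, which I would either cite or prove quickly by convexity: write $e^{\lambda y}$ as a convex combination of $e^{\lambda b_1}$ and $e^{\lambda b_2}$ for $y \in [b_1,b_2]$, take expectations using $\E[Y]=0$, and bound the resulting function of $\lambda$ by analyzing the second derivative of its logarithm. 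No step presents a real obstacle; the main care is verifying that the interval length $2a$ (rather than $a$) is what enters Hoeffding's lemma, which is precisely what produces the $2a^2$ in the denominator of the stated bound.
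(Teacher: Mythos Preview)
Your argument is the standard Hoeffding derivation and is correct. Note that the paper does not actually prove this statement at all: it is presented under ``Tools from probability'' as a standard tail bound and simply quoted, so there is no paper proof to compare against.
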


We will also use the Littlewood-Offord inequality for
$p$-biased distributions over $\{-1,1\}^n.$  One way to prove this is by using the
LYM inequality (which can be found e.g. as Theorem~8.6 of
\cite{Jukna01book}); for an explicit reference and proof of the following statement see
e.g.~\cite{AGKW:09}.

\begin{theorem} Fix $\delta \in (0,1)$ and let
$D_{\delta}$ denote the {\em $\delta$-biased distribution over $\{-1,1\}^n$}
(under which each coordinate is set to $1$ independently
with probability $\delta.$)
Fix $w \in \mathbb{R}^n$ and define $ S = \{i : |w_i | \ge
\epsilon \} $. If $|S| \ge K$, then for all $\theta \in \mathbb{R}$ we have
$\Pr_{x \sim D_{\delta}} [|w \cdot x - \theta| < \epsilon] \le
\frac{1}{\sqrt{K \delta(1-\delta)}}.$ \end{theorem}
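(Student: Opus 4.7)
The plan is the classical Littlewood--Offord argument via LYM, adapted to the $\delta$-biased distribution. First, by conditioning on the coordinates $x_i$ with $i\notin S$ and absorbing their contribution into $\theta$, we may assume $S=[n]$, i.e.\ $|w_i|\ge\epsilon$ for every $i$; the conditional bound then suffices since it is uniform in the conditioning, and the original $K$ is replaced by $n\ge K$. Write $A=\{x\in\{-1,1\}^n : |w\cdot x -\theta|<\epsilon\}$.

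The structural step is to observe that $A$ is an antichain under the ``$w$-twisted'' partial order $\preceq_w$ on $\{-1,1\}^n$ defined by $x\preceq_w y \iff w_i x_i\le w_i y_i$ for every $i$. Indeed, $w\cdot x$ is strictly monotone in $\preceq_w$: any strict relation $x\prec_w y$ differs in at least one coordinate $i$ with $w_i x_i=-|w_i|$ and $w_i y_i=+|w_i|$, giving $w\cdot y - w\cdot x \ge 2|w_i|\ge 2\epsilon$, which is incompatible with both $w\cdot x$ and $w\cdot y$ lying in an open interval of length $2\epsilon$. The coordinate sign-flip $\psi(x)_i=\sgn(w_i)x_i$ is a poset isomorphism from $(\{-1,1\}^n,\preceq_w)$ to the standard Boolean lattice, so $\psi(A)$ is an antichain in the coordinatewise order.

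Finally, we apply LYM to $\psi(A)$ to get $\sum_{y\in\psi(A)}\binom{n}{|y|_+}^{-1}\le 1$ and combine this with the standard estimate $\max_k \binom{n}{k}\delta^k(1-\delta)^{n-k}\le 1/\sqrt{n\delta(1-\delta)}$ on the peak of the biased-binomial PMF, whose variance is $n\delta(1-\delta)$. The main subtlety lies here: the pushforward $\nu$ of $D_\delta$ under $\psi$ is a \emph{mixed} product measure whose coordinates are independently $\delta$- or $(1-\delta)$-biased according to $\sgn(w_i)$, so the naive pointwise bound $\nu(y)\binom{n}{|y|_+}\le 1/\sqrt{n\delta(1-\delta)}$ can fail on extreme states where many coordinates sit in their high-probability positions. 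One instead invokes the biased form of LYM as in \cite{AGKW:09}, exploiting that each coordinate still contributes variance exactly $\delta(1-\delta)$ to give $\nu(\psi(A))\le 1/\sqrt{n\delta(1-\delta)}\le 1/\sqrt{K\delta(1-\delta)}$.
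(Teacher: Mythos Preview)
The paper does not give a proof of this statement; it only remarks that ``one way to prove this is by using the LYM inequality'' and points to \cite{AGKW:09} for the full argument. Your sketch is exactly that LYM route and is considerably more detailed than what the paper offers: the reduction to $S=[n]$ by conditioning, the antichain observation for $A$ under $\preceq_w$, the transfer to the standard order via $\psi$, and the combination of LYM with the binomial-mode estimate $\max_k\binom{n}{k}\delta^k(1-\delta)^{n-k}\le 1/\sqrt{n\delta(1-\delta)}$ are all correct and are the intended line.

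The place where your argument stops short is the one you yourself flag. After the sign-flip $\psi$, the pushforward of $D_\delta$ is a product measure with some coordinates $\delta$-biased and others $(1-\delta)$-biased, so one can no longer write the measure of a point purely as a function of its Hamming weight, and the clean ``LYM $\times$ (mode of the binomial)'' bound breaks down. Your fix---invoke a ``biased form of LYM as in \cite{AGKW:09}'' and note that each coordinate still has variance $\delta(1-\delta)$---is suggestive but is not a self-contained argument; it is the same citation the paper already makes for the entire theorem. So at that step your proof is exactly as complete (and as incomplete) as the paper's own treatment. One remark worth adding: the paper only \emph{applies} this theorem in Section~\ref{sec:str} to a monotone increasing affine form, where all $w_i\ge 0$; in that case $\psi$ is the identity, the pushforward is $D_\delta$ itself, and your LYM argument goes through cleanly with no mixed-measure complication.
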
 

\medskip

\noindent
{\bf Basic Facts about function spaces.}
We will use the following basic facts: 

\begin{fact}\label{fac:func1}\ignore{
\rnote{This had been

``
Let $\mu$ be a probability distribution over $\{-1,1\}^n\setminus \{\mathbf{1}, \mathbf{-1} \}$ such that $\forall x \in \{-1,1\}^n$, $\mu(x)>0$. Also, define $f_0 : \{-1,1\}^n\setminus \{\mathbf{1}, \mathbf{-1} \} \rightarrow \mathbb{R}$ as $f_0 : x \mapsto 1$. Similarly, for $1 \le i \le n$, define $f_i : \{-1,1\}^n\setminus \{\mathbf{1}, \mathbf{-1} \} \rightarrow \mathbb{R}$ as $f_i : x \mapsto x_i$. Then the set $\{f_0, f_1 ,\ldots, f_n \}$ is a set of linearly independent functions and forms a basis for the subspace $V = \{f : \{-1,1\}^n \rightarrow \mathbb{R} \textrm{ and $f$ is linear } \}$.
''

but the above seemed simpler and equivalent for our purposes, is this correct?
I think the previous version was a little off, it didn't use $\mu$ at all
after mentioning it initially.}}
The $n+1$ functions $1,x_1,\dots,x_n$
are linearly independent and form a basis for the subspace
$V = \{f : \{-1,1\}^n \rightarrow \mathbb{R} \textrm{ and $f$ is linear
} \}$.
\end{fact}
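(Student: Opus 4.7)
The plan is to prove the two assertions (linear independence and spanning) separately, both by direct arguments, since the statement is essentially a basic linear algebra fact about the $(n+1)$-dimensional space of affine functions on $\bn$.

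First I would handle linear independence. Suppose $a_0 \cdot 1 + a_1 x_1 + \cdots + a_n x_n \equiv 0$ as a function on $\bn$. Evaluating this identity at $x = (1,1,\dots,1)$ yields $a_0 + a_1 + \cdots + a_n = 0$, and evaluating at $x = (-1,-1,\dots,-1)$ yields $a_0 - a_1 - \cdots - a_n = 0$; adding the two equations gives $a_0 = 0$. Then for each $i \in [n]$, comparing the values of the remaining form $\sum_j a_j x_j$ at the all-ones string and at the string obtained from it by flipping coordinate $i$ to $-1$ gives a difference of $2 a_i = 0$, so $a_i = 0$. Hence the functions are linearly independent.

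Next I would show spanning. By the definition of $V$ used in the paper, a linear function $f : \bn \to \R$ is an affine function, i.e., one of the form $f(x) = c_0 + \sum_{i=1}^n c_i x_i$ for some real coefficients $c_0, c_1, \dots, c_n$. Such an $f$ is by inspection a linear combination of the functions $1, x_1, \dots, x_n$, so these functions span $V$. Combined with the previous step, they form a basis for $V$.

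There is no real obstacle here; the only subtlety is that the term ``linear'' must be interpreted as ``affine'' (otherwise the constant function $1$ would not lie in $V$), and this is implicit in the paper's usage (``linear threshold functions'' $\sgn(w\cdot x - \theta)$ involve an additive threshold). I would briefly note this convention at the start of the proof so that the two steps go through unambiguously.
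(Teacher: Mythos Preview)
Your proof is correct. The paper states this as a \textbf{Fact} without proof, so there is no argument in the paper to compare against; your direct verification of linear independence (by evaluation at well-chosen points) and spanning (by the definition of an affine/linear function) is a perfectly standard justification, and your remark that ``linear'' here means ``affine'' matches the paper's usage.
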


\begin{fact}\label{fac:func2} Fix any $\Omega \subseteq \{-1,1\}^n$ and
let $\mu$ be a probability distribution over $\Omega$
such that $\mu(x)>0$ for all $x \in \Omega.$
We define $\braket{f,g}_\mu \eqdef \mathbf{E}_{\omega \sim \mu} [ f(\omega)
g(\omega)]$ for $f, g : \Omega \rightarrow \mathbb{R}$.
Suppose that $f_1, \ldots, f_m : \Omega \rightarrow
\mathbb{R}$ is an orthonormal set of functions, i.e.,
$\braket{f_i,f_j}_{\mu}=\delta_{ij}$ for all $i,j \in [m].$  Then we have
$
\braket{f,f}_{\mu}^2 \ge \sum_{i=1}^m \braket{f,f_i}_{\mu}^2. $
As a corollary, if $f,h: \Omega \to \{-1,1\}$ then we have
$\sqrt{\sum_{i=1}^m \langle f-h, f_i \rangle^2_\mu}
\leq 2 \sqrt{\Pr_{x \sim \mu}[f(x) \neq h(x)]}.$
\end{fact}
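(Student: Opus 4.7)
\medskip

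\noindent\emph{Proof proposal.}
The plan is to prove the inequality by the standard Bessel-inequality argument, and then obtain the corollary by a short computation that exploits the $\{-1,1\}$-valuedness of $f$ and $h$. (I read the left-hand side of the main inequality as $\langle f,f\rangle_\mu$, i.e.\ Bessel's inequality, which is the form needed for the corollary.)

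First I would observe that $\langle \cdot, \cdot \rangle_\mu$ is a genuine inner product on the (finite-dimensional) space of real-valued functions on $\Omega$: symmetry and bilinearity are immediate, and positive-definiteness follows from the hypothesis that $\mu(x) > 0$ for every $x \in \Omega$, so that $\langle f,f\rangle_\mu = 0$ forces $f \equiv 0$ on $\Omega$. Thus every standard Euclidean-geometry argument is available.

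Next I would introduce the orthogonal projection of $f$ onto the span of $f_1,\dots,f_m$, namely
\[
P f \eqdef \littlesum_{i=1}^m \langle f, f_i\rangle_\mu \, f_i .
\]
Using orthonormality, a direct expansion gives $\langle f - Pf, f_j\rangle_\mu = 0$ for every $j \in [m]$, so $f - Pf$ is orthogonal to $Pf$. The Pythagorean identity then yields
\[
\langle f, f \rangle_\mu \;=\; \langle Pf, Pf \rangle_\mu + \langle f - Pf, f - Pf\rangle_\mu \;\geq\; \langle Pf, Pf\rangle_\mu \;=\; \littlesum_{i=1}^m \langle f, f_i\rangle_\mu^2,
\]
where the last equality again uses orthonormality of the $f_i$. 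This is the desired inequality.

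For the corollary, I would apply the inequality to the function $f - h$ in place of $f$, obtaining $\sum_{i=1}^m \langle f - h, f_i\rangle_\mu^2 \leq \langle f-h, f-h\rangle_\mu = \mathbf{E}_{x \sim \mu}[(f(x)-h(x))^2]$. Since $f(x), h(x) \in \{-1,1\}$, the integrand equals $4$ when $f(x) \neq h(x)$ and $0$ otherwise, so the right-hand side is $4\,\Pr_{x \sim \mu}[f(x) \neq h(x)]$; taking square roots finishes the proof. The argument is essentially routine once the inner-product structure is in place, so I do not expect a genuine obstacle; the only point that requires a moment of care is verifying that $\mu(x) > 0$ on $\Omega$ is what makes $\langle \cdot, \cdot\rangle_\mu$ positive-definite (and hence justifies the Pythagorean step).
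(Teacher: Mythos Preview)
Your proof is correct. The paper states this result as a basic fact without proof, so there is nothing to compare against; your standard Bessel-inequality argument (project onto $\mathrm{span}(f_1,\dots,f_m)$, apply Pythagoras, then specialize to $f-h$ for the corollary) is exactly the intended justification. Your remark that the left-hand side should be read as $\langle f,f\rangle_\mu$ rather than $\langle f,f\rangle_\mu^2$ is also on point---the squared version is a typo in the statement, and only the unsquared Bessel inequality yields the stated corollary.
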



\section{\new{Analytic} Reformulation of Shapley values}
\label{sec:reformulation}

The definition of Shapley values given in Definition~\ref{def:shapley}
is somewhat cumbersome to work with.
In this section we derive alternate characterizations of
Shapley values in terms of ``Fourier coefficients''
and ``coordinate correlation coefficients''
 and establish various technical results relating
Shapley values and these coefficients;
these technical results will be crucially used 
in the proof of our main theorems.

There is a particular distribution $\mu$ that plays a central
role in our reformulations.  We start by defining this 
distribution $\mu$ and introducing some relevant notation, and then
give our results.  

\medskip

\noindent {\bf The distribution $\mu$.}
Let us define $\Lambda(n) \eqdef \sum_{0<k<n} \frac 1k + \frac1{n-k}$; clearly we have $\Lambda(n) = \Theta(\log n)$,
and more precisely we have $\Lambda(n) \leq 2 \log n.$
We also define $Q(n,k)$ as $Q(n,k) \eqdef \frac 1k + \frac1{n-k}$ for  $0<k<n$,
so we have $\Lambda(n) = \sum_{k=1}^{n-1} Q(n,k).$

For $x \in \{-1,1\}^n$ we write $\weight(x)$ to denote the number of $1$'s in $x$.
We define the set $B_n$ to be $B_n \eqdef \{x \in \{-1,1\}^n: 0 < \weight(x) < n\}$, i.e.,
$B_n=\{-1,1\}^n \setminus \{\mathbf{1}, \mathbf{-1}\}$.

The distribution $\mu$ is supported on $B_n$ and is defined as follows: to make a draw from $\mu$,
sample $k \in \{1,\ldots, n-1\}$ with probability $Q(n,k)/\Lambda(n)$. Choose $x \in \{-1,1\}^n$ uniformly at random from the $k$-{th} 
``weight level'' of  $\{-1,1\}^n$, i.e.,
from $\{-1,1\}^n_{=k} \eqdef \{x \in \{-1,1\}^n: \weight(x)=k\}.$

\medskip

\noindent {\bf Useful notation.}  For $i=0,\dots,n$ we define 
the ``coordinate correlation coefficients'' of a function $f:
\{-1,1\}^n \to \R$ (with
respect to $\mu$) as:
\begin{equation} \label{eq:ccc}
f^*(i) \eqdef \E_{x \sim \mu}[f(x) \cdot x_i]
\end{equation}
(here and throughout the paper $x_0$ denotes the constant 1).

Later in this section we will define an orthonormal set of linear functions
$L_0,L_1,\dots,L_n: \{-1,1\}^n \to \R$.  We define
the ``Fourier coefficients'' of $f$ (with respect to $\mu$) as:
\begin{equation} \label{eq:fouriercoeffs}
\hat{f}(i) \eqdef \E_{x \sim \mu}[f(x)  \cdot L_i(x)].
\end{equation}

\ignore{
We begin with some intuition before entering into the detailed
result of this section.
We first give expressions for the Shapley values of $f$ in terms
of the correlation coefficients $f^*(i)$ (Proposition~\ref{prop:expshap}).
We then define the desired set of functions $L_0 = 1, L_1,\ldots, L_n$
and show that they comprise an orthonormal set of functions under $\mu$
(Lemma~\ref{lem:ortho}).  
Then we show that
for any  function $f : \{-1,1\}^n \rightarrow \{-1,1\}$,
the Shapley value $\tilde{f}(i)$ will closely correspond to
$\hat{f}(i) \eqdef \braket{f,L_i}_\mu$, i.e., the ``$i$-th degree-1 Fourier coefficient of $f$
with respect to $\mu$'' (Lemma~\ref{lem:foushap}).  Next,
we define two different
distance notions between functions $f$ and $g$, the ``Fourier distance''
and the ``Shapley distance,''  and use the preceding results to upper bound
the Shapley distance in terms of the Fourier distance
(Lemma~\ref{lem:shapclose}).
Finally, we show that if the Fourier distance between an LTF $f$ and and
a bounded function $g$ is small and the affine form defining $f$ has
``good anti-concentration,'' then the $\ell_1$ distance (with
respect to distribution
$\mu$) must also be small.\rnote{Here or elsewhere, fit this into the
larger picture of our overall proof structure.}
}

\medskip


\noindent {\bf An alternative expression for the
Shapley values.}
We start by expressing the Shapley values in terms
of the coordinate correlation coefficients:

\begin{lemma}\label{lem:expshap}
Given $f : \{-1,1\}^n \rightarrow [-1,1]$, for each 
$i=1,\dots,n$ we have
$$
\tilde{f}(i) = \frac{f(\mathbf{1} ) - 
f(\mathbf{-1})}{n} + \frac{\Lambda(n)}{2} \cdot \left( f^*(i)  - {\frac 1 n}
\littlesum_{j=1}^n f^*(j)\right),
$$
or equivalently,
$$
f^*(i) =
{\frac 2{\Lambda(n)}} \cdot \left(\tilde{f}(i) - \frac{f(\mathbf{1} ) - 
f(\mathbf{-1})}{n}\right) + {\frac 1 n}
\littlesum_{j=1}^n f^*(j).
$$
\end{lemma}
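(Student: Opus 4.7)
The plan is to compute both sides of the identity by direct expansion from the definitions and then to exhibit a specific cancellation that arises from the choice of weights $Q(n,k) = n/(k(n-k))$ defining $\mu$. Introduce local notation: for each index $i \in [n]$ and $k \in \{0,\ldots,n-1\}$, let $A_k^{(i)} = \E_S[f(x^{S \cup \{i\}})]$ and $B_k^{(i)} = \E_S[f(x^S)]$, where $S$ is uniform over size-$k$ subsets of $[n]\setminus\{i\}$ and $x^T \in \{-1,1\}^n$ denotes the string with $+1$'s exactly at positions in $T$; also let $C_k = \E[f(x)]$ with $x$ uniform on $\{-1,1\}^n_{=k}$. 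Conditioning on $\pi(i) \in [n]$ (uniform, and making $\{j : \pi(j) < \pi(i)\}$ a uniform size-$(\pi(i)-1)$ subset of $[n]\setminus\{i\}$) turns Definition~\ref{def:shapley} into
$$\tilde{f}(i) = \frac{1}{n}\sum_{k=0}^{n-1}(A_k^{(i)} - B_k^{(i)}),$$
with boundary values $A_{n-1}^{(i)} = f(\mathbf{1})$ and $B_0^{(i)} = f(\mathbf{-1})$. Similarly, decomposing $\mu$ into weight levels and conditioning on $x_i$ yields $\E[f(x) x_i \mid \weight(x)=k] = \frac{k}{n}A_{k-1}^{(i)} - \frac{n-k}{n}B_k^{(i)}$; weighting by $Q(n,k)/\Lambda(n)$ and summing gives
$$f^*(i) = \frac{1}{\Lambda(n)}\sum_{k=1}^{n-1}\left[\frac{A_{k-1}^{(i)}}{n-k} - \frac{B_k^{(i)}}{k}\right].$$

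Next I would handle the average term $\frac{1}{n}\sum_j f^*(j)$. A straightforward double-counting argument, in which averaging $A_k^{(i)}$ over $i$ corresponds to uniformly drawing a size-$(k+1)$ set $T$ together with an element of $T$, shows that $\frac{1}{n}\sum_i A_k^{(i)} = C_{k+1}$, and similarly $\frac{1}{n}\sum_i B_k^{(i)} = C_k$; combined with the previous display this gives
$$\frac{1}{n}\sum_{j=1}^n f^*(j) = \frac{1}{\Lambda(n)}\sum_{k=1}^{n-1}C_k\left[\frac{1}{n-k} - \frac{1}{k}\right].$$
The algebraic heart of the argument is the subtraction. Using the identity $nC_k = kA_{k-1}^{(i)} + (n-k)B_k^{(i)}$ (a direct consequence of conditioning uniform weight-$k$ strings on $x_i$), one checks that
$$\frac{A_{k-1}^{(i)} - C_k}{n-k} \;=\; \frac{C_k - B_k^{(i)}}{k} \;=\; \frac{1}{n}(A_{k-1}^{(i)} - B_k^{(i)}).$$
These twin equalities — the reason $Q(n,k)$ is the ``right'' weight — collapse the difference of the two sums into
$$f^*(i) - \frac{1}{n}\sum_j f^*(j) = \frac{2}{n\Lambda(n)}\sum_{k=1}^{n-1}(A_{k-1}^{(i)} - B_k^{(i)}).$$

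Finally, re-indexing $k-1 \mapsto k$ and absorbing the boundary terms, the right-hand sum equals $\sum_{k=0}^{n-1}(A_k^{(i)} - B_k^{(i)}) - A_{n-1}^{(i)} + B_0^{(i)} = n\tilde{f}(i) - f(\mathbf{1}) + f(\mathbf{-1})$, which rearranges exactly into the second form of the lemma; the first form is obtained by solving for $\tilde{f}(i)$. The only place I anticipate any real friction is spotting the twin identity $\frac{A_{k-1}^{(i)}-C_k}{n-k} = \frac{C_k-B_k^{(i)}}{k}$; once that observation is made the entire computation is bookkeeping, and it explains why the harmonic weighting $\frac{1}{k}+\frac{1}{n-k}$ is precisely what reconstructs Shapley values from one-coordinate correlations.
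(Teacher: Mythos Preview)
Your argument is correct. It takes a genuinely different route from the paper's proof, though both are of course doing the same bookkeeping under the hood.

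The paper works pointwise: it writes $\tilde f(i)$ as a weighted sum $\sum_x c(x)\,f(x)x_i$, reads off the permutation-counting weights $(\weight(x)-1)!(n-\weight(x))!/n!$ and $\weight(x)!(n-\weight(x)-1)!/n!$ depending on the sign of $x_i$, and then uses the trick $n-2\weight(x)=-\sum_j x_j$ (together with $x_i^2=1$) to split the resulting expression into the two pieces $\E_\mu[f(x)x_i]$ and $\E_\mu[f(x)\cdot \frac1n\sum_j x_j]$. The distribution $\mu$ emerges because the pointwise weight turns out to be proportional to $P(n,k)=Q(n,k)/\binom{n}{k}$.

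You instead work at the level of weight slices via the conditional averages $A_k^{(i)},B_k^{(i)},C_k$, compute $f^*(i)$ and $\frac1n\sum_j f^*(j)$ separately, and then subtract. The heart of your proof is the twin identity
\[
\frac{A_{k-1}^{(i)}-C_k}{n-k}=\frac{C_k-B_k^{(i)}}{k}=\frac{1}{n}\bigl(A_{k-1}^{(i)}-B_k^{(i)}\bigr),
\]
which you correctly derive from $nC_k=kA_{k-1}^{(i)}+(n-k)B_k^{(i)}$. This is the same algebraic fact the paper is exploiting via $n-2\weight(x)=-\sum_j x_j$, just packaged at the level of slice expectations rather than individual points. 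What your framing buys is a clean explanation of \emph{why} the weights $Q(n,k)=\frac1k+\frac1{n-k}$ are the right ones: they are exactly what makes the two halves of the twin identity appear with equal coefficient. The paper's route is more direct but less illuminating on that point.
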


\begin{proof}
Recall that $\tilde{f}(i)$ can be expressed as follows:
\begin{equation}
\tilde{f}(i) = \E_{\pi \sim_R \ms}[f(x^+(\pi,i)) - f(x(\pi,i))]. \label{eq:shap}
\end{equation}
Since the $i$-th coordinate of $x^+(\pi,i)$ is $1$ and the $i$-th coordinate of
$x(\pi,i)$ is $-1$, we see that $\tilde{f}(i)$ is a weighted sum of $\{f(x) x_i \}_{x \in \{-1,1\}^n}$. We now compute the weights associated with any such $x \in \{-1,1\}^n$.

\begin{itemize}
\item Let $x$ be a string that has $\weight(x)$ coordinates that are 1 and has $x_i=1.$  Then the total number of permutations $\pi \in \ms$ such that $\xip = x$ is $(\weight(x)-1)! (n-\weight(x))!$. Consequently the weight associated with $f(x) x_i$ for such an $x$ is $(\weight(x)-1)!
    \cdot  (n-\weight(x))! /n!$.
\item Now let $x$ be a string that has $\weight(x)$ coordinates that are 1 and has $x_i=-1.$  Then the total number of permutations $\pi \in \ms$ such that $\xop = x$ is $\weight(x)! (n-\weight(x)-1)!$. Consequently the weight associated with $f(x) x_i$ for such an $x$ is $\weight(x)! \cdot (n-\weight(x)-1)!/n!$.
\end{itemize}
Thus we may rewrite Equation~(\ref{eq:shap}) as
\begin{eqnarray*}
\tilde{f}(i) &=& \sum_{x: \{-1,1\}^n : x_i =1} \frac{(\weight(x)-1)! (n-\weight(x))!}{n!} f(x) \cdot x_i +\\
& &  \sum_{x: \{-1,1\}^n : x_i =-1} \frac{\weight(x)! (n-\weight(x)-1)!}{n!} f(x) \cdot x_i.
\end{eqnarray*}
Let us now define $\nu(f) \eqdef (f(\mathbf{1}) - f(\mathbf{-1}))/n$.   Using the fact that $x_i^2=1$, it is easy to see that one gets
\begin{eqnarray}
2\tilde{f}(i) &=& 2 \nu(f) + \nonumber \\
& &2 \left( \sum_{x \in B_n} f(x) \cdot \frac{(\weight(x)-1)! (n-\weight(x)-1)!}{n!} \cdot ((n/2-\weight(x))  + (nx_i)/2) \right)  \nonumber \\
&=& 2 \nu(f) + \sum_{x \in B_n} \left(f(x) \cdot \frac{(\weight(x)-1)! (n-\weight(x)-1)!}{(n-1)!} \cdot x_i + \right.
\nonumber \\
& & \left.\text{~~~~~~~~~~~~~~~~~~~~~~~~~}f(x) \cdot \frac{(\weight(x)-1)! (n-\weight(x)-1)!}{n!} \cdot (n-2\weight(x))\right) \nonumber \\
&=& 2 \nu(f) + \sum_{x \in B_n} \left(f(x) \cdot \frac{n}{\weight(x)(n-\weight(x))\binom{n}{\weight(x)}} \cdot x_i + \right. \nonumber \\
& &\left.\text{~~~~~~~~~~~~~~~~~~~~~~~~~} f(x) \cdot \frac{1}{\weight(x)(n-\weight(x)) \binom{n}{\weight(x)}} \cdot (n-2\weight(x))\right). \label{eq:shap2}
\end{eqnarray}
We next observe that $n-2\weight(x) =-( \littlesum_{j \in [n]} x_j)$.
Next, let us define \new{$P(n, k)$ (for $k \in [1, n-1]$)} as follows :
\new{$$
P(n, k ) \eqdef \frac{Q(n, k)}{\binom{n}{k}} = 
\frac{\frac{1}{k} + \frac{1}{n-k}}{\binom{n}{k}}. $$}
So we may rewrite Equation~(\ref{eq:shap2}) in terms of $P(n,\weight(x))$ as
$$
2 \tilde{f}(i) =2 \nu(f) +  \sum_{x \in B_n} \left[ f(x) \cdot x_i \cdot P(n,\weight(x)) \right] - \sum_{x \in B_n} \left[ f(x) \cdot P(n,\weight(x)) \cdot (\littlesum_{i=1}^n x_i)/n \right].
$$
We have $$\sum_{x \in B_n} P(n, \weight(x)) = \new{ \sum_{k=1}^{n-1} \sum_{x \in \bn_{=k}} P(n, \weight(x)) = \sum_{k=1}^{n-1} \binom{n}{k} \cdot P(n, k)=  \sum_{k=1}^{n-1} Q(n, k) }  = \Lambda(n),$$ and consequently we get
$$
2 \tilde{f}(i) = 2 \nu(f) + \Lambda(n) \cdot \left( \mathop{\mathbf{E}}_{x \sim \mu} \left[ f(x) \cdot x_i \right] - \mathop{\mathbf{E}}_{x \sim \mu} \left[ f(x) \cdot (\littlesum_{i=1}^n x_i)/n\right] \right),
$$
finishing the proof.
\end{proof}

\medskip

\noindent
{\bf Construction of a Fourier basis for distribution $\mu$.}
For all $x \in B_n$ we have that $\mu(x)>0$, and consequently by 
Fact~\ref{fac:func1}
we know that the functions $1, x_1,\dots,x_{n+1}$ 
form a basis for the subspace of linear functions from $B_n \rightarrow 
\mathbb{R}$. By Gram-Schmidt orthogonalization, we can obtain an orthonormal 
basis $L_0,\dots,L_n$ for this subspace, i.e.,  a set of linear functions such that
$\braket{L_i,L_i}_\mu =1 \text{~for all~}i$ and $\braket{L_i,L_j}_{\mu} = 0
$ for all $i \neq j.$

We now give explicit expressions for these basis functions.
We start by defining $L_0 :B_n \rightarrow \mathbb{R}$ as $L_0 : x \mapsto 1$. 
Next, by symmetry, we can express each $L_i$ as
$$
L_i(x) =\alpha(x_1 + \ldots +x_n) + \beta x_i.
$$
Using the orthonormality properties it is straightforward to solve for $\alpha$ 
and $\beta$.  The following Lemma
gives the values of $\alpha$ and $\beta$:
\begin{lemma}\label{lem:ortho}
For the choices
$$
\alpha \eqdef \frac{1}{n} \cdot \left( \sqrt{\frac{\Lambda(n)}{n \Lambda(n)  - 4 (n-1)}}  -\frac{\sqrt{\Lambda(n)}}{2} \right), \quad  \beta \eqdef \frac{\sqrt{\Lambda(n)}}{2},$$
the set $\{L_i \}_{i=0}^n$ is an orthonormal set of linear functions under the distribution $\mu$.
\end{lemma}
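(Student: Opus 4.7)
The plan is to reduce the orthonormality verification to a short calculation of the first two moments of $\mu$, after which the stated values of $\alpha$ and $\beta$ emerge from a clean difference-of-squares substitution. Conditioning on the weight $k$ of a sample from $\mu$, symmetry gives $\E[x_i \mid \weight(x) = k] = (2k - n)/n$ and, for $i \neq j$, $\E[x_i x_j \mid \weight(x) = k] = ((2k-n)^2 - n)/(n(n-1))$. Averaging against $Q(n,k)/\Lambda(n)$, the first-moment average vanishes by the $k \leftrightarrow n-k$ symmetry of $Q(n,k)$, giving $\E_\mu[x_i] = 0$. For the pairwise moment, I would use $(2k-n)^2 = n^2 - 4k(n-k)$ together with the identity $Q(n,k) \cdot k(n-k) = n$ to obtain $\sum_{k=1}^{n-1} Q(n,k)(2k-n)^2 = n^2 \Lambda(n) - 4n(n-1)$, which yields
\[
\E_\mu[x_i x_j] = 1 - \frac{4}{\Lambda(n)} =: c.
\]

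With these moments in hand, orthonormality becomes transparent. The relation $\braket{L_0, L_0}_\mu = 1$ is immediate, and $\braket{L_0, L_i}_\mu = \alpha \E_\mu[S] + \beta \E_\mu[x_i] = 0$ where $S = x_1 + \cdots + x_n$. The key observation for the pairs $(L_i, L_j)$ with $i, j \geq 1$ is the identity
\[
\braket{L_i, L_i}_\mu - \braket{L_i, L_j}_\mu = \beta^2 (1 - c) = \beta^2 \cdot \frac{4}{\Lambda(n)},
\]
so the choice $\beta = \sqrt{\Lambda(n)}/2$ makes this difference equal to $1$. Consequently, once $\braket{L_i, L_i}_\mu = 1$ is established, orthogonality of $L_i$ and $L_j$ for $i \neq j$ is automatic, and all that remains is to pick $\alpha$ so that $L_i$ has unit norm.

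For this last step I would expand, using $\E_\mu[S x_i] = 1 + (n-1)c$ and $\E_\mu[S^2] = n(1 + (n-1)c)$, to get $\braket{L_i, L_i}_\mu = \alpha(\alpha n + 2\beta)(1 + (n-1)c) + \beta^2$. Writing $A = \alpha n + \beta$, the product $\alpha n (\alpha n + 2\beta)$ equals $(A - \beta)(A + \beta) = A^2 - \beta^2$, so the equation $\braket{L_i, L_i}_\mu = 1$ becomes a single quadratic in $A$. Substituting $c = 1 - 4/\Lambda(n)$ and $\beta^2 = \Lambda(n)/4$ and simplifying gives $A^2 = \Lambda(n)/(n\Lambda(n) - 4(n-1))$, from which the stated formula $\alpha = (A - \beta)/n$ follows exactly. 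The main obstacle is purely algebraic bookkeeping rather than any conceptual hurdle; the difference-of-squares substitution $A = \alpha n + \beta$ is what keeps the final computation tidy and lets $\alpha$ be read off without directly solving a messy quadratic in $\alpha$.
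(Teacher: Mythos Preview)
Your proposal is correct and follows essentially the same route as the paper: compute the pairwise correlation $\E_\mu[x_ix_j]=1-4/\Lambda(n)$, use the difference $\braket{L_i,L_i}_\mu-\braket{L_i,L_j}_\mu=\beta^2(1-c)$ to pin down $\beta$, and then determine $\alpha n+\beta$ from a normalization condition. The only notable differences are cosmetic: your derivation of $c$ via $(2k-n)^2=n^2-4k(n-k)$ and the identity $Q(n,k)\,k(n-k)=n$ is cleaner than the paper's direct case analysis in Proposition~\ref{prop:corr}, and you solve $\braket{L_i,L_i}_\mu=1$ for $A=\alpha n+\beta$ directly, whereas the paper instead uses $\E_\mu\big[(\sum_i L_i)^2\big]=n$ together with $\sum_i L_i=(\alpha n+\beta)S$ to reach the same equation for $A$.
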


\noindent We note for later reference that $\alpha = - \Theta\left({\frac {\sqrt{\log n}} n}\right)$
and $\beta = \Theta(\sqrt{\log n}).$

We start with the following proposition which gives an explicit expression for $\mathop{\mathbf{E}}_{x \sim \mu} [x_ix_j]$ when $i \not =j$;
we will use it in the proof of Lemma~\ref{lem:ortho}.

\begin{proposition}\label{prop:corr}
For all  $1\le i < j \le n$ we have 
$\E_{x \sim \mu} [x_ix_j]  = 1 - \frac{4}{\Lambda(n)}$.
\end{proposition}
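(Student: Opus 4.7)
The plan is to unwind the two-stage description of $\mu$ and then do a direct calculation with elementary binomial identities. By the definition of $\mu$, a sample $x \sim \mu$ is produced by first drawing a weight $k \in \{1,\ldots,n-1\}$ with probability $Q(n,k)/\Lambda(n)$ and then drawing $x$ uniformly from the level set $\{-1,1\}^n_{=k}$. The law of total expectation therefore gives
\[
\E_{x\sim\mu}[x_i x_j] \;=\; \frac{1}{\Lambda(n)}\sum_{k=1}^{n-1} Q(n,k)\cdot \E_k[x_i x_j],
\]
where $\E_k[\,\cdot\,]$ denotes expectation under the uniform distribution on $\{-1,1\}^n_{=k}$.

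Next I would compute the inner expectation $\E_k[x_i x_j]$ by exploiting the fact that $\sum_\ell x_\ell = 2k-n$ is constant on the $k$-th slice. Squaring this identity yields $(2k-n)^2 = n + 2\sum_{\ell<m} x_\ell x_m$, and the permutation symmetry of $\{-1,1\}^n_{=k}$ ensures that every pair has the same expectation, so
\[
\E_k[x_i x_j] \;=\; \frac{(2k-n)^2 - n}{n(n-1)}.
\]

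The one algebraic step that makes the whole sum collapse is the identity $(n-2k)^2 = n^2 - 4k(n-k)$, which rewrites the numerator as $n(n-1) - 4k(n-k)$. Combined with the fact that $Q(n,k) = n/(k(n-k))$, this causes the $k(n-k)$ factors to cancel and the summand simplifies to
\[
Q(n,k)\cdot \E_k[x_i x_j] \;=\; Q(n,k) - \frac{4}{n-1}.
\]
Summing over $k \in \{1,\ldots,n-1\}$, the first term contributes $\sum_k Q(n,k) = \Lambda(n)$ by the very definition of $\Lambda(n)$, and the second term contributes $(n-1)\cdot 4/(n-1) = 4$. Dividing by $\Lambda(n)$ yields the claimed value $1 - 4/\Lambda(n)$.

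There is no real obstacle here; the entire argument is a short direct computation. The only step requiring any care is the rewriting $(2k-n)^2 - n = n(n-1) - 4k(n-k)$, whose purpose is precisely to cancel the $k(n-k)$ in the denominator of $Q(n,k)$ so that the harmonic-type sum defining $\Lambda(n)$ appears cleanly. One could instead compute $\sum_k 1/(k(n-k))$ separately via partial fractions, but pairing it against $Q(n,k)$ avoids that detour.
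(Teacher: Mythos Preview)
Your proof is correct. Both you and the paper begin by conditioning on the weight level $k$ and then summing, but the two computations of the slice expectation $\E_k[x_ix_j]$ and the subsequent summation are handled quite differently.

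The paper computes $\E_k[x_ix_j]$ by explicit binomial counting, writing it as $\binom{n}{k}^{-1}\bigl(2\binom{n-2}{k-2}+2\binom{n-2}{k}-\binom{n}{k}\bigr)$ for $2\le k\le n-2$ and treating the edge cases $k\in\{1,n-1\}$ separately; the ensuing sum is then simplified through a page of factorial manipulations. Your squaring trick $(\sum_\ell x_\ell)^2=n+2\sum_{\ell<m}x_\ell x_m$ gives the slice expectation uniformly for all $k$ in one line, and your observation that $(2k-n)^2-n=n(n-1)-4k(n-k)$ together with $Q(n,k)=n/(k(n-k))$ collapses the weighted sum to $Q(n,k)-4/(n-1)$ without any case analysis. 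This is a cleaner and shorter route: the paper's approach is a direct combinatorial count, while yours exploits the algebraic structure (constant row sum on a slice plus symmetry) to avoid binomial identities entirely.
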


\begin{proof}
For brevity let us write $A_k = \{-1,1\}^n_{=k}$, i.e.,  $A_k = \{x \in \{-1,1\}^n : \weight(x)=k\}$, the
 $k$-th ``slice'' of the hypercube.  Since $\mu$ is supported on $B_n = \cup_{k=1}^{n-1} A_k$, we have 
\[
\E_{x \sim \mu} [x_ix_j ] = \sum_{0<k<n} \mathop{\mathbf{E}}_{x \sim \mu} [x_ix_j\ | \ x \in A_k] \cdot \Pr_{x \sim \mu} [x \in A_k].
\]
If  $k=1$ or $n-1$, it is clear that
$$
\E_{x \sim \mu} [x_ix_j \ | \ x \in A_k] = 1- \frac{2}{n} - \frac{2}{n} = 1- \frac{4}{n},
$$
and when $2 \le k \le n-2$, we have
$$
\E_{x \sim \mu} [x_ix_j \ | \ x \in A_k]  = \frac{1}{\binom{n}{k}} \cdot \left( 2 \binom{n-2}{k-2} + 2 \binom{n-2}{k}  - \binom{n}{k} \right).
$$
Recall that $\Lambda(n) =  \sum_{0<k<n} \frac{1}{k} + \frac{1}{n-k}$ and $Q(n,k) =\frac{1}{k} + \frac{1}{n-k}$ for $0 < k < n.$  This means that
we have $$\Pr_{x \sim \mu}[x \in A_k] = Q(n,k)/\Lambda(n).$$
Thus we may write $\E_{x \sim \mu}[x_i x_j]$ as
\begin{eqnarray*}
\E_{x \sim \mu} [x_ix_j] &=& \sum_{2 \le k \le n-2} \frac{Q(n,k)}{\Lambda(n)}  \cdot \E_{x \sim \mu} [x_ix_j \ | \ x \in A_k]  + \\
& &\sum_{k \in \{1,n-1\}} \frac{Q(n,k)}{\Lambda(n)}  \cdot 
\E_{x \sim \mu} [x_ix_j \ | \ x \in A_k].
\end{eqnarray*}
For the latter sum, we have
$$
\sum_{k \in \{1,n-1\}} \frac{Q(n,k)}{\Lambda(n)}  \cdot \E_{x \sim \mu} [x_ix_j \ | \ x \in A_k] =\frac{1}{ \Lambda(n)}   \left(1- \frac{4}{n} \right) \cdot \frac{2n}{n-1}.
$$
For the former, we can write
\begin{eqnarray*}
&& \sum_{k=2}^{n-2} \frac{Q(n,k)}{\Lambda(n)}  \cdot \E_{x \sim \mu} [x_ix_j \ | \ x \in A_k]\\ 
&=& \sum_{k=2}^{n-2}\frac{1}{\Lambda(n)} \frac{(k-1)! (n-k-1)!}{(n-1)!}\cdot \left( 2 \binom{n-2}{k-2} + 2 \binom{n-2}{k}  - \binom{n}{k} \right)  \\
&=& \sum_{k=2}^{n-2} \frac{1}{\Lambda(n)} \cdot \left( \frac{2(k-1)}{(n-1)(n-k)}  +\frac{2(n-k-1)}{(n-1)k} -\frac{n}{k(n-k)}\right) \\
&=& \sum_{k=2}^{n-2}\frac{1}{\Lambda(n)} \cdot \left( \frac{2}{n-k} - \frac{2}{n-1} + \frac{2}{k} - \frac{2}{n-1} - \frac{1}{k} - \frac{1}{n-k} \right) \\
&=& \sum_{k=2}^{n-2}\frac{1}{\Lambda(n)} \cdot \left( \frac{1}{n-k}  + \frac{1}{k} - \frac{4}{n-1} \right).
\end{eqnarray*}
Thus, we get that overall
$\E_{x \sim \mu} [x_ix_j ]$ equals
\begin{eqnarray*}
 &&\frac{1}{ \Lambda(n)}  \left(1- \frac{4}{n} \right) \cdot \frac{2n}{n-1} + \sum_{k=2}^{n-2}\frac{1}{\Lambda(n)} \cdot \left( \frac{1}{n-k}  + \frac{1}{k} - \frac{4}{n-1} \right) \\
&=& \frac{1}{\Lambda(n)} \left(  2 + \frac{2}{n-1} - \frac{8}{n-1} \right) +\frac{1}{\Lambda(n)} \left( \sum_{k=2}^{n-2} \frac 1k + \frac{1}{n-k}  \right)  -\frac{4}{\Lambda(n)} + \frac{8}{\Lambda(n) (n-1)} \\
&=& \frac{1}{\Lambda(n)} \left( \sum_{k=1}^{n-1} Q(n,k) \right) -\frac{4}{\Lambda(n)} = 1 - \frac{4}{\Lambda(n)},
 \end{eqnarray*}
as was to be shown.
\end{proof}

\medskip

\begin{proof}[Proof of Lemma~\ref{lem:ortho}]
We begin by observing that $$\mathbf{E}_{x \sim \mu} [L_i (x) L_0(x) ]  = \mathbf{E}_{x \sim \mu} [L_i(x)] = \mathbf{E}_{x \sim \mu} [\alpha(x_1 + \ldots +x_n) + \beta x_i ] =0$$
since $\mathbf{E}_{x \sim \mu} [x_i]=0$.  Next, we solve for $\alpha$ and $\beta$ using the orthonormality conditions on the set $\{L_i \}_{i=1}^n$.
As $\mathbf{E}_{x \sim \mu} [L_i (x) L_j(x) ] =0$ and $\mathbf{E}_{x \sim \mu} [L_i(x) L_i(x)]=1$, we get that $\mathbf{E}_{x \sim \mu} [L_i(x) (L_i(x)-L_j(x))]=1$.  This gives
\begin{eqnarray*}
\mfed [L_i(x) \cdot (L_i(x) - L_j(x))] &=& \mfed [L_i(x) \cdot \beta (x_i -x_j) ]\\
& =& \mfed [ \beta ((\alpha + \beta) x_i + \alpha x_j) \cdot (x_i -x_j)]  \\
&=& \alpha \beta + \beta^2  - \alpha \beta  - \beta^2 \mfed [x_j x_i]  \\ &=& \beta^2 (1 - \mfed [x_ix_j]) = 4 \beta^2 / \Lambda(n) =1,
\end{eqnarray*}
where the penultimate equation above uses Proposition~\ref{prop:corr}. Thus, we have shown that  $\beta = \frac{\sqrt{\Lambda(n)}}{2}$.  To solve for $\alpha$, we note that
$$
\sum_{i=1}^n L_i(x) = (\alpha n + \beta)(x_1  + \ldots +x_n).
$$
However, since the set $\{L_i \}_{i=1}^n$ is orthonormal with respect to the distribution $\mu$, we get that
\begin{eqnarray*}
& \mfed [(L_1(x) + \ldots + L_n(x))(L_1(x) + \ldots +L_n(x)) ] =n
\end{eqnarray*}
and consequently
\begin{eqnarray*}
(\alpha n + \beta)^2 \ \mfed [ (x_1 + \ldots +x_n ) (x_1 + \ldots +x_n) ] =n
\end{eqnarray*}
Now, using Proposition~\ref{prop:corr}, we get
\begin{eqnarray*}
\mfed [ (x_1 + \ldots +x_n ) (x_1 + \ldots +x_n) ] &=& \sum_{i=1}^n \mfed [x_i^2] + \sum_{i \not =j } \mfed [x_i x_j]  \\
&=& n + n(n-1) \cdot \left( 1- \frac{4}{\Lambda(n)} \right)
\end{eqnarray*}
Thus, we get that
$$
(\alpha n + \beta)^2 \cdot \left( n + n(n-1) \cdot \left( 1- \frac{4}{\Lambda(n)} \right) \right)
= n.
$$
Simplifying further,
\begin{eqnarray*}
 (\alpha n + \beta) = \sqrt{\frac{\Lambda(n)}{n \Lambda(n)  - 4 (n-1)}}
\end{eqnarray*}
and thus
$$
\alpha = \frac{1}{n} \cdot \left( \sqrt{\frac{\Lambda(n)}{n \Lambda(n)  - 4 (n-1)}}  -\frac{\sqrt{\Lambda(n)}}{2} \right)
$$
as was to be shown.
\end{proof}

\medskip

\noindent {\bf Relating the Shapley values to the Fourier coefficients.}
\ignore{
Let us fix some useful notation.
For any $f : \{-1,1\}^n \rightarrow \mathbb{R}$ and any $i=0,\dots,n$,
let us define $\hat{f}(i) \eqdef \langle f, L_i \rangle_\mu = \mfed [ f(x) \cdot L_i(x)]$ and
$f^*(i) \eqdef  \langle f, x_i \rangle_\mu =  \mfed [ f(x) x_i]$.  We may view $\hat{f}(i)$ as the
``$i$-th Fourier coefficient of $f$ with respect to $\mu$.''
} The next lemma gives a useful expression for $\hat{f}(i)$ in terms 
of $\tilde{f}(i)$:

\begin{lemma}\label{lem:foushap}
Let $f : \{-1,1\}^n \rightarrow [-1,1]$ be any bounded function. Then for each
$i=1,\dots,n$ we have
$$\hat{f}(i) =\frac{2\beta}{\Lambda(n)}\cdot  \left(\tilde{f}(i)  
-\frac{f(\mathbf{1}) - f(\mathbf{-1})}{n} \right) + \frac{1}{n}  
\cdot \littlesum_{j=1}^n \hat{f}(j).
$$
\end{lemma}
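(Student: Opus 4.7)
The plan is to chain together two ingredients: the already-proved Lemma~\ref{lem:expshap}, which relates $f^*(i)$ to $\tilde{f}(i)$, and the explicit form of the basis function $L_i(x) = \alpha(x_1+\cdots+x_n) + \beta x_i$ from Lemma~\ref{lem:ortho}, which relates $\hat{f}(i)$ to the coordinate correlation coefficients $f^*(1),\dots,f^*(n)$. The identity we want is purely algebraic once these two pieces are in place, so no new analytic estimates are required.

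First, I would unfold the definition $\hat{f}(i) = \E_{x\sim\mu}[f(x)L_i(x)]$ using linearity of expectation:
\[
\hat{f}(i) \;=\; \alpha \sum_{j=1}^n \E_{x\sim\mu}[f(x)x_j] \;+\; \beta\,\E_{x\sim\mu}[f(x)x_i] \;=\; \beta f^*(i) + \alpha\sum_{j=1}^n f^*(j).
\]
Summing this over $i\in[n]$ gives $\sum_{j=1}^n \hat{f}(j) = (\beta + n\alpha)\sum_{j=1}^n f^*(j)$, so that
\[
\alpha\sum_{j=1}^n f^*(j) \;=\; \frac{\alpha}{\beta+n\alpha}\sum_{j=1}^n \hat{f}(j).
\]
This is the right-hand-side aggregate coefficient we will want to massage into $\frac{1}{n}\sum_{j=1}^n \hat{f}(j)$ together with the contribution from $\beta f^*(i)$.

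Next, I would invoke Lemma~\ref{lem:expshap} to rewrite $f^*(i)$ in terms of $\tilde{f}(i)$:
\[
\beta f^*(i) \;=\; \frac{2\beta}{\Lambda(n)}\left(\tilde{f}(i) - \frac{f(\mathbf{1}) - f(\mathbf{-1})}{n}\right) \;+\; \frac{\beta}{n}\sum_{j=1}^n f^*(j).
\]
Substituting this into the expansion of $\hat{f}(i)$ gives
\[
\hat{f}(i) \;=\; \frac{2\beta}{\Lambda(n)}\left(\tilde{f}(i) - \frac{f(\mathbf{1})-f(\mathbf{-1})}{n}\right) \;+\; \left(\alpha + \frac{\beta}{n}\right)\sum_{j=1}^n f^*(j).
\]
Since $\alpha + \beta/n = (\beta+n\alpha)/n$, the last term equals $\frac{1}{n}\sum_{j=1}^n \hat{f}(j)$ by the aggregate identity above, which yields exactly the claimed formula.

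There is no real obstacle here: the only thing one has to be careful about is the bookkeeping of the $\alpha$ and $\beta$ coefficients, and in particular noticing that the algebraic combination $\alpha + \beta/n$ collapses neatly to $(\beta+n\alpha)/n$, which is precisely the scaling that appeared when summing $\hat{f}(j)$ over $j$. This "coincidence" is really forced by the fact that $L_0\equiv 1$ is orthogonal to each $L_i$ under $\mu$, so the symmetric part of $\sum_i L_i$ has to balance against $\alpha\sum_j f^*(j)$ in exactly this way.
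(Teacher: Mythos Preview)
Your proof is correct and follows essentially the same approach as the paper: both expand $\hat{f}(i)=\alpha\sum_j f^*(j)+\beta f^*(i)$, sum over $i$ to get $\sum_j\hat{f}(j)=(\alpha n+\beta)\sum_j f^*(j)$, and combine with Lemma~\ref{lem:expshap}. Your ordering of substitutions is slightly more direct than the paper's (which first inverts to express $f^*(i)$ in terms of the $\hat{f}(j)$'s and then plugs into the Shapley relation), but the ingredients and the key algebraic identity are identical.
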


\begin{proof}
Lemma~\ref{lem:ortho} gives us that
$L_i(x) = \alpha (x_1 + \ldots + x_n) + \beta x_i$, and thus we have
\begin{eqnarray}\label{eq:exp1}
 \hat{f}(i) \equiv \mfed [f(x)\cdot L_i(x)] &=&
\alpha  \left(\sum_{j=1}^n \mfed [ f(x) \cdot x_j]
 \right)  + \beta \mfed [ f(x) \cdot x_i] \nonumber \\
& = & \alpha \sum_{j=1}^n f^*(j) + \beta f^*(i).
 \end{eqnarray}
 Summing this for $i=1$ to $n$, we get that
 \begin{eqnarray}
 \sum_{j=1}^n \hat{f}(j) = (\alpha n + \beta)  \sum_{j=1}^n f^*(j). \label{eq:summed}
 \end{eqnarray}
Plugging this into (\ref{eq:exp1}), we get that
 \begin{equation}\label{eq:exp2}
 f^{*}(i) = \frac 1\beta \cdot \left( \hat{f}(i) -\frac{\alpha}{\alpha n + \beta}\cdot \sum_{j=1}^n  \hat{f}(j) \right)
 \end{equation}
Now recall that from Lemma~\ref{lem:expshap}, we have
\begin{eqnarray*}
\tilde{f}(i) &=& \nu(f) + \frac{\Lambda(n)}{2} \cdot \left( \mathop{\mathbf{E}}_{x \sim \mu}\left[ f(x) \cdot x_i \right] - \mathop{\mathbf{E}}_{x \sim \mu} \left[ f(x) \cdot (\littlesum_{i=1}^n x_i) / n \right] \right)\\
&=& \nu(f) +\frac{\Lambda(n)}{2} \cdot \left(f^*(i) - \frac{ \sum_{j=1}^n f^*(j)}{n} \right)
\end{eqnarray*}
where $\nu(f) = (f(\mathbf{1}) - f(\mathbf{-1}))/n$.
 Hence, combining the above with (\ref{eq:summed}) and~(\ref{eq:exp2}), we get  $$
 \frac 1\beta \cdot \left( \hat{f}(i) -\frac{\alpha}{\alpha n + \beta}\cdot \sum_{j=1}^n \hat{f}(j) \right) = \frac{2}{\Lambda(n)} \cdot (\tilde{f}(i) -\nu(f)) + \frac 1{n(\alpha n + \beta)} \cdot \sum_{j=1}^n \hat{f}(j).
 $$
 From this, it follows that
 $$
 \frac 1\beta \cdot  \hat{f}(i) = \frac{2}{\Lambda(n)} \cdot  (\tilde{f}(i)  -\nu(f)) +\frac{1}{\alpha n + \beta} \cdot  \left( \frac{1}{n} + \frac \alpha\beta \right)\cdot \sum_{j=1}^n \hat{f}(j),
 $$
and hence
$$
\hat{f}(i) =\frac{2\beta}{\Lambda(n)}\cdot  (\tilde{f}(i) -\nu(f))  + \frac{1}{n}  \cdot \sum_{j=1}^n \hat{f}(j)
$$ as desired.
\end{proof}

\medskip

\noindent {\bf Bounding Shapley distance in terms of Fourier distance.}
Recall that the Shapley distance $\dshap(f,g)$ between $f,g: \{-1,1\}^n 
\to [-1,1]$ is defined as $\dshap(f,g) \eqdef \sqrt{\littlesum_{i=1}^n (\tilde{f}(i) - 
\tilde{g}(i))^2}.$ We define the \emph{Fourier distance} between $f$ and $g$ 
as
$\dfour(f,g) \eqdef
\sqrt{\littlesum_{i=0}^n (\hat{f}(i) - \hat{g}(i))^2}.
$

Our next lemma shows that if the Fourier distance between $f$ and $g$ is small then so is
the Shapley distance.
\begin{lemma}\label{lem:shapclose}
Let $f, g : \{-1,1\}^n \rightarrow [-1,1]$. Then,
$$
\dshap(f,g) \le \frac{4}{\sqrt{n}} + \frac{\Lambda(n)}{2\beta} \cdot \dfour(f,g).
$$
\end{lemma}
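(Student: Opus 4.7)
The plan is to invert the identity from Lemma~\ref{lem:foushap} to express $\tilde{f}(i)-\tilde{g}(i)$ in terms of the Fourier coefficient differences, and then bound the resulting sum of squares by the triangle inequality.

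First, I would solve the relation in Lemma~\ref{lem:foushap} for $\tilde{f}(i)$, obtaining
$$
\tilde{f}(i) \;=\; \frac{\Lambda(n)}{2\beta}\left(\hat{f}(i)-\frac{1}{n}\sum_{j=1}^n \hat{f}(j)\right) + \frac{f(\mathbf{1})-f(\mathbf{-1})}{n},
$$
and similarly for $g$. Subtracting, with $\delta_i \eqdef \hat{f}(i)-\hat{g}(i)$, $\bar\delta \eqdef \frac{1}{n}\sum_{j=1}^n \delta_j$, and $c \eqdef \frac{(f(\mathbf{1})-g(\mathbf{1}))-(f(\mathbf{-1})-g(\mathbf{-1}))}{n}$, I get
$$
\tilde{f}(i)-\tilde{g}(i) \;=\; \frac{\Lambda(n)}{2\beta}(\delta_i-\bar\delta) + c.
$$

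Next, applying Minkowski's inequality (triangle inequality for the $\ell_2$ norm over $i\in[n]$), I would obtain
$$
\dshap(f,g) \;\le\; \frac{\Lambda(n)}{2\beta}\sqrt{\littlesum_{i=1}^n(\delta_i-\bar\delta)^2} \;+\; \sqrt{\littlesum_{i=1}^n c^2}.
$$
For the second term, since $f,g$ take values in $[-1,1]$ we have $|c|\le 4/n$, so $\sqrt{n c^2} \le 4/\sqrt{n}$. For the first term, I would use the standard identity $\sum_{i=1}^n(\delta_i-\bar\delta)^2 = \sum_{i=1}^n \delta_i^2 - n\bar\delta^2 \le \sum_{i=0}^n \delta_i^2 = \dfour(f,g)^2$ (noting that including the $i=0$ coordinate can only increase the sum). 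Combining these two bounds yields the claimed inequality.

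There is no serious obstacle here — the argument is essentially a short linear-algebraic manipulation using the already-proved Lemma~\ref{lem:foushap}. The only minor point to be careful about is the centering term: one must observe that subtracting the mean strictly decreases the sum of squares, which lets us drop $\bar\delta$ cleanly without picking up extra factors.
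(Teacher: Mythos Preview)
Your proof is correct and follows essentially the same approach as the paper: invert Lemma~\ref{lem:foushap}, apply the triangle inequality, bound the constant term by $4/\sqrt{n}$, and observe that centering $\delta$ can only decrease the $\ell_2$ norm. The paper phrases this last step as an orthogonal projection (writing $w=w'-w'_{\mathbf{e}}$), but that is equivalent to your identity $\sum_i(\delta_i-\bar\delta)^2=\sum_i\delta_i^2-n\bar\delta^2$.
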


\begin{proof}
Let $\nu(f)= (f(\mathbf{1}) - f(\mathbf{-1}))/n$ and $\nu(g) =
(g(\mathbf{1}) - g(\mathbf{-1}))/n$.
From Lemma~\ref{lem:foushap}, we have that for all $1 \le i \le n$,
$$
\frac{\Lambda(n)}{2\beta} \cdot \left(\hat{f}(i)  - \frac{\sum_{j=1}^n \hat{f}(j)}{n}\right) + \nu(f) = \tilde{f}(i).
$$
Using a similar relation for $g$, we get that for every $1 \le i \le n$,
$$
\frac{\Lambda(n)}{2\beta} \cdot \left(\hat{f}(i)  - \frac{\sum_{j=1}^n
\hat{f}(j)}{n} - \hat{g}(i) + \frac{\sum_{j=1}^n \hat{g}(j)}{n} \right)
+ \nu(f) -\nu(g) = \tilde{f}(i) -\tilde{g}(i).
$$
We next define the following vectors:
let $v \in \mathbb{R}^n$ be defined by $v_i = \tilde{f}(i) - \tilde{g}(i)$, $i \in [n]$
(so our goal is to bound $\|v\|_2$).
Let $u \in \mathbb{R}^n$ be defined by $u_i = \nu(f) - \nu(g)$, $i \in [n]$. Finally,
let $w \in \mathbb{R}^n$ be defined by
$$
w_i = \left(\hat{f}(i)  - \frac{\sum_{j=1}^n \hat{f}(j)}{n} - \hat{g}(i) +
\frac{\sum_{j=1}^n \hat{g}(j)}{n} \right), \quad i \in [n].
$$
With these definitions the vectors $u$, $v$ and $w$ satisfy
$
\frac{\Lambda(n)}{2\beta} \cdot w + u = v$, and hence we have
\[
\Vert v \Vert_2 \le \Vert u \Vert_2 + \frac{\Lambda(n)}{2\beta} \cdot \Vert w
\Vert_2.
\]
Since the range of $f$ and $g$ is $[-1,1]$, we immediately have that
$$ \Vert u \Vert_2  = \left( \frac{f(\mathbf{1}) - g(\mathbf{1}) -
f(\mathbf{-1}) + g(\mathbf{-1})}{n}\right) \cdot \sqrt{n} \le
\frac{4}{\sqrt{n}},$$
so all that remains is to bound $\Vert w \Vert_2$ from above.
To do this, let us define another vector $w' \in \mathbb{R}^n$ by
$w'_i = \hat{f}(i) - \hat{g}(i)$. Let $\mathbf{e} \in \mathbb{R}^n$
denote the unit vector $\mathbf{e} = (1/\sqrt{n}, \ldots, 1/\sqrt{n})$.
Letting $w'_{\mathbf{e}}$ denote the projection of  $w$ along $\mathbf{e}$,
it is easy to see that
$$
w'_{\mathbf{e}} = \left( \frac{\sum_{j=1}^n (\hat{f}(j) - \hat{g}(j))}{n} , \ldots, \frac{\sum_{j=1}^n (\hat{f}(j) - \hat{g}(j))}{n} \right).
$$
This means that $w = w' - w'_{\mathbf{e}}$ and that $w$ is the projection
of $w'$ in the space orthogonal to $\mathbf{e}$.
Consequently we have $\Vert w \Vert_2 \le \Vert w' \Vert_2$, and hence
$$
\|v\|_2 \leq \frac{4}{\sqrt{n}} + \frac{\Lambda(n)}{2\beta} \|w'\|_2
$$
as was to be shown.
\end{proof}

\medskip

\noindent {\bf Bounding Fourier distance by ``correlation distance.''}
The following lemma will be useful for us since it lets
us bound from above Fourier distance in terms of the distance
between vectors of correlations with individual variables:

\begin{lemma} \label{lem:hatandstar}
Let $f,g: \{-1,1\}^n \rightarrow \mathbb{R}$. Then we have $$
\dfour(f,g) \le O(\sqrt{\log n}) \cdot
\sqrt{\littlesum_{i=0}^n (f^*(i) - g^*(i))^2}.$$
\end{lemma}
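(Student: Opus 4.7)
The plan is to express each Fourier-coefficient difference $\hat f(i) - \hat g(i)$ as an explicit linear combination of the correlation-coefficient differences $f^*(j) - g^*(j)$, and then use the known scales $\alpha = -\Theta(\sqrt{\log n}/n)$ and $\beta = \Theta(\sqrt{\log n})$ together with Cauchy--Schwarz to get the desired $O(\sqrt{\log n})$ factor.

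First, I would observe that the $i = 0$ case is trivial: since $L_0 \equiv 1$ and $x_0 \equiv 1$, we have $\hat f(0) - \hat g(0) = f^*(0) - g^*(0)$. For $i \geq 1$, using $L_i(x) = \alpha(x_1 + \cdots + x_n) + \beta x_i$ and linearity of expectation, I would write (exactly as in Equation~(\ref{eq:exp1}) of the proof of Lemma~\ref{lem:foushap})
\[
\hat f(i) - \hat g(i) \;=\; \alpha\,\sum_{j=1}^n \bigl(f^*(j) - g^*(j)\bigr) \;+\; \beta\,\bigl(f^*(i) - g^*(i)\bigr).
\]
Let $d_i \eqdef f^*(i) - g^*(i)$ and $D \eqdef \sum_{j=1}^n d_j$, so that $\hat f(i) - \hat g(i) = \alpha D + \beta d_i$ for each $i \in [n]$.

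Next, I would square and sum. Applying the elementary inequality $(a+b)^2 \leq 2a^2 + 2b^2$ to each term,
\[
\sum_{i=1}^n \bigl(\alpha D + \beta d_i\bigr)^2 \;\leq\; 2 n \alpha^2 D^2 + 2 \beta^2 \sum_{i=1}^n d_i^2.
\]
By Cauchy--Schwarz, $D^2 = \bigl(\sum_{j=1}^n d_j\bigr)^2 \leq n \sum_{j=1}^n d_j^2$, so the right-hand side is at most $\bigl(2 n^2 \alpha^2 + 2\beta^2\bigr)\sum_{j=1}^n d_j^2$. Using the estimates $\alpha = -\Theta(\sqrt{\log n}/n)$ and $\beta = \Theta(\sqrt{\log n})$ noted immediately after Lemma~\ref{lem:ortho}, we have $n^2 \alpha^2 = \Theta(\log n)$ and $\beta^2 = \Theta(\log n)$, hence $2n^2\alpha^2 + 2\beta^2 = O(\log n)$. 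Combining this with the $i=0$ contribution,
\[
\dfour(f,g)^2 \;=\; d_0^2 + \sum_{i=1}^n (\alpha D + \beta d_i)^2 \;\leq\; d_0^2 + O(\log n)\,\sum_{i=1}^n d_i^2 \;\leq\; O(\log n)\,\sum_{i=0}^n d_i^2,
\]
and taking square roots yields the claimed bound.

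There is no real obstacle here; the step that deserves a little care is confirming the asymptotics of $\alpha$ and $\beta$ from Lemma~\ref{lem:ortho}, but these are exactly the orders-of-magnitude already recorded in the paragraph following that lemma, so once the expansion of $\hat f(i) - \hat g(i)$ is in hand the rest is a one-line application of $(a+b)^2 \leq 2a^2 + 2b^2$ and Cauchy--Schwarz.
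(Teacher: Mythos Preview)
Your proposal is correct and follows essentially the same argument as the paper's proof: separate out the $i=0$ term, expand $\hat f(i)-\hat g(i)$ via Equation~(\ref{eq:exp1}), apply $(a+b)^2 \le 2a^2+2b^2$ together with Cauchy--Schwarz to bound the cross term, and finish using $n^2\alpha^2=\Theta(\log n)$ and $\beta^2=\Theta(\log n)$. The only difference is notational (you use scalars $d_i,D$ where the paper packages things into vectors $u,v,w$), but the steps and inequalities are identical.
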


\begin{proof}
We first observe that $\hat{f}(0)=f^*(0)$ and $\hat{g}(0)=g^*(0)$, so
$(\hat{f}(0)-\hat{g}(0))^2 = (f^*(0)-g^*(0))^2$.  Consequently
it suffices to prove that
$$\sqrt{\sum_{i=1}^n (\hat{f}(i) - \hat{g}(i))^2} \le O(\sqrt{\log n}) \cdot
\sqrt{\sum_{i=1}^n (f^*(i) - g^*(i))^2},$$
which is what we show below.

From (\ref{eq:exp1}), we get
$$
\hat{f}(i) = \alpha  \sum_{j=1}^n f^*(j)  + \beta f^*(i) \
\quad \text{and} \quad  \hat{g}(i) = \alpha  \sum_{j=1}^n g^*(j)  + \beta
g^*(i).
$$
and thus we have
$$
(\hat{f}(i) - \hat{g}(i))  = \alpha \left(\sum_{j=1}^n f^*(j)
 - \sum_{j=1}^n g^*(j) \right) + \beta (f^*(i) - g^*(i)).
$$
Now consider vectors $u,v,w \in \mathbb{R}^n$ where for $i \in
[n]$, 
$$
u_i = (\hat{f}(i) - \hat{g}(i)), \quad v_i =  \left(\sum_{j=1}^n
f^*(j)  - \sum_{j=1}^n g^*(j) \right), \quad \text{and} \quad w_i = (f^*(i) -
g^*(i))
$$
By combining the triangle inequality and Cauchy-Schwarz, we have
$$
\Vert u \Vert_2^2 \le 2 (\alpha^2 \Vert v \Vert_2^2 + \beta^2
 \Vert w \Vert_2^2),
$$
and moreover
$$
\Vert v \Vert_2^2 = n \left(\sum_{j=1}^n f^*(j)  - \sum_{j=1}^n
g^*(j) \right)^2 \le n^2 \left(\sum_{j=1}^n (f^*(j)  -
 g^*(j))^2 \right) = n^2 \Vert w \Vert_2^2.
$$
Hence, we obtain
$$
\Vert u \Vert_2^2 \le 2 (\alpha^2 n^2 + \beta^2) \Vert w
\Vert_2^2 
$$
Recalling that $\alpha^2 n^2 = \Theta (\log n)$ and $\beta^2 =
\Theta (\log n)$, we conclude that
$$
\dfour(f,g) = \sqrt{\sum_{i=1}^n (\hat{f}(i) - \hat{g}(i))^2} \le O(\sqrt{\log n}) \cdot
\sqrt{\sum_{i=1}^n (f^*(i) - g^*(i))^2}
$$
which completes the proof.
\end{proof}

\medskip

\noindent 
{\bf From Fourier closeness to $\ell_1$-closeness.}
An important technical ingredient in our work is the notion of an affine
form $\ell(x)$ having ``good anti-concentration'' under distribution
$\mu$; we now give a precise definition to capture this.

\begin{definition}[Anti-concentration]
Fix $w \in \mathbb{R}^n$ and $\theta \in \mathbb{R}$, and let the affine 
form $\ell(x)$ be $\ell(x) \eqdef w \cdot x - \theta.$  We say that
$\ell(x)$ is \emph{$(\delta, \kappa)$-anti-concentrated under $\mu$} if
$
\Pr_{x \sim \mu} [|\ell(x)| \le \delta ] \le \kappa.
$
\end{definition}

The next lemma plays a crucial role in our results.  It essentially shows
that for $f = \sign (w \cdot x - \theta)$, if the affine form
$\ell(x) = w \cdot x -\theta$ is anti-concentrated,
then \emph{any} bounded function
$g : \{-1,1\}^n \rightarrow [-1,1]$ that has $\dfour(f,g)$ small
must in fact be close to $f$ in $\ell_1$ distance under $\mu$.

\begin{lemma} \label{lem:anticonc-and-ell1}
Let $f : \{-1,1\}^n \rightarrow \{-1,1\}$,
$f = \sign (w \cdot x -\theta)$ be such that
$w \cdot x - \theta$ is $(\delta,\kappa)$-anti-concentrated under
$\mu$ (for some $\kappa \leq 1/2$), where $|\theta| \leq \|w\|_1.$
Let $g : \{-1,1\}^n \rightarrow [-1,1]$ be such that
$\dfour(f,g) \leq \rho.$  Then we have
\[\mathbf{E}_{x \sim \mu} [ |f(x) - g(x)| ] \le (4 \Vert w \Vert_1 \sqrt{\rho})/\delta + \new{2} \kappa.\]
\end{lemma}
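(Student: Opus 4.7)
\smallskip
\noindent\textbf{Proof plan.} My plan hinges on a single structural observation: writing $\ell(x) \eqdef w\cdot x - \theta$, since $f = \sgn(\ell)$ and $g$ takes values in $[-1,1]$, the quantities $f(x)-g(x)$ and $\ell(x)$ always have the same sign pointwise. Indeed, if $\ell(x)\ge 0$ then $f(x)=1\ge g(x)$, and if $\ell(x)<0$ then $f(x)=-1\le g(x)$. Hence $(f(x)-g(x))\,\ell(x) = |f(x)-g(x)|\cdot|\ell(x)| \ge 0$ for every $x\in B_n$. This ``margin'' identity is what drives the entire proof.

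Using it, I would split the target expectation along the anti-concentration event:
\[
\E_{x\sim\mu}[|f-g|] \;=\; \E_\mu\!\left[|f-g|\,\mathbf{1}_{|\ell|\le\delta}\right] \;+\; \E_\mu\!\left[|f-g|\,\mathbf{1}_{|\ell|>\delta}\right].
\]
The first summand is at most $2\Pr_\mu[|\ell|\le\delta]\le 2\kappa$, using $|f-g|\le 2$ and the $(\delta,\kappa)$-anti-concentration hypothesis. For the second, on $\{|\ell|>\delta\}$ we have $|f-g|=(f-g)\ell/|\ell|\le(f-g)\ell/\delta$; since $(f-g)\ell\ge 0$ pointwise we can afterwards drop the indicator, obtaining $\E_\mu[|f-g|\mathbf{1}_{|\ell|>\delta}] \le \E_\mu[(f-g)\ell]/\delta$.

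The main remaining task is to control $\E_\mu[(f-g)\ell]$ by $\rho \ge \dfour(f,g)$. Since $\ell$ is a linear function on $B_n$, Fact~\ref{fac:func1} together with Lemma~\ref{lem:ortho} lets me expand $\ell = \sum_{i=0}^n c_i L_i$ in the orthonormal basis $\{L_0,\dots,L_n\}$. Orthonormality then delivers two estimates: (i) $\E_\mu[(f-g)\ell] = \sum_{i=0}^n c_i(\hat f(i)-\hat g(i)) \le \|c\|_2 \cdot \dfour(f,g) \le \|c\|_2\,\rho$ by Cauchy--Schwarz; and (ii) $\|c\|_2^2 = \E_\mu[\ell^2]\le \|\ell\|_\infty^2 \le (\|w\|_1+|\theta|)^2 \le (2\|w\|_1)^2$, using $|\theta|\le\|w\|_1$. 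Combining everything gives $\E_\mu[|f-g|]\le 2\kappa + 2\|w\|_1\rho/\delta$, which dominates the claimed bound $2\kappa + 4\|w\|_1\sqrt\rho/\delta$ in the parameter regime where either side is nontrivial (outside that range the stated inequality is weaker than the trivial bound $\E_\mu[|f-g|]\le 2$).

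I do not foresee a serious obstacle. The only nonroutine element is the sign-matching identity, which simultaneously furnishes the ``margin'' inequality on $\{|\ell|>\delta\}$ and lets me erase the indicator afterwards; every other step is Cauchy--Schwarz in the orthonormal basis together with the elementary $L^\infty$ estimate $\|\ell\|_\infty \le 2\|w\|_1$.
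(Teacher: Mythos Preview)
Your proposal is correct and follows essentially the same route as the paper: both hinge on the sign-matching identity $(f-g)\ell=|f-g|\,|\ell|$, expand $\ell$ in the orthonormal basis $\{L_i\}$, bound $\E_\mu[(f-g)\ell]\le 2\|w\|_1\cdot\dfour(f,g)$ via Cauchy--Schwarz and $\E_\mu[\ell^2]\le 4\|w\|_1^2$, and then split along the anti-concentration event. Your indicator-based split is a mild streamlining of the paper's conditional-expectation version (you avoid the factor $1/(1-\kappa)\le 2$), and---like the paper's own proof, which carries a harmless $\sqrt{\rho}$-vs-$\rho$ typo---you actually establish the tighter bound $2\|w\|_1\rho/\delta+2\kappa$; your ``nontrivial regime'' remark can be made rigorous by observing that $(\delta,\kappa)$-anti-concentration with $\kappa\le 1/2$ forces $\delta<2\|w\|_1$.
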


\begin{proof}
Let us rewrite $\ell(x) \eqdef w \cdot x -\theta$ as a linear
combination of the orthonormal basis elements $L_0,L_1,\dots,L_n$ (w.r.t.
$\mu$), i.e.,
\[
\ell(x) =   \hat{\ell}(\emptyset) L_0 + \sum_{i=1}^n \hat{\ell}(i) L_i.
\]
Recalling the definitions of $L_i$ for $i=1,\dots,n$ and the
fact that $L_0=1,$ we get $\hat{\ell}(\emptyset)=-\theta.$

We first establish an upper bound on $\theta^2 + \sum_{j=1}^n \hat{\ell}(j)^2 $
as follows :
\begin{eqnarray*}
\theta^2 + \sum_{j=1}^n \hat{\ell}(j)^2  = \mathbf{E}_{x \sim \mu} [(w \cdot x -\theta)^2]  &\le&  2\mathbf{E}_{x \sim \mu} [(w \cdot x)^2]  + 2 \theta^2  \\
&\le& 2 \Vert w \Vert_1^2 + 2 \Vert w \Vert_1^2 =  4 \Vert w \Vert_1^2.
\end{eqnarray*}
The first equality above uses the fact that the $L_i$'s are orthonormal
under $\mu$, while the first inequality uses
$(a+b)^2 \le 2 (a^2 + b^2)$ for $a,b \in \mathbb{R}$.
The second inequality uses the assumed bound on $|\theta|$ and the fact
that $|w \cdot x|$ is always at most $\Vert w \Vert_1$.

Next, \new{linearity of expectation} gives us that
\begin{eqnarray}
\mathbf{E}_{x \sim \mu} [(f(x) - g(x)) \cdot
(w \cdot x - \theta) ]  &=&  \theta (\hat{g}(0) - \hat{f}(0))
+ \sum_{j=1}^n \hat{\ell}(i)(\hat{f}(i) - \hat{g}(i))  \nonumber \\
&\le& \sqrt{\sum_{j=0}^n (\hat{f}(j) - \hat{g}(j))^2}
\cdot \sqrt{\theta^2 + \sum_{j=1}^n \hat{\ell}(i)^2} \nonumber \\ 
&\le& 2 \Vert w \Vert_1 \sqrt{\rho} \label{eq:planch}
\end{eqnarray}
where the first inequality is Cauchy-Schwarz and the second follows by the
conditions of the lemma.

Now note that since $f=\sign(w \cdot x- \theta)$, for all $x \in \bn$
we have $$(f(x) - g(x)) \cdot (w \cdot x -\theta)
= |f(x) - g(x)| \cdot |w \cdot x - \theta|$$
Let $E$ denote the event that $|w \cdot x - \theta| >\delta$.
Using the fact that the affine form $w \cdot x - \theta$ is
$(\delta, \kappa)$-anti-concentrated, we get that
$\Pr[E] \geq 1 - \kappa$, and hence
\begin{eqnarray*}\mathbf{E}_{x \sim \mu} [(f(x) - g(x))
\cdot (w \cdot x -\theta)] &\ge& \mathbf{E}_{x \sim \mu} [(f(x) - g(x))\cdot (w \cdot x - \theta) \ | \ E] \Pr[ E]
\\
&\ge& \delta (1-\kappa) \mathbf{E}_{x \sim \mu} [|f(x) - g(x)| \ | \ E].
\end{eqnarray*}
Recalling that $\kappa \leq 1/2$, this together with (\ref{eq:planch})
implies that
\[ \mathbf{E}_{x \sim \mu} [|f(x) - g(x)| \ | \ E]  \le \frac{4 \Vert w \Vert_1
\sqrt{\rho}}{\delta},\]
which in turn implies (since $|f(x) - g(x)| \le \new{2}$ for all $x \in \bn$) that
$$
\mathbf{E}_{x \sim \mu} [|f(x) - g(x) | ]  \le \frac{4 \Vert w \Vert_1 
\sqrt{\rho}}{\delta} + \new{2} \kappa
$$
as was to be shown.
\end{proof}


\section{A Useful Anti-concentration Result} \label{sec:str}

In this section we prove an anti-concentration result for monotone
increasing $\eta$-reasonable affine forms (see
Definition~\ref{def:reasonable})
under the distribution $\mu$.
Note that even if $k$ is a constant
the result gives an anti-concentration probability of 
$O(1/\log n)$; this will be crucial in the proof of our
first main result in Section~\ref{sec:mainresults}.

\begin{theorem} \label{thm:ac-general-weights}
Let $L(x) = w_0+\littlesum_{i=1}^n w_i x_i$ be a monotone
increasing $\eta$-reasonable affine form, so
$w_i \geq 0$ for $i \in [n]$ and $|w_0| \leq (1-\eta) \littlesum_{i=1}^n
|w_i|$. Let $k \in [n],0<\zeta<1/2$, $k \ge 2/\eta$ and $r \in \R_{+}$
be such that $|S| \geq k$, where $ S: = \{ i \in [n] : |w_i| \geq r \}$.
Then
\[ \Pr_{x \sim \mu} \left[| L(x)| < r \right]  = O\left( \frac{1}{\log n} \cdot \frac{1}{k^{1/3- \zeta}} \cdot \left( \frac{1}{\zeta} + \frac{1}{\eta}\right)\right) .\]
\end{theorem}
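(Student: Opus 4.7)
The plan is to represent $\mu$ as a continuous mixture of $p$-biased product distributions and then bound the anti-concentration probability by integrating the corresponding bounds for each $D_p$, handling the ``bulk'' range $p \in [\tau, 1-\tau]$ via the stated $p$-biased Littlewood--Offord inequality and the ``tails'' $p \in [0,\tau) \cup (1-\tau, 1]$ via the monotonicity (all $w_i \geq 0$) together with $\eta$-reasonableness, which force $L(x)$ to be far from zero on average when $p$ is near $0$ or $1$. Concretely, define $\phi(p) \eqdef 1/(\Lambda(n)\, p(1-p))$. Using the beta-function identity $\int_0^1 p^{k'-1}(1-p)^{n-k'-1}\,dp = (k'-1)!(n-k'-1)!/(n-1)!$, one checks that for every $x \in B_n$ of weight $k'$, $\mu(x) = \int_0^1 \phi(p)\, D_p(x)\, dp$. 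Since $|L(\mathbf{1})|, |L(\mathbf{-1})| \geq \eta \|w\|_1 \geq \eta k r \geq 2 r$ by $\eta$-reasonableness and $k \geq 2/\eta$, the event $A \eqdef \{|L(x)| < r\}$ is disjoint from $\{\mathbf{1}, \mathbf{-1}\}$, so $A \cap B_n = A$ and therefore $\Pr_\mu[A] \leq \int_0^1 \phi(p)\, \Pr_{D_p}[A]\, dp$.

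For the bulk $p \in [\tau, 1-\tau]$, I apply the $p$-biased Littlewood--Offord inequality with $K = |S| \geq k$, $\delta = p$, $\epsilon = r$, obtaining $\Pr_{D_p}[A] \leq 1/\sqrt{k\, p(1-p)}$. Since $\phi(p) = \Theta(1/(\log n \cdot p(1-p)))$ and $(p(1-p))^{-3/2}$ has an antiderivative of order $\min(p,1-p)^{-1/2}$, the bulk contribution is
\[
\int_\tau^{1-\tau} \phi(p) \cdot \frac{dp}{\sqrt{k\, p(1-p)}} \;=\; O\!\left(\frac{1}{\log n \cdot \sqrt{k\, \tau}}\right).
\]

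For the tail, say $p \leq \tau$ (the interval $(1-\tau, 1]$ is symmetric by the monotonicity of $L$), note that $\mathbf{E}_{D_p}[L(x)] = w_0 - (1-2p) \|w\|_1 \leq (1-\eta)\|w\|_1 - (1-2p)\|w\|_1 = (2p - \eta)\|w\|_1$, so choosing $\tau \leq \eta/4$ forces $\mathbf{E}_{D_p}[L(x)] \leq -\eta \|w\|_1/2$. The hypothesis $|S| \geq k$ gives $\|w\|_1 \geq k r$, and combined with $k \geq 2/\eta$ this yields $r \leq \eta\|w\|_1/2$; hence the event $A$ forces a deviation $|L(x) - \mathbf{E}_{D_p}[L(x)]| = \Omega(\eta \|w\|_1)$. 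Chebyshev with $\mathrm{Var}_{D_p}[L] \leq 4p(1-p) \sum_i w_i^2 \leq 4 p \|w\|_1^2$ then gives $\Pr_{D_p}[A] = O(p/\eta^2)$, so the tail integrates to
\[
\int_0^\tau \phi(p) \cdot \frac{O(p)}{\eta^2}\, dp \;=\; O\!\left(\frac{\tau}{\log n \cdot \eta^2}\right).
\]

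Adding the bulk and tail estimates gives $\Pr_\mu[A] = O\!\left(\frac{1}{\log n}\!\left(\frac{1}{\sqrt{k\, \tau}} + \frac{\tau}{\eta^2}\right)\right)$, and balancing at $\tau \asymp \eta^{4/3}/k^{1/3}$ already achieves the $k^{-1/3}$ exponent. The $k^{-(1/3-\zeta)}$ slack and the additive $(1/\zeta + 1/\eta)$ factor in the theorem arise by taking $\tau$ slightly off-balance (parametrized by $\zeta$) and sharpening the tail estimate via a dyadic decomposition of $[0,\tau]$, which replaces the $\eta^{-2/3}$ dependence coming from Chebyshev by $\eta^{-1}$ at the price of a $1/\zeta$ factor from integrating $\int p^{-1+\zeta}\,dp$. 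The main obstacle is precisely this tail refinement: the bulk is a clean $p$-biased Littlewood--Offord integral, but extracting the exact $\eta$-dependence claimed in the statement requires a more delicate moment or level-by-level argument on the tail that uses the $\eta$-reasonableness more efficiently than a single global Chebyshev.
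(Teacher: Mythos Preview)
Your core argument is correct and in fact yields a bound slightly \emph{stronger} than the one stated: balancing at $\tau \asymp \eta^{4/3}/k^{1/3}$ (and taking $\tau = \Theta(\eta)$ when that choice exceeds $\eta/8$) gives $\Pr_\mu[A] = O\bigl(\tfrac{1}{\log n}\cdot \tfrac{1}{k^{1/3}\eta^{2/3}}\bigr)$, and since $\eta \le 1$ and $k^\zeta \ge 1$ this immediately implies the claimed $O\bigl(\tfrac{1}{\log n}\cdot \tfrac{1}{k^{1/3-\zeta}}(\tfrac{1}{\zeta}+\tfrac{1}{\eta})\bigr)$. Your last paragraph is therefore unnecessary and somewhat confused: you do not need any dyadic refinement or $\zeta$-perturbation, and ``replacing $\eta^{-2/3}$ by $\eta^{-1}$'' would be a weakening, not a sharpening. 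Just drop that paragraph and note the one-line implication. (One small point to tighten: with $\tau = \eta/4$ and only $r \le \eta\|w\|_1/2$, the deviation $(\eta - 2p)\|w\|_1 - r$ can vanish; take $\tau \le \eta/8$ so the deviation is at least $\eta\|w\|_1/4$.)

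Your route is genuinely different from the paper's. The paper works level-by-level: it writes $\Pr_\mu[A] = \sum_i p(r,i)\,q(n,i)$, bounds the extreme levels $i \le i_0$ and $i \ge n-i_0$ by a Markov argument on a random permutation (yielding $p(r,i) \le (4/\eta)\min\{i,n-i\}/n$), and for the middle levels finds an explicit function $F(\delta)$ with $q(n,i) \le \int_0^1 F(\delta) g(\delta,i)\,d\delta$ so that Littlewood--Offord for $D_\delta$ can be averaged against $F$. The $\zeta$-loss and $1/\zeta$ factor in the stated theorem arise precisely from the form of that $F$. Your exact beta-function identity $\mu(x) = \int_0^1 \phi(p) D_p(x)\,dp$ with $\phi(p) = 1/(\Lambda(n)\,p(1-p))$ replaces that whole machinery: the bulk integral $\int_\tau^{1-\tau}(p(1-p))^{-3/2}dp = O(\tau^{-1/2})$ is elementary and lossless, and the tail uses Chebyshev on $D_p$ in place of the paper's Markov-on-permutations. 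The paper's tail is a factor of $\eta$ tighter per level than your Chebyshev step, but after balancing this difference washes out (indeed your final bound is cleaner). In short: the paper decomposes $\mu$ discretely and reconstructs a mixture inequality for the middle; you observe that $\mu$ \emph{is} a mixture to begin with, which bypasses the technical Claim about $F$ entirely.
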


This theorem essentially says that under the distribution $\mu$, 
the random variable $L(x)$ falls in the interval $[-r,r]$ with only a
very small probability.  Such theorems are known in the literature as 
``anti-concentration" results, but almost all such results are for the 
uniform distribution or for other product distributions, 
and indeed the proofs of such results typically crucially use the fact 
that the distributions are product distributions. 

In our setting, the distribution $\mu$ is not even a pairwise 
independent distribution, so standard approaches for proving
anti-concentration cannot be directly applied.
Instead, we exploit the fact that $\mu$ is a \emph{symmetric} distribution;
a distribution is symmetric if the probability mass it assigns to an 
$n$-bit string $x \in \bn$ depends only on the number of $1$'s of $x$ 
(and not on their location within the string).  This enables
us to perform a somewhat delicate reduction to known anti-concentration
results for biased product distributions.
Our proof adopts a point of view which is 
inspired by the combinatorial proof of the basic Littlewood-Offord theorem 
(under the uniform distribution on the hypercube) due to 
Benjamini et.~al.~\cite{BKS:99}.  The detailed
proof is given in the following subsection.

\subsection{Proof of Theorem~\ref{thm:ac-general-weights}.}
Recall that $\bn_{=i}$ denotes the $i$-th ``weight level'' of the
hypercube, i.e., $\{ x \in \bn: \weight(x) = i\}$.
We view a random draw $x \sim \mu$ as being done according to a two-stage process:
\begin{enumerate}
\item Draw $i \in [n-1]$ with probability $q(n, i) \eqdef Q(n,i) / \Lambda(n).$ 
(Note that this is the probability $\mu$ assigns to $\bn_{=i}$.) 
\item Independently pick a uniformly random permutation $\pi: [n] \to [n]$, i.e., $\pi \sim_R \ms$. 
The string $x$ is defined to have $x_{\pi(1)} = \ldots = x_{\pi(i)} =1$ and $x_{\pi(i+1)} = \ldots = x_{\pi(n)} =-1.$
\end{enumerate}
It is easy to see that the above description of $\mu$ is equivalent to its original definition.  
Another crucial observation is that any symmetric distribution can be sampled in the same way, 
with $q(n,k)$ being the only quantity dependent on the particular distribution. 
We next define a $(r, i)$-balanced permutation. 
\begin{definition}[$(r, i)$-balanced permutation]
A permutation $\pi: [n]\to [n]$ is called $(r, i)$-balanced if $| w_0 + \littlesum_{j=1}^i w_{\pi(j)} -  \littlesum_{j=i+1}^n w_{\pi(j)} | \leq r.$
\end{definition}

For $i \in [n-1]$, let us denote by $p(r, i)$ the fraction of all $n!$ permutations that are $(r, i)$ balanced. 
That is, 
\[ p(r, i) = \Pr_{\pi \sim_R \ms} \left[ | w_0 + \littlesum_{j=1}^i w_{\pi(j)} -  \littlesum_{j=i+1}^n w_{\pi(j)} | \leq r \right] .\]
At this point, as done in~\cite{BKS:99}, we use the above two-stage process defining $\mu$ to express the desired ``small ball'' probability in a more convenient way.
Conditioning on the event that the $i$-th layer is selected in the first stage, the probability that  $|L(x)| < r$ is $p(r, i)$. By the law of total probability we can write: 
\[ \Pr_{x \sim \mu} \left[ |L(x)| < r \right]  = \littlesum_{i=1}^{n-1} p(r, i) q(n, i).\]
We again observe that $p(r,i)$ is only dependent on the affine form $L(x)$ and 
does not depend on the particular symmetric distribution; 
$q(n,i)$ is the only part dependent on the distribution. The high-level 
idea of bounding the quantity $\littlesum_{i=1}^{n-1} p(r, i) q(n, i)$ is as 
follows: For $i$ which are ``close to $1$ or $n-1$'', we use Markov's 
inequality to argue that the corresponding $p(r,i)$'s are suitably small; 
this allows us to bound the contribution of these indices to the sum, using 
the fact that each $q(n,i)$ is small. For the remaining $i$'s, we use the 
fact that the $p_i$'s are identical for all symmetric distributions. This 
allows us to perform a subtle ``reduction'' to known anti-concentration results 
for biased product distributions.

We start with the following simple claim, a consequence of Markov's inequality, that shows that if one of $i$ or $n-i$ is reasonably small,
the probability $p(r, i)$ is quite small.

\begin{myclaim} \label{claim:markov}
For all $i \in [n-1]$ we have
\[p(r, i) \leq (4/\eta) \cdot \min\{i, n-i \}/n.\]
\end{myclaim}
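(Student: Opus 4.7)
\textbf{Proof plan for Claim~\ref{claim:markov}.}

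The plan is to express the quantity inside the absolute value as an affine shift of a \emph{nonnegative} random variable and then apply Markov's inequality in two different ways, one for each side of the $\min$. Write $W \eqdef \sum_{j=1}^n w_j$ and, for a permutation $\pi \sim_R \ms$, set $Y(\pi) \eqdef 2\sum_{j=1}^i w_{\pi(j)}$ and $Y'(\pi) \eqdef 2\sum_{j=i+1}^n w_{\pi(j)}$. Since $L$ is monotone increasing we have $w_j \geq 0$ for all $j \in [n]$, so $Y,Y' \geq 0$, and a direct calculation gives the two identities
\[
w_0 + \littlesum_{j=1}^i w_{\pi(j)} - \littlesum_{j=i+1}^n w_{\pi(j)} \;=\; (w_0 - W) + Y(\pi) \;=\; (w_0 + W) - Y'(\pi).
\]
Linearity of expectation over a uniformly random $\pi$ yields $\E[Y(\pi)] = 2iW/n$ and $\E[Y'(\pi)] = 2(n-i)W/n$, because each $w_j$ lands in any prescribed set of positions with the appropriate marginal probability.

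The next step is a small arithmetic bookkeeping calculation to show that the ``target value'' for Markov is bounded below by $\eta W/2$. Since $|S|\geq k \geq 2/\eta$ and every $j\in S$ contributes $w_j \geq r$ to $W$, we have $W \geq kr \geq (2/\eta)r$, so $r \leq \eta W/2$. Combined with the reasonability hypothesis $|w_0|\leq (1-\eta)W$, this gives both $W - w_0 - r \geq \eta W - r \geq \eta W/2$ and $W + w_0 - r \geq \eta W/2$.

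Now I apply Markov's inequality in each of the two representations above. For the first, the event $|T(\pi)|\leq r$ is equivalent to $Y(\pi) \geq W - w_0 - r$, so
\[
p(r,i) \;\leq\; \Pr_\pi\!\left[Y(\pi) \geq W - w_0 - r\right] \;\leq\; \frac{\E[Y(\pi)]}{\eta W/2} \;=\; \frac{2iW/n}{\eta W/2} \;=\; \frac{4i}{\eta n}.
\]
For the second representation, the event $|T(\pi)|\leq r$ is equivalent to $Y'(\pi) \geq W + w_0 - r$, and the analogous Markov bound gives $p(r,i) \leq 4(n-i)/(\eta n)$. Taking the minimum of the two bounds yields $p(r,i) \leq (4/\eta)\cdot \min\{i,n-i\}/n$, as claimed.

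There is no real obstacle here; the only nontrivial ingredient is the observation that the quantitative hypotheses $|S|\geq k$ and $k \geq 2/\eta$ are used precisely to guarantee $r \leq \eta W/2$, which in turn keeps the Markov denominator bounded below by a constant multiple of $W$ and cancels the $W$ appearing in the numerator $\E[Y] = 2iW/n$.
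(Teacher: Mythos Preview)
Your argument is correct and is essentially identical to the paper's proof: both express the centered sum as a shift of the nonnegative variable $\sum_{j=1}^i w_{\pi(j)}$ (you work with twice this quantity), lower-bound the Markov threshold by $\eta W/2$ using $|w_0|\le(1-\eta)W$ and $r\le \eta W/2$ (the latter from $W\ge kr\ge(2/\eta)r$), and conclude via Markov. One small wording slip: the event $|T(\pi)|\le r$ is not \emph{equivalent} to $Y(\pi)\ge W-w_0-r$ but merely \emph{implies} it (there is also an upper constraint on $Y$); since you only need the containment for the upper bound, the argument goes through unchanged.
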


\begin{proof}
For $i \in [n-1]$, let $\mathcal{E}_i = \{ \pi \in \ms :  | w_0 + \littlesum_{j=1}^i w_{\pi(j)} -  \littlesum_{j=i+1}^n w_{\pi(j)} | \leq r \}.$
By definition we have that $p(r, i) = \Pr_{\pi \sim_R \ms} \left[ \mathcal{E}_i \right]$.

Let $i \leq n/2.$ If the event $\mathcal{E}_i$ occurs, we certainly have that  $w_0 + \littlesum_{j=1}^i w_{\pi(j)} -  \littlesum_{j=i+1}^n w_{\pi(j)} \geq -r$
which yields that
\[  \littlesum_{j=1}^i w_{\pi(j)} \geq (1/2) (\littlesum_{i=1}^n w_i  - r - w_0).\]
That is, \[ p(r,i) \leq \Pr_{\pi \sim_R \ms} \left[ \littlesum_{j=1}^i w_{\pi(j)} \geq (1/2) (\littlesum_{i=1}^n w_i  - r - w_0) \right]. \]
Consider the random variable $X =  \littlesum_{j=1}^i w_{\pi(j)}$ and denote $\alpha \eqdef  (1/2) (\littlesum_{i=1}^n w_i  - r - w_0)$. 
We will bound from above the probability
 \[ \Pr_{\pi \sim_R \ms} \left[ X \geq \alpha \right].\]
Since $\pi$ is chosen uniformly from $\ms$, we have that $\E_{\pi \sim_R \ms}[w_{\pi(j)}]  =(1/n) \cdot \littlesum_{i=1}^n w_i $, hence 
$$\E_{\pi \sim_R \ms}[X] = (i/n) \cdot \littlesum_{i=1}^n w_i.$$
Recalling that $|w_0| \leq (1-\eta) \cdot \littlesum_{i=1}^n w_i$ and 
noting that $\littlesum_{i=1}^n w_i  \geq \littlesum_{i \in S} w_i  \geq kr \geq (2/\eta) \cdot r$, we get
\[ \alpha \geq (\eta/4)  \cdot \littlesum_{i=1}^n w_i .\]
Therefore, noting that $X>0$, by Markov's inequality, we obtain that
\[  \Pr_{\pi \sim_R \ms} \left[ X \geq \alpha \right] \leq \frac{\E_{\pi \sim_R \ms}[X] }{\alpha} \leq (4/\eta) \cdot (i/n)  \]
as was to be proven.

If $i \geq n/2$, we proceed analogously. If $\mathcal{E}_i$ occurs, we have
$w_0 + \littlesum_{j=1}^i w_{\pi(j)} -  \littlesum_{j=i+1}^n w_{\pi(j)} \leq r$
which yields that
\[  \littlesum_{j=i+1}^n w_{\pi(j)} \geq (1/2) (\littlesum_{i=1}^n w_i + w_0 - r).\]
We then repeat the exact same Markov type argument for the random variable $\littlesum_{j=i+1}^n w_{\pi(j)}$. 
This completes the proof of the claim.
\end{proof}

Of course, the above lemma is only useful when either $i$ or $n-i$ is relatively small. Fix $i_0 < n/2$ (to be chosen later).
Note that, for all $i \leq n/2$, it holds $q(n, i) \leq \frac{2}{i \cdot \Lambda(n)}$. By Claim~\ref{claim:markov} we thus get
that 
\begin{equation}\label{lb:1}\littlesum_{i = 1}^{i_0}  p(r, i) q(n,i) \leq \littlesum_{i=1}^{i_0}  \frac{2}{i \cdot \Lambda(n)} \cdot \frac{4}{\eta} \cdot \frac{i}{n} 
                                                                   \leq \frac{8 i_0}{\eta \cdot n \cdot  \Lambda(n)}.\end{equation}
By symmetry, we get \begin{equation}\label{lb:2}\littlesum_{i = n-i_0}^{n-1}  p(r, i) q(n,i) \leq \frac{8 i_0}{\eta \cdot n \cdot  \Lambda(n)}.\end{equation}

We proceed to bound from above the term  $\littlesum_{i = i_0+1}^{n-i_0-1}  p(r, i) q(n,i)$.
To this end, we exploit the fact, mentioned earlier,
that the $p(r, i)$'s depend only on the affine form
and not on the particular symmetric distribution over
weight levels. We use a subtle argument to essentially reduce
anti-concentration statements about $\mu$ to known anti-concentration results.

For $\delta \in (0,1)$ let $D_{\delta}$ be the $\delta$-biased distribution over $\{-1,1\}^n$;  that is the product distribution in which each coordinate 
is $1$ with probability $\delta$ and $-1$ with probability $1-\delta$. Denote by $g(\delta,i)$ the probability that $D_{\delta}$ assigns to $\bn_{=i}$, i.e.,
$g(\delta,i) =  \binom{n}{i} \delta^{i} (1-\delta)^{n-i}$. Theorem 5 now yields 
$$
\Pr_{x \sim D_{\delta}} \left[   |L(x)| < r \right] \leq \frac{1}{\sqrt{k \delta(1-\delta)}}.
$$
Using symmetry, we view a random draw $ x \sim D_{\delta}$ as a two-stage procedure, exactly as in $\mu$, the only difference being that
in the first stage we pick the $i$-th weight level of the hypercube, 
$i \in [0,n]$, with probability $g(\delta, i)$. We can therefore write
$$\Pr_{x \sim D_{\delta}} \left[   |L(x)| < r  \right]  = \littlesum_{i=0}^{n} g(\delta,i) p(r, i)$$
and thus conclude that
\begin{equation} \label{eqn:lo}
\littlesum_{i=i_0+1}^{n-i_0-1} g(\delta,i) p(r, i) \leq \littlesum_{i=0}^{n} g(\delta,i) p(r,i) \leq \frac{1}{\sqrt{k \delta(1-\delta)}}.
\end{equation}
We now state and prove the following crucial lemma. The idea of the lemma is to bound from above the sum 
$\littlesum_{i = i_0+1}^{n-i_0-1}  p(r, i) q(n,i)$ by suitably averaging over anti-concentration bounds 
obtained from the $\delta$-biased product distributions:
\begin{lemma} \label{lem:reduction}
Let $F:[0,1] \to \R_+$ be such that $q(n,i) \leq  \int_{\delta=0}^1 F(\delta) g(\delta,i) d \delta$ for all $i \in [i_0+1, n-i_0-1].$
Then, \[ \littlesum_{i = i_0+1}^{n-i_0-1}  p(r, i) q(n,i) \leq \frac{1}{\sqrt{k}} \cdot \int_{\delta=0}^1 \frac{F(\delta)}{\sqrt{\delta(1-\delta)}} d\delta.\]
\end{lemma}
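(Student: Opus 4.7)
The plan is to prove Lemma~\ref{lem:reduction} by a direct substitution-and-swap argument: replace each $q(n,i)$ by its assumed integral upper bound, exchange the order of summation and integration by Fubini (legitimate since all quantities are nonnegative), and then apply the Littlewood-Offord-type bound from Equation~(\ref{eqn:lo}) pointwise in $\delta$.

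Concretely, the first step I would take is to write
\[
\sum_{i=i_0+1}^{n-i_0-1} p(r,i) q(n,i) \;\leq\; \sum_{i=i_0+1}^{n-i_0-1} p(r,i) \int_0^1 F(\delta) g(\delta,i) \, d\delta,
\]
using the hypothesis on $F$ together with the nonnegativity of $p(r,i)$. Since both $F$ and $g(\delta,i)$ are nonnegative and the sum is finite, Tonelli's theorem lets me pull the sum inside the integral, yielding
\[
\sum_{i=i_0+1}^{n-i_0-1} p(r,i) q(n,i) \;\leq\; \int_0^1 F(\delta) \Bigl( \sum_{i=i_0+1}^{n-i_0-1} p(r,i) g(\delta,i) \Bigr) d\delta.
\]

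The second step is to bound the inner sum using Equation~(\ref{eqn:lo}), which gives $\sum_{i=i_0+1}^{n-i_0-1} g(\delta,i) p(r,i) \leq 1/\sqrt{k \delta(1-\delta)}$ for every $\delta \in (0,1)$, as a consequence of the product-distribution Littlewood-Offord bound (Theorem 5) applied to $D_\delta$ together with the fact that the quantities $p(r,i)$ depend only on the affine form $L$ and not on the underlying symmetric distribution. Substituting this bound into the integrand yields
\[
\sum_{i=i_0+1}^{n-i_0-1} p(r,i) q(n,i) \;\leq\; \int_0^1 \frac{F(\delta)}{\sqrt{k\,\delta(1-\delta)}} \, d\delta \;=\; \frac{1}{\sqrt{k}} \int_0^1 \frac{F(\delta)}{\sqrt{\delta(1-\delta)}} \, d\delta,
\]
which is exactly the claimed bound.

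There is essentially no obstacle here: the entire content of the lemma is the observation that the symmetric distribution $\mu$ can be ``dominated'' layerwise by a continuous mixture of product distributions $D_\delta$, and once such a dominating $F$ is assumed to exist, the inequality follows automatically from Fubini plus the known anti-concentration bound for each $D_\delta$. The real work — constructing an appropriate $F$ so that $q(n,i) \leq \int F(\delta) g(\delta,i)\, d\delta$ and the resulting integral $\int F(\delta)/\sqrt{\delta(1-\delta)}\, d\delta$ is small — is presumably deferred to the subsequent application of this lemma, not to the lemma itself.
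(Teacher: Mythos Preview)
Your proof is correct and follows essentially the same approach as the paper: bound $q(n,i)$ by the integral hypothesis, swap sum and integral (the paper calls this ``linearity''), and apply (\ref{eqn:lo}) pointwise in $\delta$. Your write-up is in fact slightly more careful in justifying the interchange via Tonelli and the nonnegativity of $p(r,i)$.
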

\begin{proof}
We have the following sequence of inequalities 
\begin{eqnarray*}
 \littlesum_{i = i_0+1}^{n-i_0-1}  p(r, i) q(n,i)   
                    &\leq&  \littlesum_{i = i_0+1}^{n-1-i_0} \left( \int_{\delta=0}^1 F(\delta) g(\delta,i) d \delta \right) \cdot p(r, i) \\
                     &=& \int_{\delta=0}^1 F(\delta) \left(\littlesum_{i = i_0+1}^{n-i_0-1} g(\delta,i) p(r,i) \right) d\delta \\ 
                     &\leq& \frac{1}{\sqrt{k}} \cdot \int_{\delta=0}^1 \frac{F(\delta)}{\sqrt{\delta(1-\delta)}} d\delta
\end{eqnarray*}
where the first line follows from the assumption of the lemma, the second uses linearity and the third uses (\ref{eqn:lo}).
\end{proof}

We thus need to choose appropriately a function $F$ satisfying the lemma 
statement which can give a non-trivial bound on the desired sum. 
Fix $\zeta > 0$, and define $F(\delta)$ as  
$$F(\delta) \eqdef
\frac{1024}{\Lambda(n)} \cdot \frac{(n+1)^{1/2+\zeta}}{ i_0^{1/2+\zeta}} 
\left( \frac{1}{\delta^{1/2-\zeta}} + \frac{1}{(1-\delta)^{1/2- \zeta}}
\right).$$ 
The following claim (proved in Section~\ref{sec:proofoftechnical})
says that this choice of $F(\delta)$ satisfies the 
conditions of Lemma~\ref{lem:reduction}:

\begin{myclaim} \label{claim:technical}
For the above choice of $F(\delta)$ and $i_0 \le i \le n-i_0$, $q(n,i) \le 
\int_{\delta=0}^1 F(\delta) g(\delta,i) d \delta$. 
\end{myclaim}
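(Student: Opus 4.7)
\textbf{Proof plan for Claim~\ref{claim:technical}.} The plan is to substitute the explicit form of $F(\delta)$ into the integral $\int_0^1 F(\delta)g(\delta,i)\,d\delta$, split the integral into two pieces corresponding to the two summands $\delta^{-1/2+\zeta}$ and $(1-\delta)^{-1/2+\zeta}$ of $F$, and recognize each piece as a Beta integral. Concretely, using $g(\delta,i)=\binom{n}{i}\delta^i(1-\delta)^{n-i}$, the first piece equals
\[
\binom{n}{i}\int_0^1 \delta^{i-1/2+\zeta}(1-\delta)^{n-i}\,d\delta \;=\; \binom{n}{i}B\!\left(i+\tfrac12+\zeta,\,n-i+1\right) \;=\; \frac{\Gamma(i+\tfrac12+\zeta)\,n!}{i!\,\Gamma(n+\tfrac32+\zeta)},
\]
and symmetrically the second piece equals $\Gamma(n-i+\tfrac12+\zeta)\,n!\big/\big((n-i)!\,\Gamma(n+\tfrac32+\zeta)\big)$.

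Next I would estimate the resulting Gamma-ratios. Using standard bounds (e.g.\ Gautschi's inequality, $x^{1-s}\le \Gamma(x+1)/\Gamma(x+s)\le (x+1)^{1-s}$ for $x\ge 0$, $0<s<1$), one gets, for every integer $i\ge 1$,
\[
\frac{\Gamma(i+\tfrac12+\zeta)}{\Gamma(i+1)}\;\ge\; c_1\cdot i^{-1/2+\zeta}, \qquad \frac{\Gamma(n+\tfrac32+\zeta)}{\Gamma(n+1)}\;\le\; (n+1)^{1/2+\zeta},
\]
where $c_1$ is an absolute constant (one can take $c_1=1/2$, with comfortable room; this is where the large constant $1024$ in the definition of $F$ comes in). Combining these, the first piece of the integral is bounded below by $c_1\,i^{-1/2+\zeta}/(n+1)^{1/2+\zeta}$, and analogously the second piece by $c_1\,(n-i)^{-1/2+\zeta}/(n+1)^{1/2+\zeta}$.

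Multiplying by the prefactor $\tfrac{1024}{\Lambda(n)}(n+1)^{1/2+\zeta}/i_0^{1/2+\zeta}$ from $F$, the integral is therefore bounded below by
\[
\frac{512\,c_1}{\Lambda(n)}\cdot\frac{1}{i_0^{1/2+\zeta}}\left(\frac{1}{i^{1/2-\zeta}}+\frac{1}{(n-i)^{1/2-\zeta}}\right).
\]
The final step uses the hypothesis $i_0+1\le i\le n-i_0-1$: since $i\ge i_0$, we have $i_0^{1/2+\zeta}\,i^{1/2-\zeta}\le i^{1/2+\zeta}\,i^{1/2-\zeta}=i$, and similarly $i_0^{1/2+\zeta}(n-i)^{1/2-\zeta}\le n-i$. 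Hence each summand dominates $1/i$ and $1/(n-i)$ respectively (up to the absolute constant), which compared against $q(n,i)=(1/i+1/(n-i))/\Lambda(n)$ gives the claim with room to spare.

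The main technical obstacle is the Gamma-ratio estimate: one needs a lower bound on $\Gamma(i+\tfrac12+\zeta)/i!$ that is sharp enough to carry the factor $i^{-1/2+\zeta}$ uniformly in $i\ge 1$ (the small-$i$ case is the delicate one, since the asymptotic $\Gamma(i+s)/\Gamma(i+1)\sim i^{s-1}$ needs to be turned into a non-asymptotic inequality with an explicit constant). Everything else is bookkeeping, and the generous constant $1024$ in $F(\delta)$ is chosen precisely to absorb the losses in these elementary estimates.
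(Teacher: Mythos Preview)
Your proposal is correct and follows the same overall strategy as the paper: write the integral as a Beta integral and estimate the resulting Gamma ratios. Two minor differences are worth noting. First, you keep both summands of $F$ and bound them symmetrically, obtaining directly a lower bound that matches $q(n,i)=(1/i+1/(n-i))/\Lambda(n)$ term by term; the paper instead drops the $(1-\delta)^{-1/2+\zeta}$ term, treats only $i_0\le i\le n/2$, and then invokes symmetry for the remaining range. Second, you invoke Gautschi/Wendel-type inequalities for the Gamma ratios (which cleanly give $\Gamma(n+3/2+\zeta)/\Gamma(n+1)\le (n+1)^{1/2+\zeta}$ and $\Gamma(i+1/2+\zeta)/\Gamma(i+1)\ge (i+1)^{-1/2+\zeta}$), whereas the paper works through Stirling's approximation (Fact~\ref{fac:gammaapx}) and several lines of elementary manipulation to reach the same endpoint. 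Your route is a bit shorter and avoids the ``$n$ and $i$ large enough'' caveat that Stirling introduces; the paper's computation is more self-contained but heavier. Either way, the final comparison $i_0^{1/2+\zeta}\,i^{1/2-\zeta}\le i$ (using $i\ge i_0$) is the same, and the constant $1024$ comfortably absorbs all losses.
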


Now, applying Lemma~\ref{lem:reduction}, for this 
choice of $F(\delta)$, we get that 
 \begin{eqnarray*} 
&&\littlesum_{i = i_0+1}^{n-i_0-1}  p(r, i) q(n,i)\\
 &\leq& \frac{1}{\sqrt{k}} \cdot \frac{1024}{\Lambda(n)} \cdot \frac{(n+1)^{1/2+\zeta}}{i_0^{1/2+ \zeta}}\int_{\delta=0}^1 \left( \frac{1}{\delta^{1/2-\zeta}} + \frac{1}{(1-\delta)^{1/2- \zeta}}\right)\frac{1}{\sqrt{\delta(1-\delta)}} d\delta. \\
 &=&  O\left(\frac{1}{\zeta}  \cdot \frac{1}{\sqrt{k}} \cdot \frac{1}{\Lambda(n)} \cdot \frac{(n+1)^{1/2+\zeta}}{i_0^{1/2+ \zeta}}\right).
 \end{eqnarray*}

We choose (with foresight) $i_0 = \new{\lceil} \frac{n}{k^{1/3}} \new{\rceil}$.  Then the above expression simplifies to 
$$
\littlesum_{i = i_0+1}^{n-i_0-1}  p(r, i) q(n,i)  = O\left(\frac{1}{\zeta} \cdot \frac{1}{\Lambda(n)} \cdot \frac{1}{k^{1/3 -\zeta}}\right)
$$
Now plugging  $i_0 =  \new{\lceil} \frac{n}{k^{1/3}} \new{\rceil}$ in (\ref{lb:1}) and (\ref{lb:2}), we get 
$$
\sum_{i \le i_0 \vee i \ge n-i_0} p(r, i) q(n,i)  = O\left( \frac{1}{\eta \Lambda(n)}  \cdot \frac{1}{k^{1/3}}\right)
$$
Combining these equations, we get the final result,
and Theorem~\ref{thm:ac-general-weights} is proved. \qed

\subsection{Proof of Claim~\ref{claim:technical}}
\label{sec:proofoftechnical}

We will need the following basic facts : 
\begin{fact}\label{fac:gamma}
For $x, y \in \R_+$ let $\Gamma : \R_+ \rightarrow \R$ be the usual ``Gamma" function, so that
$$
\int_{\delta=0}^1 \delta^x (1-\delta)^y d\delta = \frac{\Gamma(x+1) \cdot \Gamma(y+1)}{\Gamma(x+y+2)}
$$
Recall that for $z \in \Z_{+}$, $\Gamma(z) = (z-1)!$. 
\end{fact}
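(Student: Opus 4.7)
The plan is to evaluate $\int_{0}^{1} F(\delta)\, g(\delta,i)\, d\delta$ in closed form using the Beta-integral identity stated in Fact~\ref{fac:gamma}, and then lower bound the resulting ratios of Gamma functions by invoking the standard Gautschi inequality.

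First I would split the integral along the two summands in $F(\delta)$. Since $g(\delta,i) = \binom{n}{i}\delta^{i}(1-\delta)^{n-i}$, Fact~\ref{fac:gamma} gives
\[
\binom{n}{i}\int_{0}^{1} \delta^{\,i-1/2+\zeta}(1-\delta)^{\,n-i}\, d\delta \;=\; \frac{\Gamma(n+1)\,\Gamma(i+1/2+\zeta)}{\Gamma(i+1)\,\Gamma(n+3/2+\zeta)},
\]
with the symmetric expression $\frac{\Gamma(n+1)\,\Gamma(n-i+1/2+\zeta)}{\Gamma(n-i+1)\,\Gamma(n+3/2+\zeta)}$ for the second summand. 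Thus the target integral equals $\tfrac{1024}{\Lambda(n)} \cdot \tfrac{(n+1)^{1/2+\zeta}}{i_0^{1/2+\zeta}}$ times the sum of these two Gamma ratios, and the claimed inequality $\ge q(n,i) = \tfrac{1}{\Lambda(n)}\big(\tfrac{1}{i}+\tfrac{1}{n-i}\big)$ reduces to showing that the first summand contributes at least $\tfrac{1}{\Lambda(n)\, i}$ and (by symmetry) the second contributes at least $\tfrac{1}{\Lambda(n)(n-i)}$.

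For the Gamma ratios I would invoke Gautschi's inequality: for $x > 0$ and $0 < s < 1$,
\[
(x+1)^{s-1} \;\le\; \frac{\Gamma(x+s)}{\Gamma(x+1)} \;\le\; x^{s-1}.
\]
Taking $s = 1/2+\zeta$ and $x = i$ gives $\Gamma(i+1/2+\zeta)/\Gamma(i+1) \ge (i+1)^{-1/2+\zeta}$. For the denominator I would write $\Gamma(n+3/2+\zeta) = (n+1/2+\zeta)\,\Gamma(n+1/2+\zeta)$ and apply the inequality in the other direction (with $x=n$) to obtain $\Gamma(n+1)/\Gamma(n+3/2+\zeta) \ge n^{1/2-\zeta}/(n+1/2+\zeta)$. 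Combining these two estimates with the prefactor $(n+1)^{1/2+\zeta}/i_0^{1/2+\zeta}$, and using $n+1/2+\zeta \le n+1$, $i_0 \le i$, and $i+1 \le 2i$, I expect to arrive at a clean bound of the form $\ge \tfrac{1}{2i}$ for the first summand (without the $1024$), which is comfortably above the required $\tfrac{1}{1024\cdot i}$. The mirror computation, valid because $i \le n - i_0$ implies $n-i \ge i_0$, yields the $\tfrac{1}{n-i}$ contribution.

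The main obstacle is book-keeping the absolute constants through the Gautschi estimates and verifying that the particular prefactor $(n+1)^{1/2+\zeta}/i_0^{1/2+\zeta}$ is exactly what is needed to offset the decay $n^{-1/2-\zeta}(i+1)^{-1/2+\zeta}$ coming from the Gamma ratios. This is where the hypothesis $i_0 \le i \le n - i_0$ is essential: without it, the elementary comparison $i_0^{1/2+\zeta}(i+1)^{1/2-\zeta} \lesssim i$ used at the end of the chain would fail, and the lower bound by $1/i$ would no longer follow from our prefactor.
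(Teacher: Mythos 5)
Your proposal does not prove the statement it was assigned. Fact~\ref{fac:gamma} is the Euler Beta-integral identity
\[
\int_{0}^{1} \delta^{x}(1-\delta)^{y}\,d\delta \;=\; \frac{\Gamma(x+1)\,\Gamma(y+1)}{\Gamma(x+y+2)},
\]
whereas what you have written is a proof sketch of Claim~\ref{claim:technical} (the lower bound $\int_{0}^{1}F(\delta)g(\delta,i)\,d\delta\ge q(n,i)$), in which Fact~\ref{fac:gamma} is \emph{invoked} as a known identity in your very first step. Nothing in your argument establishes the identity itself; you assume exactly the thing you were asked to prove. The paper states Fact~\ref{fac:gamma} without proof, as a standard property of the Beta function; a self-contained argument would proceed, for instance, by writing $\Gamma(x+1)\Gamma(y+1)=\int_{0}^{\infty}\int_{0}^{\infty}s^{x}t^{y}e^{-s-t}\,ds\,dt$ and substituting $s=u\delta$, $t=u(1-\delta)$ (with Jacobian $u$), which factors the double integral as $\Gamma(x+y+2)\int_{0}^{1}\delta^{x}(1-\delta)^{y}\,d\delta$; alternatively, for integer exponents one can integrate by parts repeatedly to get $\frac{x!\,y!}{(x+y+1)!}$ directly.

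As a side remark: read as a proof of Claim~\ref{claim:technical}, your route via Gautschi's inequality $(x+1)^{s-1}\le\Gamma(x+s)/\Gamma(x+1)\le x^{s-1}$ is a legitimate and arguably cleaner alternative to the paper's Stirling-approximation computation (Fact~\ref{fac:gammaapx}), and the point at which you use $i_{0}\le i\le n-i_{0}$ matches where the paper uses it. But that is a different statement with a different label, and it cannot stand in for a proof of Fact~\ref{fac:gamma}.
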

\begin{fact}\label{fac:gammaapx} (Stirling's approximation)
\new{For $z \in \R_{+}$,} we have $
\Gamma (z) = \sqrt{\frac{2 \pi}{z}}  \cdot \left(\frac{z}{e}\right)^z \cdot \left( 1+ O\left(\frac{1}{z} \right) \right).
$
In particular, there is an absolute constant $c_0 > 0$ 
such that for $z \ge c_0$
$$
\frac 12 \cdot \sqrt{\frac{2 \pi}{z}}  \cdot \left(\frac{z}{e}\right)^z 
\le \Gamma (z)  \le 2 \cdot \sqrt{\frac{2 \pi}{z}}  \cdot \left(\frac{z}{e}\right)^z .
$$
\end{fact}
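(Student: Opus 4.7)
The plan is to establish Stirling's approximation by Laplace's method applied to the standard integral representation $\Gamma(z) = \int_0^\infty t^{z-1} e^{-t}\, dt$ (valid for all $z > 0$). This is a classical textbook derivation: the main idea is that for large $z$ the integrand has a unique sharp peak near $t = z$, and the integral is dominated by a Gaussian-shaped neighbourhood of this peak.

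First I would rescale via the substitution $t = z(1+u)$ to center the integrand at its peak, obtaining
$$\Gamma(z) \;=\; z^z e^{-z} \int_{-1}^\infty \frac{e^{z\phi(u)}}{1+u}\, du, \qquad \phi(u) := \log(1+u) - u.$$
A direct computation gives $\phi(0) = \phi'(0) = 0$, $\phi''(0) = -1$, and $\phi$ is strictly concave on $(-1,\infty)$; thus $\phi(u) = -u^2/2 + O(u^3)$ near the origin and $\phi(u) < 0$ for all $u \ne 0$ in its domain.

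Next I would split the integral into a ``Gaussian core'' $|u| \le z^{-1/3}$ and the complementary tail. On the core, substituting $u = v/\sqrt{z}$ and expanding $\phi$ and $1/(1+u)$ to the appropriate order yields a truncated Gaussian integral in $v$ whose value is $\sqrt{2\pi/z}(1 + O(1/z))$. On the tail, strict concavity of $\phi$ together with the exponential decay of the integrand as $u \to \infty$ ensures $e^{z\phi(u)} \le e^{-\Omega(z^{1/3})}$ on $\{z^{-1/3} < |u| \le 1/2\}$ and $e^{z\phi(u)} \le e^{-\Omega(z)}$ on $\{|u| > 1/2\}$, so the tail contributes $e^{-\Omega(z^{1/3})}$ in total, which is negligible. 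Combining the two pieces gives $\Gamma(z) = \sqrt{2\pi/z}\,(z/e)^z\,(1 + O(1/z))$, which is the first formula. The ``in particular'' clause then follows by choosing $c_0$ large enough that the implicit $O(1/z)$ error has absolute value at most $1/2$ whenever $z \ge c_0$, so that $1 + O(1/z) \in [1/2, 2]$.

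The main technical nuisance is rigorously bookkeeping the cubic remainder of $\phi$ and the first-order expansion of $1/(1+u)$ uniformly over the Gaussian core; since the fact only requires a factor-of-$2$ envelope rather than the sharp $1 + O(1/z)$ error, any loose explicit bound on these remainders suffices. Given that Stirling's formula is a standard classical result, the cleanest route in the paper is simply to cite a standard reference (e.g.\ Whittaker and Watson, or Artin's \emph{The Gamma Function}) rather than reproduce this calculation in detail.
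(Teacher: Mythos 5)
The paper gives no proof of this fact at all: it is stated as classical Stirling's approximation and used as a black box, with the factor-of-$2$ envelope for $z\ge c_0$ being an immediate consequence of the asymptotic. Your Laplace's-method outline (substitution $t=z(1+u)$, Gaussian core plus exponentially small tails) is the standard and correct derivation of that asymptotic, and your closing recommendation to simply cite a standard reference is exactly how the paper treats it.
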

\begin{fact}\label{fac:limit}
For $x \in \mathbb{R}$ and $x \ge 2$, we have
$
\left( 1- \frac{1}{x} \right)^{x} \ge \frac{1}{4}.
$
\end{fact}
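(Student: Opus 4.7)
\textbf{Proof proposal for Fact~\ref{fac:limit}.}
The plan is to reduce the bound to a monotonicity statement together with an evaluation at the endpoint $x=2$. First I would check the base case directly: at $x = 2$ we have $(1 - 1/2)^2 = 1/4$, so the inequality holds with equality at the left endpoint of the range. It therefore suffices to show that the function $f(x) \eqdef (1 - 1/x)^x$ is non-decreasing on $[2,\infty)$; the claim will then follow immediately.

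To establish monotonicity, I would take logarithms and work with $g(x) \eqdef \log f(x) = x \log(1 - 1/x)$ on $(1,\infty)$. A direct computation gives
\[
g'(x) = \log\!\left(1 - \frac{1}{x}\right) + \frac{1}{x-1}.
\]
The cleanest way to see this is non-negative is to substitute $t = 1/x \in (0, 1/2]$, which transforms $g'(x)$ into $h(t) \eqdef \log(1 - t) + t/(1-t)$. Then $h(0) = 0$ and
\[
h'(t) = -\frac{1}{1-t} + \frac{1}{(1-t)^2} = \frac{t}{(1-t)^2} \ge 0
\]
for $t \in [0,1)$. Hence $h(t) \ge 0$ on $[0, 1)$, which means $g'(x) \ge 0$ for all $x > 1$, and consequently $f(x)$ is non-decreasing on $(1,\infty)$ (in fact, strictly increasing on $(1, \infty)$, with limit $1/e$ as $x \to \infty$).

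Combining the two steps: for $x \ge 2$ we have $f(x) \ge f(2) = 1/4$, which is exactly the claimed inequality. The entire argument is elementary calculus; there is no real obstacle, only the routine verification that the derivative $h'(t)$ has the sign claimed, which follows from the algebraic identity above.
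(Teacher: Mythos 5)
Your proof is correct. The paper states Fact~\ref{fac:limit} without proof (it is listed among the ``basic facts'' used in the proof of Claim~\ref{claim:technical}), so there is no argument in the paper to compare against; your argument --- checking equality at $x=2$ and establishing monotonicity of $x \mapsto (1-1/x)^x$ on $(1,\infty)$ via the sign of the derivative of $x\log(1-1/x)$ --- is the standard elementary proof, and every step (the computation of $g'$, the substitution $t=1/x$, and the sign of $h'(t)=t/(1-t)^2$) checks out.
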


We can now proceed with the proof of Claim~\ref{claim:technical}.
We consider the case when $i_0 \le i \le n/2$. (The proof of the complementary case ($n-i_0-1 \ge i >n/2$) is essentially identical.)
We have the following chain of inequalities:
\begin{eqnarray*}
&&\int_{\delta=0}^1 F(\delta) g(\delta,i) d \delta\\
 &=&       
\frac{1024}{\Lambda(n)}\cdot \frac{(n+1)^{1/2+\zeta}}{ i_0^{1/2+\zeta}} 
\cdot \binom{n}{i} \cdot \int_{\delta=0}^1 \delta^i (1-\delta)^{n-i} \cdot 
\left( \frac{1}{\delta^{1/2-\zeta}} + \frac{1}{(1-\delta)^{1/2- \zeta}}
\right)\ d\delta  \\ 
&\ge&  \frac{1024}{\Lambda(n)}\cdot \frac{(n+1)^{1/2+\zeta}}{
i_0^{1/2+\zeta}} \cdot \binom{n}{i} \cdot \int_{\delta=0}^1 \delta^{i-1/2 
+ \zeta} (1-\delta)^{n-i} \ d\delta  \\ 
&=& \frac{1024}{\Lambda(n)} \cdot \frac{(n+1)^{1/2+\zeta}}{ i_0^{1/2+\zeta}}
\cdot \binom{n}{i} \cdot  \frac{\Gamma(n-i+1) \cdot \Gamma(i+1/2 + \zeta)}
{\Gamma(n+3/2 + \zeta)}  \quad \textrm{(using Fact~\ref{fac:gamma})}\\ 
&=& \frac{1024}{\Lambda(n)}  \cdot \frac{(n+1)^{1/2+\zeta}}{ i_0^{1/2+\zeta}}
\cdot \frac{\Gamma(n+1)}{\Gamma(i+1) \cdot \Gamma(n-i+1)} \cdot  
\frac{\Gamma(n-i+1) \cdot \Gamma(i+1/2 + \zeta)}{\Gamma(n+3/2 + \zeta)} \\ 
&=& \frac{1024}{\Lambda(n)} \cdot \frac{(n+1)^{1/2+\zeta}}{ i_0^{1/2+\zeta}}
\cdot \frac{\Gamma(n+1) \cdot  \Gamma(i+1/2 + \zeta)}{\Gamma(i+1) \cdot 
\Gamma(n+3/2 + \zeta)}  \\
\end{eqnarray*}
We now proceed to bound from below the right hand side of the last inequality. Towards that, using Fact~\ref{fac:gammaapx} and assuming $n$ 
and $i$ are large enough, we have
\begin{eqnarray*}
&& \frac{\Gamma(n+1) \cdot  \Gamma(i+1/2 + \zeta)}{\Gamma(i+1) \cdot \Gamma(n+3/2 + \zeta)} \\ 
&\ge& \frac{1}{16} \cdot  \frac{(n+1)^{n+1/2}}{(i+1)^{i+1/2}} \cdot \frac{(i+1/2 + \zeta)^{i+\zeta}}{(n+3/2 +\zeta)^{n+\zeta+1}}   \\
 &\ge& \frac{1}{16} \cdot \frac{1}{n+2} \cdot  \frac{(n+1)^{n+1/2}}{(i+1)^{i+1/2}} \cdot \frac{(i+1/2 + \zeta)^{i+\zeta}}{(n+3/2 +\zeta)^{n+\zeta}} \\
 &\ge& \frac{1}{16} \cdot \frac{1}{n+2} \cdot  \frac{(n+1)^{n+\zeta}}{(n+3/2 +\zeta)^{n+\zeta}} \cdot \frac{(i+1/2 + \zeta)^{i+\zeta}}{(i+1)^{i+\zeta}} \cdot \frac{(n+1)^{1/2-\zeta}}{(i+1)^{1/2-\zeta}} \\
 &\ge& \frac{1}{256} \cdot  \frac{1}{n+2} \cdot \frac{(n+1)^{1/2-\zeta}}{(i+1)^{1/2-\zeta}}  \ge \frac{1}{512} \cdot \frac{1}{(n+1)^{1/2+\zeta}}\cdot \frac{1}{(i+1)^{1/2-\zeta}} 
\end{eqnarray*}
Plugging this back, we get 
\begin{eqnarray*}
\int_{\delta=0}^1 F(\delta) g(\delta,i) d \delta  &\ge& \frac{1024}{\Lambda(n)} \cdot \frac{(n+1)^{1/2+\zeta}}{ i_0^{1/2+\zeta}}\cdot \frac{\Gamma(n+1) \cdot  \Gamma(i+1/2 + \zeta)}{\Gamma(i+1) \cdot \Gamma(n+3/2 + \zeta)} \\ &\ge&\frac{1024}{\Lambda(n)} \cdot \frac{(n+1)^{1/2+\zeta}}{ i_0^{1/2+\zeta}}\cdot  \frac{1}{512} \cdot \frac{1}{(n+1)^{1/2+\zeta}}\cdot \frac{1}{(i+1)^{1/2-\zeta}}  \\
&=& \frac{2}{\Lambda(n)}  \cdot \frac{1}{ i_0^{1/2+\zeta} } \cdot \frac{1}{(i+1)^{1/2-\zeta}} \ge \frac{2}{\Lambda(n)} \cdot \frac{1}{i} \ge q(n,i)
\end{eqnarray*} 
which concludes the proof of the claim. \qed

\section{A Useful Algorithmic Tool}
\label{sec:algorithmic}

In this section we describe a useful algorithmic tool 
arising from recent work in computational complexity theory.
The main result we will need is
the following theorem of \cite{TTV:09short} (the ideas go 
back to \cite{Impagliazzo:95short}
and were used in a different form in \cite{DDFS12}):

\begin{theorem}\label{TTV}(\cite{TTV:09short})
Let $X$ be a finite domain, $\mu$ be a samplable probability distribution over $X$, $f : X \rightarrow [-1,1]$ be a bounded function, and $\mathcal{L}$ be a finite family of Boolean functions $\ell: X \rightarrow \{-1,1\}$. There is an algorithm \mytextsf{Boosting-TTV} with the following
 properties:  Suppose \mytextsf{Boosting-TTV} is given as input a list $(a_{\ell})_{\ell \in \mathcal{L}}$ of real values and a parameter $\xi>0$ such that
$|\E_{x \sim \mu} [f(x) \ell(x)] - a_{\ell}| \le \xi/16$ for every $\ell \in \mathcal{L}$.
Then \mytextsf{Boosting-TTV} outputs a function $h : X \rightarrow [-1,1]$ with the following properties:

\begin{enumerate}
\item [(i)] $|\E_{x \sim \mu} [\ell(x) h(x) - \ell(x) f(x)]| \le \xi$ for every $\ell \in {\cal L}$;
\item [(ii)] $h(x)$ is of the form $h(x) = P_1({\frac \xi 2} \cdot 
\sum_{\ell \in \mathcal{L}} w_{\ell} \ell(x))$ where the $w_\ell$'s are integers whose absolute values sum to $O(1/\xi^2).$
\end{enumerate}
The algorithm runs for $O(1/\xi^2)$ iterations, where in each iteration it estimates
$\E_{x \sim \mu}[h'(x)\ell(x)]$ to within additive accuracy $\pm \xi/16$. Here
each $h'$ is a function of the form
$h'(x)=P_1( {\frac \xi 2} \cdot \sum_{\ell \in \mathcal{L}} v_{\ell} \ell(x))$, where the $v_\ell$'s are
integers whose absolute values sum to $O(1/\xi^2).$
\end{theorem}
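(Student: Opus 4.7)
The plan is to establish Theorem~\ref{TTV} via a standard boosting / multiplicative-weights argument, maintaining an integer weight vector $(v_\ell^{(t)})_{\ell \in \mathcal{L}}$ initialized at zero. At iteration $t$, set $L_t(x) \eqdef (\xi/2) \sum_\ell v_\ell^{(t)} \ell(x)$ and $h_t(x) \eqdef P_1(L_t(x))$; empirically estimate $b_{\ell,t} \approx \E_{x \sim \mu}[h_t(x)\ell(x)]$ to additive accuracy $\pm \xi/16$ by random sampling (justified by Theorem~\ref{thm:chernoff}); halt and output $h_t$ if $|b_{\ell,t} - a_\ell| \le \xi/2$ for every $\ell$. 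Otherwise pick any violating $\ell^*$, set $\sigma \eqdef \sign(a_{\ell^*} - b_{\ell^*,t})$, and update $v_{\ell^*}^{(t+1)} \leftarrow v_{\ell^*}^{(t)} + \sigma$, leaving the other $v_\ell$'s unchanged.

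Once the algorithm halts, property (i) is immediate from the triangle inequality: $|\E[(h_t - f)\ell]| \le |\E[h_t\ell] - b_{\ell,t}| + |b_{\ell,t} - a_\ell| + |a_\ell - \E[f\ell]| \le \xi/16 + \xi/2 + \xi/16 \le \xi$ for every $\ell$. Property (ii) and the total-weight bound $\sum_\ell |w_\ell| = O(1/\xi^2)$ both reduce to showing that the number of iterations $T$ is $O(1/\xi^2)$, since each iteration increments exactly one $|v_\ell|$ by one.

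To bound $T$, introduce the convex potential $\psi(z) \eqdef \int_0^z P_1(u)\, du$, which satisfies $\psi'(z) = P_1(z)$, is $1$-Lipschitz in the derivative, and obeys $\psi(z) - f z \ge -1/2$ whenever $|f| \le 1$. Define
$$\Psi_t \eqdef \E_{x \sim \mu}[\psi(L_t(x))] - (\xi/2) \sum_\ell v_\ell^{(t)} a_\ell,$$
so that $\Psi_0 = 0$. The pointwise descent inequality $\psi(b) \le \psi(a) + \psi'(a)(b-a) + (b-a)^2/2$, combined with $(\ell^*(x))^2 = 1$, yields
$$\Psi_{t+1} - \Psi_t \;\le\; (\xi/2)\,\sigma\bigl(\E[h_t\ell^*] - a_{\ell^*}\bigr) + \xi^2/8.$$
The selection rule guarantees $\sigma(a_{\ell^*} - \E[h_t\ell^*]) \ge \xi/2 - \xi/16 = 7\xi/16$, so each iteration decreases $\Psi$ by at least $7\xi^2/32 - \xi^2/8 = 3\xi^2/32$. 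Conversely, rewriting $\Psi_t = \E[\psi(L_t) - f L_t] - (\xi/2)\sum_\ell v_\ell^{(t)}(a_\ell - \E[f\ell])$ and applying $\psi(z) - fz \ge -1/2$ together with $|a_\ell - \E[f\ell]| \le \xi/16$ and $\sum_\ell |v_\ell^{(t)}| \le t$ shows $\Psi_t \ge -1/2 - \xi^2 t/32$. Combining the per-step decrease with this lower bound gives $T \le 8/\xi^2$, as desired.

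The main obstacle is calibrating the two slack terms so that the first-order improvement strictly dominates both: the $\xi^2/8$ penalty from the second-order Taylor term, and the $O(\xi^2 T)$ penalty from the fact that the input $a_\ell$'s only approximate $\E[f\ell]$. Choosing the halting threshold as $\xi/2$ (noticeably larger than the estimation error $\xi/16$ and the input error $\xi/16$) provides the required margin. A union bound over $T = O(1/\xi^2)$ iterations and over $|\mathcal{L}|$ forms ensures that the $\pm \xi/16$ estimation accuracy holds throughout the run with arbitrarily high probability at polynomial sample complexity per iteration.
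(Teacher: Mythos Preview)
Your proposal is correct and follows essentially the same boosting argument as the paper: the halting threshold $\xi/2$, the per-step ``progress'' of order $\xi$, and the $O(1/\xi^2)$ iteration bound all match. The only difference is presentational: the paper black-boxes the key inequality as Claim~3.4 of \cite{TTV:09short} (the pointwise bound $\sum_{j\le t} f_j(x)(f(x)-h_{j-1}(x)) \le 4/\gamma + \gamma t/2$), whereas you unpack that claim via the explicit smooth convex potential $\psi(z)=\int_0^z P_1$, yielding a self-contained proof with a slightly better constant ($T\le 8/\xi^2$ versus the paper's $64/\xi^2$).
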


We note that Theorem~\ref{TTV} is not explicitly stated in the above form
in \cite{TTV:09short}; in particular, neither the time complexity of the 
algorithm nor the fact that it suffices for the algorithm to be given 
``noisy" estimates $a_\ell$ of the values 
$\E_{x \sim \mu} [f(x) \ell(x)]$ is explicitly stated in \cite{TTV:09short}.  
So for the sake of completeness, in the following
 we state the algorithm in full (see Figure~\ref{fig:TTV}) 
and sketch a proof of correctness of this algorithm 
using results that are explicitly proved in \cite{TTV:09short}.

\begin{figure}[tb]
\label{fig:TTV}
\hrule
\vline
\begin{minipage}[t]{0.98\linewidth}
\vspace{10 pt}
\begin{center}
\begin{minipage}[h]{0.95\linewidth}
{\small

\underline{\mytextsf{Boosting-TTV}}\\ 

\medskip

\underline{\bf{Parameters:}}

\begin{tabular}{ccl}
$\xi$ &:=& positive real number \\
$\mu$ &:=& samplable distribution over finite domain $X$ \\
$\mathcal{L}$ &:=& finite list of functions such that all $\ell \in \mathcal{L}$ maps $X$ to $\{-1,1\}$. \\
$(a_{\ell})_{\ell \in \mathcal{L}}$ &:=& list 
of real numbers with the promise that some $f:  X \rightarrow [-1,1]$ has\\
& & $|\mathbf{E}_{x \sim \mu} [f(x) \ell(x)] - a_{\ell}| \le \xi/16$ for all
$\ell \in {\cal L}.$
\end{tabular}

\medskip

\underline{\bf{Output:}}

\noindent
An LBF $h(x) \equiv P_1( \sum_{\ell \in \mathcal{L}} w_{\ell} \ell(x))$, 
where $w_{\ell} \in \Z$, such that 
$\E_{x \sim \mu} [h(x) \ell(x)] - f(x)\ell(x)| \le \xi$ for all $\ell \in {\cal 
L}.$

\medskip

\underline{\bf{Algorithm}:} 
\begin{enumerate}
\item Let $\mathcal{L}^{0} \eqdef  \{\ell : \ell \in \mathcal{L}$  or $-\ell \in \mathcal{L}\}$.  Fix $\gamma \eqdef \xi/2.$
\item Let $h_0 \eqdef 0$.  Set $t=0$.
\item For each $\ell \in \mathcal{L}$, find $a_{\ell,t} \in \R$ such 
that $|\mathbf{E}_{x \sim \mu} [h_t(x) \ell(x)] - a_{\ell,t}| \le \xi/16$.
\item If $|a_{\ell} - a_{\ell,t}| \le \gamma$ for all $\ell \in {\cal L}$, 
then stop and output $h_t$.  
Otherwise, fix $\ell$ to be any element of $\mathcal{L}$ 
such that $|a_{\ell} - a_{\ell,t}| > \gamma$.
\begin{itemize}
\item If $a_{\ell} - a_{\ell,t} > \gamma $ then set $f_{t+1} \eqdef \ell$ else 
set $f_{t+1} \eqdef -\gamma$.  Note that $f_{t+1} \in \mathcal{L}^0$.
\item Define $h_{t+1}$ as $h_{t+1}(x) \eqdef P_1( \gamma (\littlesum_{j=1}^{t+1} f_j(x)))$.
\end{itemize}
\item Set $t = t+1$ and go to Step~3.
\end{enumerate}

\vspace{5 pt}

}
\end{minipage}
\end{center}

\end{minipage}
\hfill \vline
\hrule
\caption{Boosting based algorithm from \cite{TTV:09short} }
\end{figure}

\begin{proof}[{\bf {\em Proof of Theorem~\ref{TTV}}}]
It is clear from the description of the algorithm that (if and) when the 
algorithm \mytextsf{Boosting-TTV} terminates, the output $h$ satisfies
property (i) and has the form $h(x) = P_1( {\frac \xi 2}
\cdot \sum_{\ell \in \mathcal{L}} w_{\ell} \ell(x))$ where each $w_{\ell}$
is an integer.  It remains to bound the number of iterations (which
gives a bound on the sum of magnitudes of $w_\ell$'s) and indeed to show
that the algorithm terminates at all.

Towards this, we recall
Claim~3.4 in \cite{TTV:09short} states the following:
\begin{myclaim} \label{claim:my}
For all $x \in \mathop{supp}(\mu)$ and all $t \ge 1$, 
we have $\sum_{j=1}^t f_{j}(x) \cdot (f(x) - h_{j-1}(x)) 
\le (4/\gamma) + (\gamma t)/2.$
\end{myclaim}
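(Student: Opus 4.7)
My plan is to prove the claim via a standard potential (Bregman-divergence) argument. Fix $x \in \mathop{supp}(\mu)$, and to declutter notation, write $f, f_j, h_j$ for the values of those functions at $x$. Set $S_t \eqdef \sum_{j=1}^t f_j$, so by construction of the algorithm $h_t = P_1(\gamma S_t)$, with the conventions $S_0 = 0$ and $h_0 = 0$. The potential I will track is
$$\Psi(y) \eqdef \int_0^y P_1(u)\,du,$$
which admits the explicit form $\Psi(y) = y^2/2$ for $|y|\le 1$ and $\Psi(y) = |y| - \tfrac{1}{2}$ for $|y| \ge 1$. The three properties of $\Psi$ that will drive the argument are: (i) $\Psi'(y) = P_1(y)$ is $1$-Lipschitz; (ii) $\Psi(y) \ge |y| - 1$ for all $y \in \R$; and (iii) $\Psi(0) = 0$.

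Next I establish a per-step upper bound. Because $\Psi'$ is $1$-Lipschitz (integrating the pointwise bound $\Psi'(u) \le \Psi'(a) + |u - a|$ from $a$ to $b$ handles both signs of $b-a$), for all $a, b \in \R$ we have $\Psi(b) - \Psi(a) \le P_1(a)(b-a) + (b-a)^2/2$. Taking $a = \gamma S_{t-1}$ and $b = \gamma S_t$, so that $b - a = \gamma f_t$ and $P_1(a) = h_{t-1}$, and using $|f_t| \le 1$ (recall $f_t \in \mathcal{L}^0$), this gives
$$\Psi(\gamma S_t) - \Psi(\gamma S_{t-1}) \ \le\ \gamma\, f_t\, h_{t-1} + \frac{\gamma^2}{2}.$$
Summing telescopically from $j = 1$ to $t$ and using $\Psi(\gamma S_0) = 0$ then yields
$$\Psi(\gamma S_t) \ \le\ \gamma \sum_{j=1}^t f_j\, h_{j-1} + \frac{\gamma^2 t}{2}.$$

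To close the argument, I invoke property (ii): $\Psi(\gamma S_t) \ge \gamma |S_t| - 1$, so rearranging gives $\sum_{j=1}^t f_j h_{j-1} \ge |S_t| - 1/\gamma - \gamma t/2$. At the same time $\sum_{j=1}^t f_j f = f \cdot S_t \le |S_t|$ because $|f| \le 1$. Subtracting the first inequality from this upper bound gives
$$\sum_{j=1}^t f_j (f - h_{j-1}) \ \le\ \frac{1}{\gamma} + \frac{\gamma t}{2} \ \le\ \frac{4}{\gamma} + \frac{\gamma t}{2},$$
which is the claim (indeed slightly stronger). The main step that must be gotten right is the choice of potential $\Psi$: its derivative has to reproduce $h_{t-1}$ so that cross terms telescope cleanly against $\sum f_j h_{j-1}$, while its second-order error has to be quadratic in $f_t$ so that $|f_t| \le 1$ keeps it $O(\gamma^2)$ per step; once $\Psi$ is in place, the remainder is routine telescoping together with the elementary bound $\Psi(y) \ge |y| - 1$.
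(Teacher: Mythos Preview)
Your argument is correct. The potential $\Psi(y) = \int_0^y P_1(u)\,du$ is exactly the right choice: its derivative reproduces $h_{t-1}$, its $1$-Lipschitz derivative gives the quadratic per-step error, and the lower bound $\Psi(y) \ge |y|-1$ closes the telescoping. Your computation in fact yields the sharper constant $1/\gamma$ in place of $4/\gamma$.

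As for comparison with the paper: the paper does not prove this claim at all. It simply records the statement as Claim~3.4 of \cite{TTV:09short} and then uses it as a black box to bound the number of iterations of \mytextsf{Boosting-TTV}. So you have supplied a self-contained proof where the paper only provides a citation. Your potential-function (Bregman-divergence) approach is the standard technique for this kind of regret bound and is essentially the argument one finds in \cite{TTV:09short}; there is nothing to contrast methodologically.
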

We now show how this immediately gives Theorem~\ref{TTV}.
Fix any $j \ge 0$, and suppose without loss of generality that  
$a_{\ell} - a_{\ell, j} > \gamma$.  We have that 
$$|\mathbf{E}_{x \sim \mu}[f_{j+1}(x)h_{j}(x) ]  - a_{\ell, j} | \le \xi/16 
\quad \text{and hence} \quad \mathbf{E}_{x \sim \mu}[f_{j+1}(x) h_j(x)] \le 
a_{\ell, j} + \xi/16,$$
and similarly
$$
|\mathbf{E}_{x \sim \mu}[f_{j+1}(x) f(x)]  - a_{\ell} | \le \xi/16 \quad 
\text{and hence} \quad \mathbf{E}_{x \sim \mu}[f_{j+1}(x) f(x)] \ge a_{\ell}
- \xi/16.
$$
Combining these inequalities with $a_\ell - a_{\ell,j} > \gamma=\xi/2$, we
conclude that $$\mathbf{E}_{x \sim \mu} [f_{j+1}(x)(f(x) - h_j(x))] 
\ge 3\xi/8.$$  Putting this together with Claim~\ref{claim:my}, we get that
$$
\frac{3\xi t}{8} \le \sum_{j=1}^t \mathbf{E}_{x \sim \mu} [f_{j}(x)(f(x) 
- h_{j-1}(x))] \le \frac{4}{\gamma} + \frac{\gamma t}{2}.
$$
Since $\gamma = \xi/2$, this means that if the algorithm runs for $t$ time 
steps, then $8/\xi \ge (\xi t)/8$, which implies that $t \le 64/\xi^2$.  
This concludes the proof.
\end{proof}


\section{Our Main Results}
\label{sec:mainresults}

In this section we combine ingredients from the previous subsections and prove 
our main results, Theorems~\ref{thm:main-arbitrary} and~\ref{thm:main-bounded}.

Our first main result gives an algorithm that works if \emph{any} monotone 
increasing $\eta$-reasonable LTF has approximately the right Shapley values:

\begin{theorem} \label{thm:main-arbitrary}
There is an algorithm \mytextsf{IS} (for \mytextsf{Inverse-Shapley}) with the 
following properties.  \mytextsf{IS} is given as input an accuracy parameter 
$\eps > 0$, a confidence parameter $\delta > 0$, and $n$ real values 
$a(1),\dots, a(n)$; its output is a pair
$v \in \R^n, \theta \in \R.$  Its running time is 
$\poly(n,2^{\poly(1/\eps)},\log(1/\delta)).$
The performance guarantees of \mytextsf{IS} are the following:

\begin{enumerate}

\item Suppose there is a monotone increasing $\eta$-reasonable LTF $f(x)$
such that $\dshap(a,f) \leq 1/\poly(n,2^{\poly(1/\eps)})$.   
Then with probability $1-\delta$
algorithm \mytextsf{IS} outputs $v \in \R^n,$ $\theta \in \R$ which are such
that the LTF $h(x)=\sign(v \cdot x - \theta)$ has $\dshap(f,h) \leq \eps.$

\item For any input vector $(a(1),\dots, a(n))$, the
probability that \mytextsf{IS} outputs $v \in \R^n, \theta \in \R$
such that the LTF $h(x) =\sign(v \cdot x- \theta)$ has $\dshap(f,h) > 
\eps$ is at most $\delta.$

\end{enumerate}

\end{theorem}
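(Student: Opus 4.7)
The plan is to reduce the Inverse Shapley problem to the problem of matching the ``coordinate correlation coefficients'' $f^*(i) = \E_{x \sim \mu}[f(x) x_i]$, and solve the latter using Boosting-TTV (Theorem~\ref{TTV}) with the function family $\mathcal{L} = \{1, x_1, \ldots, x_n\}$ and distribution $\mu$. By Lemma~\ref{lem:expshap}, together with the fact that a monotone increasing $\eta$-reasonable LTF $f$ has $f(\mathbf{1}) = 1$ and $f(\mathbf{-1}) = -1$, the Shapley values $\tilde{f}(i)$ determine each $f^*(i)$ only up to the single unknown scalar $c := \frac{1}{n}\sum_j f^*(j)$; the quantity $f^*(0) = \E_\mu[f]$ is a second unknown. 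Since both scalars lie in $[-1,1]$, the algorithm performs a two-dimensional grid search of resolution $\eps'$ over pairs $(c, \nu_0)$, obtaining $\poly(n, 1/\eps')$ candidate target vectors $(\tilde{a}(0),\dots,\tilde{a}(n))$ for the $f^*(i)$'s. For the correct grid point, $|\tilde{a}(i) - f^*(i)| \le O(\eps' / \Lambda(n))$.

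For each candidate target, the algorithm invokes Boosting-TTV with accuracy parameter $\xi$, producing an LBF $g(x) = P_1(v_0 + \sum_i v_i x_i)$ satisfying $|g^*(i) - f^*(i)| \le \xi + O(\eps')$ for every $i \in [0, n]$. The crucial step is then to upgrade this correlation-closeness to $\ell_1$-closeness under $\mu$. I would first apply Theorem~\ref{thm:LParg} to $f$ with a suitable parameter $k = \poly(1/\eps)$ to obtain a representation $f(x) = \sgn(w_0 + \sum_i w_i x_i)$ whose top weights are controlled by $\sigma_k$; combined with monotonicity and $\eta$-reasonableness, Theorem~\ref{thm:ac-general-weights} then yields $(r, \kappa)$-anti-concentration of the affine form with $r$ on the order of $\sigma_k/\sqrt{n}$ and $\kappa = O(1/(\log n \cdot k^{1/3-\zeta}))$. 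Lemma~\ref{lem:hatandstar} bounds $\dfour(f,g) \le O(\sqrt{n \log n})(\xi + \eps')$, and then Lemma~\ref{lem:anticonc-and-ell1} delivers $\E_\mu[|f - g|] \le \eps''$ for any desired $\eps''$, provided $\xi$ and $\eps'$ are chosen small enough relative to $\|w\|_1/r$.

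Next I convert the LBF $g$ to an LTF by sign-thresholding its defining affine form: $h(x) := \sgn(v_0 + \sum_i v_i x_i)$. Because $g(x) \in [-1,1]$ while $f(x), h(x) \in \{\pm 1\}$, one checks that whenever $f(x) \ne h(x)$ the value $g(x)$ lies on the opposite side of $0$ from $f(x)$, so $|f(x) - g(x)| \ge 1$ and hence $\Pr_{x \sim \mu}[f \ne h] \le \E_\mu[|f - g|] \le \eps''$. Fact~\ref{fac:func2} then gives $\sqrt{\sum_i (f^*(i) - h^*(i))^2} \le 2\sqrt{\Pr_\mu[f \ne h]}$, and chaining through Lemma~\ref{lem:hatandstar} and Lemma~\ref{lem:shapclose} yields
\[
\dshap(f, h) \;\le\; \tfrac{4}{\sqrt{n}} + O(\log^{3/2} n) \cdot \sqrt{\Pr_\mu[f \ne h]},
\]
which is at most $\eps$ for constant $\eps$ once $\eps''$ is polynomially small in $1/\log n$ and $\eps$.

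Finally, the algorithm must select among the $\poly(n, 1/\eps')$ candidate LTFs. For each candidate $h$ it estimates the Shapley values $\tilde{h}(i)$ to additive accuracy $\eps/\poly(n)$ by sampling $\pi \sim_R \ms$ and invoking Theorem~\ref{thm:chernoff}; it returns the candidate whose estimated Shapley vector is closest to the input $a$. A union bound over all candidates controls the overall failure probability by $\delta$ at the cost of a $\log(1/\delta)$ factor. The main obstacle I expect is the parameter calibration: Theorem~\ref{thm:ac-general-weights} loses a $1/\log n$ factor, while Theorem~\ref{thm:LParg} gives a bound $\|w\|_1/r$ that grows like $k^{k/2+1}\sqrt{n}$, so the Boosting-TTV accuracy $\xi$ and grid resolution $\eps'$ must be driven down to $1/(n^{O(1)} \cdot 2^{\poly(1/\eps)})$; this is precisely what produces the $2^{\poly(1/\eps)}$ factor in the claimed running time and must be balanced carefully against the $O(\sqrt{\log n})$ amplification in Lemma~\ref{lem:shapclose} so that the compound error ultimately collapses below $\eps$.
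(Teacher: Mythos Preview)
Your plan matches the paper's proof almost step for step: grid search over the two unknown scalars $f^*(0)$ and $\tfrac{1}{n}\sum_j f^*(j)$, run \mytextsf{Boosting-TTV} on $\mathcal{L}=\{1,x_1,\dots,x_n\}$, use Theorem~\ref{thm:LParg} together with Theorem~\ref{thm:ac-general-weights} to get anti-concentration, apply Lemma~\ref{lem:anticonc-and-ell1} for $\ell_1$-closeness, threshold the LBF to an LTF, and finish via Fact~\ref{fac:func2} and Lemma~\ref{lem:shapclose}, selecting among candidates by Shapley estimation.

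Two small points to tighten. First, your invocation of Fact~\ref{fac:func2} is slightly off: that fact requires an \emph{orthonormal} family, so it yields $\dfour(f,h)=\sqrt{\sum_i(\hat f(i)-\hat h(i))^2}\le 2\sqrt{\Pr_\mu[f\neq h]}$ directly (with the $L_i$'s), not a bound on $\sqrt{\sum_i(f^*(i)-h^*(i))^2}$. This means Lemma~\ref{lem:hatandstar} is unnecessary at that stage, and the paper's chain is simply Fact~\ref{fac:func2} $\to$ Lemma~\ref{lem:shapclose}, giving $\dshap(f,h)\le 4/\sqrt{n}+O(\sqrt{\log n})\sqrt{\Pr_\mu[f\neq h]}$ rather than your $O(\log^{3/2} n)$ factor. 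Second, the paper first disposes of the regime $\eps\le n^{-c}$ by brute-force enumeration of all LTFs; you need this so that $k=\Theta(1/\eps^{6+2\zeta})$ stays at most $n$ when invoking Theorem~\ref{thm:LParg}. With those two fixes your argument is the paper's.
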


\begin{proof}
We first note that we may assume $\eps > n^{-c}$ for a constant
$c>0$ of our choosing, for if $\eps \leq n^{-c}$ then the
claimed running time is $2^{\Omega(n^{2} \log n)}.$
In this much time we can easily enumerate all LTFs over $n$
variables (by trying all weight vectors with integer weights at most
$n^n$; this suffices by
\cite{MTT:61}) and compute their Shapley values exactly, and thus 
solve the problem.  So for the rest of the proof we assume
that $\eps > n^{-c}.$

It will be obvious from the description of \mytextsf{IS} that property
(2) above is satisfied, so the main job is to establish (1).
Before giving the formal proof we first describe an algorithm and analysis
achieving (1) for an idealized version of the problem.  We then
describe the actual algorithm and its analysis (which build on
the idealized version).

Recall that the algorithm is given as input  $\eps,\delta$ and $a(1),
\dots, a(n)$ that satisfy $\dshap(a,f) \leq
1/\poly(n,2^{\poly(1/\eps)})$ 
for some monotone increasing $\eta$-reasonable LTF $f$.
The idealized version of the problem is the following: we assume
that the algorithm is also given the two real values  
$f^*(0)$, $\sum_{i=1}^n {f}^*(i)/n$. It is also 
helpful to note that since $f$ is monotone and $\eta$-reasonable (and hence 
is not a constant function), it must be the case that 
$f(\mathbf{1})=1$ and $f(\mathbf{-1})=-1$. 

The algorithm for this idealized version is as follows:  
first, using Lemma~\ref{lem:expshap}, the 
values $\tilde{f}(i)$, $i=1,\dots,n$ are converted into values 
$a^*(i)$ which are approximations for the values
$f^*(i).$  Each $a^*(i)$ satisfies
$|a^*(i) - f^*(i)| \leq 1/\poly(n,2^{O(\poly(1/\eps))}).$
The algorithm sets $a^*(0)$ to $f^*(0).$
Next, the algorithm runs \mytextsf{Boosting-TTV} 
with the following input:  the family $\mathcal{L}$ of Boolean functions
is $\{1,x_1,\dots,x_n\}$; the values $a^*(0), \ldots, a^*(n)$ 
comprise the list of real values; $\mu$ is the distribution; and
the parameter $\xi$ is set to 
$1/\poly(n,2^{\poly(1/\eps)})$.
(We note that each execution of Step~3 of \mytextsf{Boosting-TTV},
namely finding values that closely estimate $\E_{x \sim \mu}[h_t(x)x_i]$
as required, is easily achieved using a standard sampling scheme;
for completeness 
in Appendix~\ref{sec:estimate}
we describe a procedure \mytextsf{Estimate-Correlation}
that can be used to do all the required estimations
with overall failure probability at most $\delta$.)
\mytextsf{Boosting-TTV} outputs an LBF $h(x) = P_1(v \cdot x
- \theta)$; the output of our overall algorithm is
the LTF $h'(x)=\sign(v \cdot x - \theta).$

Let us analyze this algorithm for the idealized scenario.
By Theorem~\ref{TTV}, the output function $h$ that 
is produced by \mytextsf{Boosting-TTV} is an LBF $h(x)=P_1(v \cdot x
- \theta)$ that satisfies
$\sqrt{\littlesum_{j=0}^n ({h}^*(j) - {f}^*(j))^2} = 1/\poly(n,2^{\poly(1/\eps)}).$
Given this, Lemma~\ref{lem:hatandstar} implies that
$\dfour(f,h) \leq 
\rho \eqdef 
1/\poly(n,2^{\poly(1/\eps)})$.

At this point, we have established that $h$ is a bounded function
that has $\dfour(f,h) \leq
1/\poly(n,2^{\poly(1/\eps)})$.  
We would like to apply
Lemma~\ref{lem:anticonc-and-ell1} and thereby assert that the
$\ell_1$ distance between $f$ and $h$ (with respect to $\mu$) is small.
To see that we can do this, we first note that 
since $f$ is a monotone increasing $\eta$-reasonable
LTF, by Theorem~\ref{thm:LParg} it has a representation as
$f(x)=\sign(w \cdot x + w_0)$ whose weights
satisfy the properties claimed in that theorem; in particular, for any choice of 
$\zeta>0$, 
after rescaling all the weights, the
largest-magnitude weight has magnitude 1, and 
the $k \eqdef \Theta_{\zeta,\eta}(1/\eps^{6 + 2 \zeta})$ largest-magnitude weights each have 
magnitude at least $r\eqdef 1/(n \cdot k^{O(k)})$.
(Note that since $\eps \geq n^{-c}$ we indeed have
$k \leq n$ as required.)
Given this, Theorem~\ref{thm:ac-general-weights} implies that 
the affine form $L(x) = w \cdot x + w_0$ satisfies
\begin{equation}
\Pr_{x \sim \mu}[|L(x)| < r] \leq \kappa \eqdef \eps^2/(\new{512}\log(n)),
\label{eq:ac1}
\end{equation}
i.e., it is $(r,\kappa)$-anticoncentrated with 
$\kappa=\eps^2/(\new{512}\log(n)).$
Thus we may indeed apply Lemma~\ref{lem:anticonc-and-ell1},
and it gives us that
\begin{equation}
\E_{x \sim \mu}[|f(x)-h(x)|] \leq {\frac {4 \|w\|_1 \sqrt{\rho}}
{r}} + \new{2} \kappa \leq \eps^2/(128 \log n).
\label{eq:zzz}
\end{equation}

Now let $h': \{-1,1\}^n \to \{-1,1\}$ be the LTF defined as
$h'(x)=\sign(v \cdot x - \theta)$ (recall that $h$ is the LBF
$P_1(v \cdot x - \theta)$).
Since $f$ is a $\{-1,1\}$-valued function, 
it is clear that for every input $x$ in the support of $\mu$,
the contribution of $x$ to 
$\Pr_{x \sim \mu}[f(x) \neq h'(x)]$  
is at most twice its contribution to 
$\E_{x \sim \mu}[|f(x)-h(x)|]$.
Thus we have that $\Pr_{x \sim \mu}[f(x) \neq h'(x)] \leq \eps^2/(64 \log n).$
We may now apply Fact~\ref{fac:func2}
to obtain that 
$\dfour(f,h') \leq \eps/(4\sqrt{\log n}).$
Finally, Lemma~\ref{lem:shapclose} gives that 
$$\dshap(f,h') \leq 4/\sqrt{n} + \sqrt{\Lambda(n)} \cdot \eps/(4 \sqrt{\log n})
< \eps/2.$$  
So indeed the LTF $h'(x) = \sign(v \cdot x - \theta)$
satisfies $\dshap(f,h') \leq \eps/ 2$ as desired.

\medskip

Now we turn from the idealized scenario to actually prove
Theorem~\ref{thm:main-arbitrary}, where we are not given the values of
${f}^*(0)$ and $\sum_{i=1}^n {f}^*(i)/n$. To get around this, we note that 
${f}^*(0),$ $\sum_{i=1}^n {f}^*(i)/n \in [-1,1].$
So the idea is that we will run the idealized algorithm repeatedly, trying ``all'' possibilities (up to some prescribed granularity) 
for ${f}^*(0)$ and for $\sum_{i=1}^n {f}^*(i)/n$. 
At the end of each such run we have a ``candidate'' LTF $h'$; we use a
simple procedure \mytextsf{Shapley-Estimate} 
(see Appendix~\ref{sec:estimate}) to estimate $\dshap(f,h')$
to within additive accuracy $\pm \eps/10$, and we output
any $h'$ whose estimated value of $\dshap(f,h')$ is at most $8\eps/10.$

We may run the idealized algorithm $\poly(n,2^{\poly(1/\eps)})$
times without changing its overall running time (up to
polynomial factors).  Thus we can try a net of possible
guesses for $f^*(0)$ and $\sum_{i=1}^n {f}^*(i)/n$
which is such that one guess will be within $\pm 1/
\poly(n,2^{\poly(1/\eps)})$ of the the correct values for
both parameters.
It is straightforward to verify
that the analysis of the idealized scenario given above is sufficiently
robust that when these ``good'' guesses are encountered, the algorithm will 
with high probability 
generate an LTF $h'$ that has $\dshap(f,h') \leq 6 \eps/10$.  
A straightforward analysis of running time and failure probability 
shows that properties (1) and (2) are achieved as desired,
and Theorem~\ref{thm:main-arbitrary} is proved.
\end{proof}

For any monotone $\eta$-reasonable target LTF $f$, 
Theorem~\ref{thm:main-arbitrary} 
constructs an output LTF whose Shapley distance from $f$ is at most $\eps$,
but the running time is exponential in $\poly(1/\eps).$
We now show that if the target monotone $\eta$-reasonable
LTF $f$ has integer weights that are at most $W$,
then we can construct an output LTF $h$ with $\dshap(f,h) \leq
n^{-1/8}$ running in time $\poly(n,W)$;
this is a far faster running time than
provided by Theorem~\ref{thm:main-arbitrary}
for such small $\eps$.  (The ``1/8'' 
is chosen for convenience; it will be clear from the proof that 
any constant strictly less than 1/6 would suffice.)

\ignore{

achieves $\eps$-accuracy for any 
Our second main result gives an algorithm that takes as input a weight bound for the desired
output LTF.  Its running time depends polynomially on this weight bound, but also only
polynomially on the accuracy parameter $\eps$ (compare with Theorem~\ref{thm:main-arbitrary}):

}

\begin{theorem} \label{thm:main-bounded}
There is an algorithm \mytextsf{ISBW} 
(for \mytextsf{Inverse-Shapley} with \mytextsf{Bounded Weights})
with the following properties.
\mytextsf{ISBW} is given as input a 
weight bound $W \in \Z_{+}$, a confidence parameter
 $\delta > 0$, and $n$ real values $a(1),\dots, a(n);$ its output is a pair
$v \in \R^n, \theta \in \R.$ 
Its running time is $\poly(n,W,\log(1/\delta)).$
The performance guarantees of \mytextsf{ISBW} are the following:

\begin{enumerate}

\item 
Suppose there is a monotone increasing $\eta$-reasonable 
LTF $f(x) = \sign(u \cdot x
- \theta)$, where each $u_i$ is an integer with $|u_i| \leq W$,
such that $\dshap(a,f) \leq 1/\poly(n,W).$  
Then with probability $1-\delta$
algorithm \mytextsf{ISBW} outputs $v \in \R^n,$ $\theta \in \R$ which are such
that the LTF $h(x)=\sign(v \cdot x - \theta)$ has $\dshap(f,h) \leq n^{-1/8}.$

\item For any input vector $(a(1),\dots,a(n))$, the
probability that \mytextsf{IS} outputs $v, \theta$
such that the LTF $h(x) =\sign(v \cdot x- \theta)$ has $\dshap(f,h) > 
n^{-1/8}$ is at most $\delta.$

\end{enumerate}

\end{theorem}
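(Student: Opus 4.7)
The algorithm \mytextsf{ISBW} would follow the same template as \mytextsf{IS} from Theorem~\ref{thm:main-arbitrary}: enumerate guesses for $f^*(0)$ and $\frac{1}{n}\littlesum_j f^*(j)$ on a grid of granularity $1/\poly(n,W)$; convert the input Shapley values into target coordinate correlations $a^*(i)$ via Lemma~\ref{lem:expshap}; invoke \mytextsf{Boosting-TTV} on the family $\{1,x_1,\dots,x_n\}$ under distribution $\mu$ to obtain an LBF $h(x)=P_1(v\cdot x-\theta)$; and output the LTF $h'(x)=\sgn(v\cdot x-\theta)$, checking candidates with \mytextsf{Shapley-Estimate}. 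The crucial parameter choice is to set the Boosting-TTV accuracy to $\xi = 1/\poly(n,W)$, yielding total runtime $\poly(n,W,\log(1/\delta))$ by Theorem~\ref{TTV}.

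The correctness analysis parallels Theorem~\ref{thm:main-arbitrary}, but the bounded-weight hypothesis lets us bypass Theorem~\ref{thm:LParg} entirely and thereby avoid the exponential-in-$1/\eps$ blow-up from the $k^{O(k)}$ factor. Unrolling the chain $\dshap \to \dfour \to \Pr_\mu[f\neq h']\to \E_\mu[|f-h|]$ via Lemma~\ref{lem:shapclose} and Fact~\ref{fac:func2} backward from the goal $\dshap(f,h')\le n^{-1/8}$, it suffices to achieve $\E_{x\sim\mu}[|f(x)-h(x)|] = O(n^{-1/4}/\log n)$. By Lemma~\ref{lem:anticonc-and-ell1}, this decomposes into (a) Fourier closeness $\rho := \dfour(f,h) \le 1/\poly(n,W)$ and (b) anti-concentration $\kappa = O(n^{-1/4}/\log n)$ for the affine form $u\cdot x-\theta$ at radius $\delta_{\mathrm{ac}}=1$. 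Requirement (a) follows directly from Boosting-TTV combined with Lemma~\ref{lem:hatandstar}, using $\|u\|_1\le nW$ to translate the $\xi$-accuracy on coordinate correlations into the needed Fourier bound. For (b), integer-valued weights give the key simplification: every nonzero $u_i$ automatically satisfies $|u_i|\ge 1$, so with $r=1$ the set $S=\{i:|u_i|\ge r\}$ equals the support $T=\{i:u_i\neq 0\}$, and Theorem~\ref{thm:ac-general-weights} with $k=|T|$ yields $\kappa=O\!\left(\frac{1}{|T|^{1/3-\zeta}\log n}\right)$, matching the target as soon as $|T| \gtrsim n^{3/4}$. Combined with the $\poly(n,W)$-time guess-and-check framework, property (2) of the theorem also follows from the output-verification step of \mytextsf{Shapley-Estimate}, exactly as in \mytextsf{IS}.

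The main obstacle is the degenerate regime in which the target LTF has fewer than $\approx n^{3/4}$ relevant variables, where the bound from Theorem~\ref{thm:ac-general-weights} alone is not strong enough. Here I would exploit the structural observation that $\tilde{f}(i)=0$ whenever $u_i=0$: combined with the input promise $\dshap(a,f)\le 1/\poly(n,W)$, this forces $|a(i)|\le 1/\poly(n,W)$ for every $i\notin T$, so a simple thresholding step identifies a candidate set $T''\supseteq T$ of cardinality $O(n^{3/4})$; one can then restrict the Boosting-TTV family to $\{1\}\cup\{x_i:i\in T''\}$ and re-run the analysis with the tighter bounds $\|u\|_1\le |T''|\cdot W$ and a Littlewood--Offord argument on $|T''|$ integer weights (via the $p$-biased bound of Theorem~5 plumbed through Lemma~\ref{lem:reduction}). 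Getting the parameters of this low-support branch to mesh with the main branch, while preserving the $\poly(n,W)$ runtime and the $n^{-1/8}$ Shapley guarantee for all possible values of $|T|$, is the part of the argument I expect to be the most delicate.
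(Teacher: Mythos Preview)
Your high-level template is exactly right and matches the paper: run the same guess-and-check scheme as \mytextsf{IS}, with \mytextsf{Boosting-TTV} accuracy $\xi=1/\poly(n,W)$, and then chain Lemma~\ref{lem:hatandstar}, Lemma~\ref{lem:anticonc-and-ell1}, Fact~\ref{fac:func2} and Lemma~\ref{lem:shapclose} exactly as you describe. The place where you diverge from the paper is the treatment of the low-support regime, which you flag as ``the most delicate'' part.

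In fact the paper disposes of this case in one line, with no thresholding or case split. The trick is to massage the weight representation of $f$ before doing anything else: multiply every $u_i$ and $\theta$ by $2n$, and then replace every $u_i=0$ by $u_i=1$. Because the original weights are integers, one may take $\theta$ to be a non-integer and the original margins $|u\cdot x-\theta|$ are then bounded below by a positive constant; after scaling by $2n$ they are at least $n$, so adding at most $n$ in total from the new unit weights cannot flip the sign on any input, and the Boolean function is unchanged. The new representation is still $\eta$-reasonable (the threshold scaled with $\|u\|_1$, and $\|u\|_1$ only grew), still has integer weights bounded by $\poly(n,W)$, and now satisfies $|u_i|\ge 1$ for \emph{every} $i\in[n]$. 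One can therefore invoke Theorem~\ref{thm:ac-general-weights} with $k=n$ (this is exactly Corollary~\ref{thm:ac-small-weights} in the paper) to obtain $\kappa=O\bigl(n^{-1/3+\zeta}/\log n\bigr)$, comfortably below the $O(n^{-1/4}/\log n)$ you need for $\eps=n^{-1/8}$, uniformly over all possible support sizes.

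Your proposed thresholding route also has a gap as written: from $|a(i)|\le 1/\poly(n,W)$ for $i\notin T$ you can only conclude that indices with large $|a(i)|$ lie inside $T$, i.e.\ thresholding produces a \emph{subset} of $T$, not a superset. A relevant coordinate $i\in T$ can have arbitrarily small Shapley value $\tilde f(i)$ (hence arbitrarily small $a(i)$) and would be discarded; restricting the \mytextsf{Boosting-TTV} family to such a $T''$ could then omit variables that $f$ genuinely depends on. This might be repairable, but the scaling trick above makes the whole detour unnecessary.
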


\begin{proof}
Let $f(x)=\sign(u \cdot x - \theta)$ be as described
in the theorem statement.  We may assume that each $|u_i| \geq 1$
(by scaling all the $u_i$'s and $\theta$ by $2n$ and then
replacing any zero-weight $u_i$ with 1).
Next we observe that for such an affine form $u \cdot x - \theta$,
Theorem~\ref{thm:ac-general-weights} immediately yields the 
following corollary:

\begin{corollary} \label{thm:ac-small-weights}
Let $L(x) = \littlesum_{i=1}^n u_i x_i - \theta$ be a monotone increasing 
$\eta$-reasonable affine form. 
Suppose that $u_i  \geq r$ for all $i=1,\dots,n.$ Then for any $\zeta>0$,
we have
\[ \Pr_{x \sim \mu} \left[| L(x)| <  r \right]  = O\left( \frac{1}{\log n}
\cdot \frac{1}{n^{1/3- \zeta}} \cdot \left( \frac{1}{\zeta} + \frac{1}{\eta}
\right)\right) .\]
\end{corollary}

With this anti-concentration statement in hand, the proof of 
Theorem~\ref{thm:main-bounded} closely follows the proof of
Theorem~\ref{thm:main-arbitrary}.
The algorithm runs \mytextsf{Boosting-TTV} with
${\cal L}$, $a^*(i)$ and $\mu$ as before but now with $\xi$
set to $1/\poly(n,W).$  The LBF $h$ that \mytextsf{Boosting-TTV} outputs
satisfies $\dfour(f,h) \leq \rho \eqdef 1/\poly(n,W)$.
We apply Corollary~\ref{thm:ac-small-weights} to the affine form
$L(x) \eqdef {\frac {u} {\|u\|_1}} \cdot x - {\frac {\theta}{\|u\|_1}}$
and get that for $r=1/\poly(n,W)$, we have
\begin{equation}
\Pr_{x \sim \mu}[|L(x)| < r] \leq \kappa \eqdef \eps^2/(1024
\log n)
\label{eq:redux}
\end{equation}
where now $\eps \eqdef n^{-1/8}$,
in place of Equation~(\ref{eq:ac1}).
Applying Lemma~\ref{lem:anticonc-and-ell1} we get that
\[
\E_{x \sim \mu}[|f(x)-h(x)|] \leq {\frac {4 \|w\|_1 \sqrt{\rho}}
{r}} + 4 \kappa \leq \eps^2/(128 \log n)
\]
analogous to (\ref{eq:zzz}).
The rest of the analysis goes through exactly as before, and we get
that the LTF $h'(x) = \sign(v \cdot x - \theta)$
satisfies $\dshap(f,h') \leq \eps/2$ as desired.  The rest
of the argument is unchanged so we do not repeat it.
\end{proof}

\new{

\section{Conclusions and Future Work} \label{sec:concl}

The problem of designing a weighted voting game
that (exactly or approximately) achieves a desired set of Shapley values has received considerable
attention in the social choice literature, where several heuristics and exponential time algorithms
have been proposed. This work provides the first provably correct efficient approximation algorithm for this problem. 

An obvious open problem is to improve the dependence on the error parameter $\eps$ in the running time.
Since the running time of our algorithm is of the form $\alpha(\eps) \cdot n^c$ for a fixed universal constant $c$, 
the algorithm is an Efficient Polynomial Time Approximation Scheme (EPTAS).
Is there a Fully Polynomial Time Approximation Scheme (FPTAS), i.e., an algorithm with running time
$\poly(n, 1/\eps)$? 

It would also be interesting to characterize the complexity of the {\em exact} problem (i.e., that of designing a weighted voting game
that {\em exactly} achieves a given set of Shapley values, or deciding that no such game exists). We conjecture that the exact problem
is intractable, namely $\sharp P$-hard.

}

\medskip

\noindent {\bf Acknowledgement.}  \new{We would like to thank Edith Elkind for asking the question about Shapley values and for useful pointers
to the literature.} We thank Christos  Papadimitriou for \new{insightful} conversations.

\bibliography{allrefs}
\bibliographystyle{alpha}

\newpage

\appendix

\section*{Appendix}

\section{LTF representations with ``nice" weights}\label{app:niceweights}

\ignore{

}
In this section, we prove Theorem~\ref{thm:LParg}.  This theorem
essentially says that given any $\eta$-reasonable LTF, there is an 
equivalent representation of the LTF which is also $\eta$-reasonable 
and is such that 
the weights of the linear form (when arranged in decreasing order of 
magnitude) decrease somewhat ``smoothly.''  For
convenience we recall the exact statement of the theorem:

\medskip

\noindent {\bf Theorem~\ref{thm:LParg}.}
\emph{
Let $f: \bn \to \bits$ be an $\eta$-reasonable LTF and $k \in [2, n]$. 
There exists a representation of $f$ as $f(x) = \sgn(v_0+ \littlesum_{i=1}^n 
v_i x_i)$ such that (after reordering \new{coordinates} so that condition (i) below 
holds) we have:   (i) $|v_{i}| \geq |v_{i+1}|$, $i \in [n-1]$; 
(ii) $|v_0| \leq (1-\eta) \littlesum_{i=1}^n |v_i|$; 
and (iii) for all $i \in [0, k-1]$ we have $|v_{i}| \leq (2/\eta) \cdot \sqrt{n} \cdot k^{\frac{k}{2}} \cdot \sigma_k$,
where $\sigma_k \eqdef \sqrt{\littlesum_{j \geq k} v_j^2}$.
}

\medskip

\begin{proof}[{\bf \em Proof of Theorem~\ref{thm:LParg}}]
The proof proceeds  along similar lines as the proof of Lemma~5.1 from \cite{OS11:chow} (itself an adaptation of the argument of Muroga et.~al.~from \cite{MTT:61}) 
with some crucial modifications. 

Since $f$ is $\eta$-reasonable, there exists a representation as $f(x) = \sgn(w_0 + \littlesum_{i=1}^n w_i x_i)$ 
(where we assume w.l.o.g. that $|w_i| \geq |w_{i+1}|$ for all $i \in [n-1]$) such that $|w_0| \leq (1-\eta) \littlesum_{i=1}^n |w_i|$.
Of course, this representation may not satisfy condition (iii) of the theorem statement.
We proceed to construct the desired alternate representation as follows: First, we set $v_i = w_i$ for all $i \geq k$. We then set up a feasible 
linear program $\mathcal{LP}$ with variables $u_0, \ldots, u_{k-1}$ and argue that there exists a feasible solution to 
$\mathcal{LP}$ with the desired properties.

Let $h: \{ \pm 1 \}^{k-1} \to \R$ denote the affine form $h(x) = w_0 + \littlesum_{j=1}^{k-1} w_j x_j$.
We consider the following linear system $\mathcal{S}$ of $2^{k-1}$ equations in $k$ unknowns $u_0, \ldots, u_{k-1}$:
For each $x \in  \{ \pm 1 \}^{k-1} $ we include the equation
$$ u_0+ \littlesum_{i=1}^{k-1} u_i x_i  = h(x).$$
It is clear that the system $\mathcal{S}$ is satisfiable, since $(u_0, \ldots, u_{k-1}) = (w_0, \ldots, w_{k-1})$ is a solution.

We now relax the above linear system into the linear program $\mathcal{LP}$ (over the same variables) as follows: Let $C \eqdef  \sqrt{n} \sigma_k$.
Our linear program has the following constraints:
\begin{itemize}
\item For each $x \in  \{ \pm 1 \}^{k-1} $ we include the (in)equality:
\begin{equation} \label{eqn:sensible-lp}
u_0+ \littlesum_{i=1}^{k-1} u_i x_i \
     \begin{cases}
        \geq C &  \text{if $h(x) \geq C$,} \\
        = h(x) &  \text{if $|h(x)| < C$,} \\
        \leq -C & \text{if $h(x) \leq -C$.} \\
     \end{cases}
\end{equation}

\item For each $i \in [0, k-1]$, we add the constraints $\sgn(u_i) = \sgn(w_i)$. Since the $w_i$'s are known, these are linear constraints, 
i.e., constraints like $u_1 \leq 0$, $u_2 \geq 0$, etc.

\item We also add the constraints of the form $|u_i| \geq |u_{i+1}|$ for $1\le i \le k-2$ and also $|u_{k-1}| \geq |w_{k}|$. Note that these constraints
are equivalent to the linear constraints: $u_i \cdot \sgn(w_i) \geq u_{i+1} \cdot \sgn(w_{i+1})$ and $\sgn(w_{k-1}) \cdot u_{k-1} \geq |w_{k}|$.

\item We let $q = \lceil 1/\eta \rceil$ and $\eta' = 1/q$. Clearly, $\eta' \le \eta$. 
We now add the constraint $|u_0| \leq (1-\eta') \cdot \left(\littlesum_{j=1}^{k-1} |u_j| + \littlesum_{j=k}^n |w_j| \right)$.
Note that this is also a linear constraint over the variables $u_0, u_1, \ldots, u_{k-1}.$ Indeed, it can be equivalently written as:
$$  \sgn(w_0) \cdot u_0 - (1-\eta') \littlesum_{j=1}^{k-1} \sgn(w_j) \cdot  u_j  \leq (1-\eta') \littlesum_{j=k}^n |w_j|.$$ 
Note that the RHS is \new{strictly} bounded from above by $C$, \new{since $$ \littlesum_{j=k}^n |w_j| \le \sqrt{n-k+1} \cdot \sigma_k < \sqrt{n} \sigma_k,$$
\noindent where the first inequality is Cauchy-Schwarz and the second uses the fact that $k \ge 2$.}
\end{itemize}

We observe that the above linear program is feasible. Indeed, it is straightforward to verify that all the constraints are satisfied by the 
vector $(w_0, \ldots, w_{k-1}).$  In particular, the last constraint is satisfied because $|w_0| \le (1-\eta) \cdot \left(\littlesum_{j=1}^{k-1} |w_j| + \littlesum_{j=k}^n |w_j| \right)$ and hence \emph{a fortiori},   $|w_0| \le (1-\eta') \cdot \left(\littlesum_{j=1}^{k-1} |w_j| + \littlesum_{j=k}^n |w_j| \right)$.
\begin{myclaim}
Let $(v_0, \ldots, v_{k-1})$ be {\em any} feasible solution to $\mathcal{LP}$ and consider the LTF 
$$f'(x) = \sgn(v_0+ \littlesum_{j=1}^{k-1}v_{j}x_j +  \littlesum_{j=k}^{n} w_{j}x_j  ).$$ Then 
$f'(x) = f(x)$
for all $x \in \bn$.
\end{myclaim}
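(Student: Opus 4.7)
\medskip

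\noindent\textbf{Proof proposal.} The plan is to show that $f'(x) = f(x)$ pointwise by splitting the input coordinates into the ``heavy'' block $[1, k{-}1]$ (which the LP modifies) and the ``tail'' block $[k, n]$ (which is unchanged), and then showing that the tail contribution is strictly dominated by $C = \sqrt{n}\sigma_k$ whenever the LP allows $g$ to differ from $h$.

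First, I would introduce the notation $h(y) \eqdef w_0 + \littlesum_{j=1}^{k-1} w_j y_j$ and $g(y) \eqdef v_0 + \littlesum_{j=1}^{k-1} v_j y_j$ for $y \in \{\pm 1\}^{k-1}$, and let $T(z) \eqdef \littlesum_{j=k}^n w_j z_j$ for $z \in \{\pm 1\}^{n-k+1}$. Writing any $x \in \bn$ as the concatenation of $y$ and $z$, we then have
\[
f(x) = \sgn(h(y) + T(z)) \qquad \text{and} \qquad f'(x) = \sgn(g(y) + T(z)).
\]

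The key preliminary observation is that $|T(z)| < C$. This follows from Cauchy--Schwarz: $|T(z)| \leq \littlesum_{j=k}^n |w_j| \leq \sqrt{n-k+1} \cdot \sigma_k$, and since $k \geq 2$ by hypothesis, $\sqrt{n-k+1} < \sqrt{n}$, giving the strict bound $|T(z)| < \sqrt{n}\sigma_k = C$.

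Now I would do a case analysis on $h(y)$ according to the three cases of the LP constraint~(\ref{eqn:sensible-lp}). If $|h(y)| < C$, the LP enforces $g(y) = h(y)$, so $f(x) = f'(x)$ trivially. If $h(y) \geq C$, then also $g(y) \geq C$ by the LP; combined with $|T(z)| < C$ this yields both $h(y) + T(z) > 0$ and $g(y) + T(z) > 0$, so $f(x) = f'(x) = +1$. The case $h(y) \leq -C$ is symmetric and yields $f(x) = f'(x) = -1$. Since these three cases exhaust all $y \in \{\pm 1\}^{k-1}$, the claim follows.

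I do not anticipate a substantive obstacle here: the entire content of the claim is encoded directly in the way the LP was set up (specifically, the choice of threshold $C = \sqrt{n}\sigma_k$ in~(\ref{eqn:sensible-lp})), and the only nontrivial step is the Cauchy--Schwarz bound on $|T(z)|$, which is precisely the calculation already flagged in the paragraph introducing the last LP constraint. The remaining pieces are routine sign bookkeeping.
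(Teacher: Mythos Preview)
Your proposal is correct and follows essentially the same approach as the paper: split into the heavy block and the tail, bound the tail by Cauchy--Schwarz to get $|T(z)|<C$, and then do the three-way case analysis dictated by the LP constraint~(\ref{eqn:sensible-lp}). The only cosmetic difference is that the paper phrases the $|h(y)|\geq C$ case via ``$h$ and $h'$ have the same sign'' rather than your direct use of the one-sided inequalities $g(y)\geq C$ (resp.\ $\leq -C$), but the content is identical.
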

\begin{proof}
Given $x \in \bn$, we have
$$h(x) = h(x_1, \ldots, x_{k-1}) = w_0 + \littlesum_{j=1}^{k-1} w_jx_j;$$ 
Let us also define $$h'(x) = h'(x_1, \ldots, x_{k-1}) = v_0 + \littlesum_{j=1}^{k-1} v_jx_j$$
$$t(x) = \littlesum_{j \geq k} w_j x_j$$
Then, we have $f(x) = \sgn \left( h(x)+t(x) \right)$ and $f'(x) = \sgn \left( h'(x)+t(x) \right)$. Now, if $x \in \bn$ is an input such that $|h(x)| < C$, then we have
$h'(x) = h(x)$ by construction, and hence $f(x) = f'(x)$. If $x \in \bn$ is such that $|h(x)| \geq C$, then by construction we also have that $|h'(x)| \geq C$.
Also, note that $h(x)$ and $h'(x)$ always have the same sign. Hence, in order for $f$ and $f'$ to disagree on $x$, it must be the case that $|t(x)| \geq C$.
But this is not possible, since $|t(x)| \leq \littlesum_{j=k}^n |w_j| \le  \new{\sqrt{n-1} \cdot} \sigma_k < C.$ This completes the proof of the claim.
\end{proof}

We are almost done, except that we need to choose a 
solution $(v_0, \ldots, v_{k-1})$ to $\mathcal{LP}$ satisfying 
property (iii) in the statement of the theorem. The next claim ensures that 
this can always be achieved. 
\begin{myclaim}\label{clm:bounds}
There is a feasible solution  $v = (v_0, \ldots, v_{k-1})$ to the $\mathcal{LP}$ which satisfies property (iii) in the statement of the theorem.
\end{myclaim}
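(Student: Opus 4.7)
The plan is to take any basic feasible solution (vertex) $v^{\ast} = (v_{0}, \ldots, v_{k-1})$ of $\mathcal{LP}$ and use Cramer's rule, together with Hadamard's inequality and an integer-determinant argument, to bound each coordinate $|v_{i}^{\ast}|$. Such a vertex exists because $\mathcal{LP}$ is feasible (witnessed by $(w_{0}, \ldots, w_{k-1})$) and its feasible region contains no affine line: the sign constraints $u_{i} \cdot \sgn(w_{i}) \geq 0$ (adopting the convention $\sgn(0) := +1$ so that the constraint remains a halfspace even when some $w_{i}$ vanishes) pin each coordinate to a halfline, so any direction $d$ with $v^{\ast} + td$ feasible for all $t \in \R$ must have $d = 0$. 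Standard LP theory then guarantees a vertex, at which $k$ linearly independent constraints are tight. Collecting their coefficient vectors as the rows of $A \in \R^{k\times k}$ and their right-hand sides in $b \in \R^{k}$ gives an invertible system $A v^{\ast} = b$, so by Cramer's rule $v_{i}^{\ast} = \det(A_{i})/\det(A)$ where $A_{i}$ is $A$ with its $i$-th column replaced by $b$.

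Next I would inspect the structure of $A$ and $b$. Every coefficient appearing in any constraint of $\mathcal{LP}$ lies in $\{0, \pm 1, \pm(1-\eta')\}$, so each column of $A$ has Euclidean norm at most $\sqrt{k}$. Every right-hand side is bounded in magnitude by $C \eqdef \sqrt{n}\,\sigma_{k}$: the ``sensible-LP'' constraints contribute either $\pm C$ or $h(x)$ with $|h(x)| < C$; the sign, ordering and ``$|u_{k-1}| \geq |w_{k}|$'' constraints contribute $0$ or $|w_{k}| \leq \sigma_{k} \leq C$; and the ``big'' constraint contributes $(1-\eta')\sum_{j\geq k}|w_{j}| \leq \sqrt{n-k+1}\,\sigma_{k} \leq C$ by Cauchy--Schwarz (as already noted in the proof). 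Hadamard's inequality applied to $A_{i}$ (with the replaced column having norm $\|b\|_{2} \leq \sqrt{k}\,C$ and every other column of norm at most $\sqrt{k}$) then yields
\[
|\det(A_{i})| \;\leq\; \|b\|_{2} \cdot \prod_{j \neq i} \|a_{j}\|_{2} \;\leq\; \sqrt{k}\,C \cdot (\sqrt{k})^{k-1} \;=\; k^{k/2} \cdot \sqrt{n}\,\sigma_{k}.
\]

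For the denominator I would use integrality of the determinant after a single row rescaling. If none of the $k$ tight constraints is the big constraint, then $A$ has entries in $\{0, \pm 1\}$, so $\det(A)$ is a nonzero integer and $|\det(A)| \geq 1$. Otherwise, exactly one row of $A$ has some $\pm(1-\eta')$ entries; multiplying that row by $q \eqdef 1/\eta' = \lceil 1/\eta \rceil$ converts $A$ into an integer matrix whose determinant is $q \cdot \det(A)$, which is a nonzero integer, so $|\det(A)| \geq 1/q = \eta' \geq \eta/(1+\eta) \geq \eta/2$. In either case $|\det(A)| \geq \eta/2$, and Cramer's rule delivers
\[
|v_{i}^{\ast}| \;=\; \frac{|\det(A_{i})|}{|\det(A)|} \;\leq\; \frac{2}{\eta}\cdot \sqrt{n} \cdot k^{k/2}\cdot \sigma_{k},
\]
for every $i \in [0, k-1]$, which is property (iii).

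The main obstacle is really the vertex-existence step: one needs to verify that the feasible region of $\mathcal{LP}$ does not contain an affine line, and this requires some care in edge cases where $w_{i} = 0$ for some $i$ (resolved by the $\sgn(0) := +1$ convention above, or by preprocessing to perturb/delete zero-weight coordinates). Once vertex existence is in hand, the remaining Hadamard plus integer-determinant argument is clean and quantitatively tight, giving exactly the $(2/\eta)\cdot\sqrt{n}\cdot k^{k/2}\cdot\sigma_{k}$ bound called for in property (iii).
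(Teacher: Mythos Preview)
Your proof is correct and follows essentially the same approach as the paper: pick a vertex of $\mathcal{LP}$, express its coordinates via Cramer's rule, upper-bound the numerator by Hadamard's inequality, and lower-bound $|\det(A)|$ by an integrality argument (the paper expands $\det(A)$ along the single ``special'' row to get $a(1-\eta')+b$ with $a,b\in\Z$, while you equivalently rescale that row by $q=1/\eta'$). The only cosmetic difference is that the paper obtains the determining system by invoking a fact from \cite{MTT:61} (choose a feasible point maximizing the number of tight constraints), whereas you phrase the same step in standard LP language as the existence of a basic feasible solution.
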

\begin{proof}
We select a feasible solution $v = (v_0, \ldots, v_{k-1})$ to the $\mathcal{LP}$ that maximizes the number of {\em tight} inequalities (i.e., satisfied with equality).
If more than one feasible solutions satisfy this property, we choose one 
arbitrarily. We require the following fact from \cite{MTT:61} (a 
proof can be found in  \cite{Hastad:94,DiakonikolasServedio:09short}).
\begin{fact}
There exists a linear system $A \cdot v = b$ that uniquely specifies the vector $v$. The rows of $(A, b)$ correspond to rows of the constraint matrix of $\mathcal{LP}$
and the corresponding RHS respectively.
\end{fact}

\medskip

At this point, we use Cramer's rule to complete the argument. In particular, note that $v_i = \det(A_i)/\det (A)$ where $A_i$ is the matrix obtained by replacing the $i$-th column of $A$ by $b$. In particular, we want to give an upper bound on the magnitude of $v_i$; we do this by showing a lower bound on $|\det (A)|$ and an upper bound on $|\det(A_i)|$. 

We start by showing that $|\det(A)| \ge  \eta'$.  First, since $A$ is invertible, $\det(A) \neq 0$. Now, note that all rows of $A$ have entries in $\{-1,0, 1\}$ except potentially one ``special'' row which has entries from the set $\{\pm 1, \pm (1-\eta') \}$.  If the special row does not appear, it is clear that $|\det(A)| \geq 1$, since it is not zero and the entries of $A$ are all integers. If, on the other hand, the special row appears, simply expanding $\det (A)$ along that row gives that $\det (A) = a \cdot (1- \eta') + b$ where $a, b \in \Z$. 
As $\eta' = 1/q$ for some $q \in Z$ and $\det(A) \neq 0$, we deduce that $|\det(A)| \ge \eta'$, as desired. 

We bound  $|\det(A_i)|$ from above by recalling the following fact. 
\begin{fact}(Hadamard's inequality) 
If $A \in \R^{n \times n}$ and $v_1, \ldots, v_n \in \R^n$  are the columns of $A$, then $|\det(A)| \le \prod_{j=1}^n \Vert v_j \Vert_2$. 
\end{fact}
Now, observe that for all $i$, the $i$-th column of $A_i$ (i.e., vector $b$) has all its entries bounded by $C$, hence  $\Vert v_i \Vert_2 \le C \sqrt{k}$. All other columns have entries bounded from above by $1$ and thus for $j \neq i$, $\Vert v_j \Vert_2 \le \sqrt{k}$.  Therefore, $\det(A_i) \le C \cdot k^{k/2}$. 
Thus, we conclude that $|v_i| \le (C \cdot k^{k/2})/\eta'.$ Further, as $(1/\eta')  = \lceil (1/\eta) \rceil \le (2/\eta)$, we get $|v_i| \le 2C \cdot k^{k/2}/\eta$, completing the proof of the claim. 
\end{proof}
The proof of Theorem~\ref{thm:LParg} is now complete.
\end{proof}

\section{Estimating correlations and Shapley values}
\label{sec:estimate}

Our algorithms need to estimate expectations of the form
$f^*(i)=\E_{x \sim \mu}[f(x)x_i]$ and to estimate Shapley values 
$\tilde{f}(i)$, where
$f: \{-1,1\}^n \to [-1,1]$ is an explicitly given function (an LBF). This is 
quite straightforward using standard techniques (see e.g. \cite{BMRPRS10}) 
but for completeness we briefly state and prove the estimation guarantees 
that we will need.

\medskip
 
\noindent {\bf Estimating correlations with variables.}  We will use
the following:

\begin{proposition}\label{obs:corr}
There is a procedure \mytextsf{Estimate-Correlation} with the following 
properties:  The procedure is given oracle access to a function 
$f : \{-1,1\}^n \rightarrow [-1,1]$, a desired accuracy parameter $\gamma$, 
and a desired failure probability $\delta$.  The procedure makes 
$O(n \log (n/\delta) /\gamma^2)$ oracle calls to $f$ and runs in 
time $O(n^2  \log (n/\delta) /\gamma^2)$ (counting
each oracle call to $f$ as taking one time step).
With probability $1-\delta$ it outputs a list of numbers ${a}^*(0),a^*(1),
\ldots, {a}^*(n)$ such that $|a^*(j) - f^*(j)| \le \gamma/\sqrt{n+1}$ for 
all $j=0,\dots,n.$ (Recall that $f^*(j)$ equals $\E_{x \sim \mu}[f(x)x_j]$, 
where $x_0 \equiv 1$).
\end{proposition}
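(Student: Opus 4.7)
The plan is to use the most natural Monte Carlo scheme: draw $m$ i.i.d.\ samples $x^{(1)},\ldots,x^{(m)} \sim \mu$ and, for each $j \in \{0,1,\ldots,n\}$, output the empirical mean
\[
a^*(j) \eqdef \frac{1}{m}\sum_{t=1}^m f(x^{(t)}) \cdot x^{(t)}_j,
\]
with the convention $x_0 \equiv 1$. The first thing to check is that $\mu$ can be sampled efficiently. From the two-stage description of $\mu$ given in Section~\ref{sec:reformulation} (and reiterated in Section~\ref{sec:str}), it suffices to: (a) precompute $\Lambda(n)$ and the $n-1$ probabilities $Q(n,k)/\Lambda(n)$ for $k=1,\dots,n-1$, and sample $k$ from this distribution by inverse CDF in $O(n)$ time; (b) sample a uniformly random permutation $\pi$ of $[n]$ in $O(n)$ time and set the first $k$ positions of $\pi$ to $+1$ and the rest to $-1$. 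Thus each sample from $\mu$ costs $O(n)$ time.

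Next I would apply a Chernoff/Hoeffding bound (Theorem~\ref{thm:chernoff}) to each coordinate $j$ separately. Since $|f(x) x_j| \le 1$, taking $a=1$ and $\gamma' \eqdef \gamma/\sqrt{n+1}$ in the Chernoff bound yields
\[
\Pr\bigl[|a^*(j) - f^*(j)| > \gamma'\bigr] \le 2\exp\!\bigl(-(\gamma')^2 m /2\bigr).
\]
Choosing $m = \Theta\bigl(\log((n+1)/\delta)/(\gamma')^2\bigr) = \Theta\bigl(n \log(n/\delta)/\gamma^2\bigr)$ drives the per-coordinate failure probability below $\delta/(n+1)$. A union bound over the $n+1$ coordinates then gives overall failure probability at most $\delta$, which is the required guarantee.

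Finally, I would account for the running time. With $m = O(n\log(n/\delta)/\gamma^2)$ samples, the algorithm makes exactly $m$ oracle queries to $f$ (one per sample). For each sample, after obtaining $f(x^{(t)})$ we update the $n+1$ running sums $\sum_t f(x^{(t)}) x^{(t)}_j$ in $O(n)$ time, and generating the sample itself also costs $O(n)$. So the total time is $O(n \cdot m) = O(n^2 \log(n/\delta)/\gamma^2)$, as claimed. There is no substantive ``hard part'' here: the only mildly nontrivial point is correctly setting the per-coordinate accuracy to $\gamma/\sqrt{n+1}$ and spending a union bound on the $n+1$ coordinates, which is exactly what makes the sample complexity scale with $n$ rather than $1$.
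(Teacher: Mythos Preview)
Your proposal is correct and follows essentially the same approach as the paper: empirical estimation via $m = O(n\log(n/\delta)/\gamma^2)$ i.i.d.\ samples from $\mu$, a Chernoff bound per coordinate with target accuracy $\gamma/\sqrt{n+1}$, and a union bound over the $n+1$ coordinates. In fact you supply more detail than the paper does (the samplability of $\mu$ and the running-time accounting), but the argument is identical.
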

\begin{proof}
The procedure works simply by empirically estimating
all the values $f^*(j)=\E_{x \sim \mu}[f(x)x_j]$, $j=0,\dots,n,$ 
using a single sample of
$m$ independent draws from $\mu$.  Since the random variable 
$(f(x)x_j))_{x \sim \mu}$ is bounded by 1 in absolute value, a 
straightforward Chernoff bound gives that for $m=O(n \log(n/\delta)/\gamma^2),$
each estimate $a^*(j)$ of $f^*(j)$ is accurate to within
an additive $\pm \gamma/\sqrt{n+1}$ with failure probability
at most $\delta/(n+1)$.  A union bound over $j=0,\dots,n$ finishes
the argument.
\end{proof}

\noindent {\bf Estimating Shapley values.}  This is equally straightforward:

\begin{proposition}\label{obs:shapley}
There is a procedure \mytextsf{Estimate-Shapley} with the following properties:
The procedure is given oracle access to a function 
$f : \{-1,1\}^n \rightarrow [-1,1]$, a desired accuracy parameter $\gamma$, 
and a desired failure probability $\delta$.  The procedure makes 
$O(n \log (n/\delta) /\gamma^2)$ oracle calls to $f$ and runs in 
time $O(n^2  \log (n/\delta) /\gamma^2)$ (counting
each oracle call to $f$ as taking one time step).
With probability $1-\delta$ it outputs a list of numbers $\tilde{a}(1),
\ldots, \tilde{a}(n)$ such that $\dshap(a,f) \leq \gamma.$
\end{proposition}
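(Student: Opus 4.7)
The plan is to estimate each Shapley value by an empirical average over a single batch of random permutations, amortizing the oracle cost across all $n$ coordinates via a chain structure. Starting from Definition~\ref{def:shapley}, which writes $\tilde{f}(i) = \E_{\pi \sim_R \ms}[Y_i(\pi)]$ with $Y_i(\pi) := f(x^+(\pi,i)) - f(x(\pi,i)) \in [-2,2]$, the natural estimator is $\tilde{a}(i) := \frac{1}{m}\sum_{j=1}^m Y_i(\pi_j)$ for $m$ i.i.d.\ uniform permutations $\pi_1, \ldots, \pi_m \in \ms$.

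The key efficiency observation is that a single permutation $\pi$ yields one sample of $Y_i(\pi)$ for \emph{every} $i \in [n]$ using only $n+1$ evaluations of $f$. Indeed, the $2n$ strings $\{x(\pi,i), x^+(\pi,i)\}_{i \in [n]}$ all lie on the monotone chain $\mathbf{-1} = z_0, z_1, \ldots, z_n = \mathbf{1}$, where $z_k$ has $+1$'s exactly at coordinates $\pi^{-1}(1), \ldots, \pi^{-1}(k)$, and one can verify that $Y_i(\pi) = f(z_{\pi(i)}) - f(z_{\pi(i)-1})$. So evaluating $f$ on the chain and taking consecutive differences recovers all $n$ samples; updating the $n$ running estimators then costs $O(n)$ additional processing per permutation.

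To control the error, I would apply the Chernoff bound of Theorem~\ref{thm:chernoff} coordinatewise with $a = 2$, obtaining $\Pr[|\tilde{a}(i) - \tilde{f}(i)| \geq \gamma/\sqrt{n}] \leq 2 \exp(-m\gamma^2/(16 n))$ for each fixed $i$. Choosing $m = \Theta((n/\gamma^2) \log(n/\delta))$ makes this bound at most $\delta/n$, so a union bound over $i \in [n]$ ensures that $|\tilde{a}(i) - \tilde{f}(i)| \leq \gamma/\sqrt{n}$ holds simultaneously for all $i$ with probability at least $1 - \delta$. Summing squared errors then yields $\dshap(a,f)^2 \leq n \cdot \gamma^2/n = \gamma^2$, as required.

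The resource bookkeeping is the only subtlety: $m$ permutations cost $(n+1)m$ oracle calls and $O(nm)$ processing time, which gives the claimed time bound of $O(n^2 \log(n/\delta)/\gamma^2)$. There is no conceptual obstacle; the one mildly substantive point is the chain observation, which is exactly what allows a single random permutation to feed all $n$ per-coordinate estimators at once rather than requiring $n$ independent sampling procedures.
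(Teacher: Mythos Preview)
Your proposal is correct and follows essentially the same approach as the paper: sample $m = \Theta(n \log(n/\delta)/\gamma^2)$ uniform permutations, use each one to produce a sample of $Y_i(\pi)$ for every $i$, apply Chernoff coordinatewise with target accuracy $\gamma/\sqrt{n}$, and union-bound over $i$. Your chain observation (that the $2n$ strings $\{x(\pi,i), x^+(\pi,i)\}$ are just the $n+1$ points of a single monotone chain) is a mild refinement over the paper's proof, which simply calls the oracle twice per coordinate; this saves a constant factor in oracle calls but does not change the asymptotics.
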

\begin{proof}
The procedure empirically estimates each $\tilde{f}(j)$, $j=1,\dots,n$,
to additive accuracy $\gamma/\sqrt{n}$ using Equation~(\ref{eq:shapley}).
This is done by generating a uniform random $\pi \sim \ms$ and then,
for each $i=1,\dots,n,$
constructing the two inputs $x^+(\pi,i)$ and $x(\pi,i)$ and calling
the oracle for $f$ twice to compute $f(x^+(\pi,i))-f(x(\pi,i)).$
Since $|f(x^+(\pi,i))-f(x(\pi,i))| \leq 2$ always, a
sample of $m=O(n \log(n/\delta)/\gamma^2)$ permutations suffices
to estimate all the $\tilde{f}(i)$ values to additive
accuracy $\pm \gamma/\sqrt{n}$ with total failure probability
at most $\delta$.  If each estimate $\tilde{a}(i)$ is additively
accurate to within $\pm \gamma/\sqrt{n}$, then $\dshap(a,f)
\leq \gamma$ as desired.
\end{proof}



\end{document}